\documentclass[aps,pra,reprint,amsmath,amssymb,longbibliography,superscriptaddress,noeprint]{revtex4-2}

\usepackage{bm}
\usepackage{amsmath,amssymb,amsfonts,amsthm,mathtools}
\usepackage{booktabs}
\usepackage{array}
\usepackage{xcolor}
\usepackage{microtype}
\usepackage{comment}
\usepackage{braket}
\usepackage[colorlinks=true,citecolor=blue!55!black,linkcolor=blue!55!black,urlcolor=blue!55!black]{hyperref}

\newtheorem{theorem}{Theorem}

\newtheorem{lemma}[theorem]{Lemma}
\newtheorem{corollary}[theorem]{Corollary}
\newtheorem{definition}[theorem]{Definition}

\newcommand{\cH}{\mathcal{H}}

\newcommand{\cD}{\mathcal{D}}
\newcommand{\cV}{\mathcal{V}}
\newcommand{\Tr}{\operatorname{Tr}}
\newcommand{\id}{\mathbb{I}}
\newcommand{\proj}[1]{\ket{#1} \bra{#1}}

\newcommand{\norm}[1]{\left\lVert #1\right\rVert}
\newcommand{\comm}[2]{\left[#1,#2\right]}

\newcommand{\poly}{\operatorname{poly}}

\begin{document}

\title{Quantum Complexity of Ancilla-Free Unitary Embeddings for Nonlinear Dynamics \\
via Generalized State-Dependent Double-Bracket Flows}

\author{Yuki Ito}
\email{yuki.itoh.osaka@gmail.com}
\affiliation{Graduate School of Engineering Science, The University of Osaka, 1-3 Machikaneyama, Toyonaka, Osaka 560-8531, Japan}

\author{Hideaki Hakoshima}
\affiliation{Center for Quantum Information and Quantum Biology, The University of Osaka, 1-2 Machikaneyama, Toyonaka, Osaka 560-0043, Japan}

\author{Keisuke Fujii}
\email{fujii.keisuke.es@osaka-u.ac.jp}
\affiliation{Graduate School of Engineering Science, The University of Osaka, 1-3 Machikaneyama, Toyonaka, Osaka 560-8531, Japan}
\affiliation{Center for Quantum Information and Quantum Biology, The University of Osaka, 1-2 Machikaneyama, Toyonaka, Osaka 560-0043, Japan}
\affiliation{RIKEN Center for Quantum Computing (RQC), Hirosawa 2-1, Wako, Saitama 351-0198, Japan}
\affiliation{Graduate School of Informatics, Kyoto University, Sakyo-ku, Kyoto, 606-8501, Japan}
\date{\today}

\begin{abstract}
Simulating nonlinear dynamics with quantum computers has gained increasing attention.
In general, such simulations require additional quantum resources because unitary quantum evolution is linear.
A fundamental question is how nonlinear dynamics can be embedded into fully coherent, ancilla-free unitary circuits and how the complexity of the dynamics governs the required quantum resources.
In this work, we generalize the ancilla-free double-bracket quantum algorithm for imaginary-time evolution by replacing its state-independent Hamiltonian with a state-dependent Hermitian operator.
Our framework recursively calls an initial state preparation oracle and its inverse, and prepares the target solution to any prescribed accuracy using a fully coherent, ancilla-free unitary embedding.
We relate the query cost to the complexity of the nonlinear dynamics, specifically their sensitivity to initial conditions.
We obtain query upper bounds of $\exp(O(T))$, $\exp(O(T^2))$, and $\exp(\exp(O(T)))$ when the distance between solutions contracts at least exponentially (contractive), does not increase (nonexpansive), or grows at most exponentially (expansive), respectively, where $T$ is the target evolution time.
For the discrete Gross--Pitaevskii equation, our ancilla-free double-bracket circuit achieves optimal worst-case query complexity $\Theta(e^{gT/2})$ over a specified family of single-qubit initial states, where $g>0$ is the nonlinearity strength.
These results connect the complexity of nonlinear dynamics to the query cost of coherent quantum simulation and provide a foundation for designing ancilla-free unitary embeddings with optimal query complexity.
\end{abstract}

\maketitle

\section{Introduction}
Differential equations are widely used to model various phenomena in
diverse fields, including
electromagnetics~\cite{Baumjohann1996SpacePlasma},
fluid dynamics~\cite{Burgers1948Turbulence},
ecology~\cite{Wangersky1978}, and
economics~\cite{black_pricing_1973, ankudinova_numerical_2008}.
Many differential equations of practical importance are nonlinear and
cannot be solved analytically, and numerical simulation is therefore
essential for investigating their behavior.
Simulating large-scale nonlinear systems at high accuracy can, however,
require considerable computational resources.

Quantum computers offer new computational capabilities, as exemplified by
quantum algorithms for prime factorization~\cite{shor_algorithms_1994},
matrix inversion~\cite{harrow_quantum_2009, childs_quantum_2017, costa_optimal_2022}, and
Hamiltonian simulation~\cite{feynman1982simulating,lloyd_universal_1996,LowChuang2017,LowChuang2019,Gilyen2019}.
These developments have motivated quantum approaches to differential
equations, including nonlinear ones~\cite{leyton_quantum_2008,
berry_high-order_2014,
montanaro_quantum_2016,
berry_quantum_2017,
childs_quantum_2020,
lloyd_quantum_2020,
fang_time-marching_2023,
an_quantum_2025,
berry_quantum_2024,
jin_quantum_2024,
an_linear_2023,
an_quantum_2026,
low_quantum_2026,
shang_design_2025,
jin_schrodingerization_2025,
low_optimal_2025,
Liu2021, Krovi2023}.
Because isolated quantum systems obey the linear Schr\"odinger equation,
however, nonlinear dynamics cannot in general be represented by
direct Hamiltonian evolution of the solution.
Instead, the nonlinearity must be embedded into a larger linear quantum
dynamics or reproduced through state-dependent coherent operations.
Such embeddings can incur additional costs in qubit count, oracle queries,
or circuit depth.
A representative approach is Carleman linearization, which embeds nonlinear
dynamics into an enlarged linear system by introducing higher-order tensor
powers of the solution~\cite{Liu2021,Krovi2023,LiuEtAl2023ReactionDiffusion,SuranaGnanasekaranSahai2023,BrustleWiebe2025,CostaEtAl2025Carleman,WuWangLi2025Carleman,Jennings2025Carleman,endo_divergence-free_2026,WangEtAl2026PivotCarleman}.
This raises the question of whether nonlinear dynamics can
instead be simulated in a fully coherent embedding without
such tensor powers or ancilla qubits.
A related question is what quantum resources such a simulation
would require.

There exists a fully coherent, ancilla-free method for simulating
certain nonlinear dynamics arising from the normalization of
linear evolution.
Specifically,
double-bracket quantum imaginary-time evolution (DB-QITE) 
coherently simulates the normalized imaginary-time evolution
$\ket{\psi(t)}=e^{-Ht}\ket{\psi(0)}/\norm{e^{-Ht}\ket{\psi(0)}}_2$
without ancilla qubits, where $H$ is a time- and state-independent Hamiltonian
and $\norm{\cdot}_2$ denotes the Euclidean norm~\cite{Gluza2026DBQITE}.
This evolution is
described by the well-studied double-bracket
flow~\cite{Bloch1990SteepestDescent,Moore1994NumericalGradient,Brockett1991DynamicalSystems,%
Smith1993GeometricOptimization,Helmke1994Optimization,Bloch1985Estimation,%
Bloch1992IntegrableGradientFlows,Brockett1989LeastSquares,Bloch1985LineFitting,%
Deift1983Eigenvalue,Chu1988Continuous,Wegner1994FlowEquations,%
Wegner2006Survey,Hastings2022DoubleBracket,Glazek1993Renormalization,%
Glazek1994Perturbative,Kehrein2006FlowEquation,Brockett1989SmoothSystems}
\begin{equation}
    \frac{d}{dt}\rho(t)
    =\comm{\comm{\rho(t)}{H}}{\rho(t)},
    \label{eq:fixed-db}
\end{equation}
where $\rho(t)=\proj{\psi(t)}$ is the pure-state projector.
The nonlinearity in this evolution arises solely from normalization.
A natural question is whether inherently nonlinear dynamics can be simulated using a fully coherent, ancilla-free approach analogous to DB-QITE.

In this work,
we develop a fully coherent, ancilla-free unitary embedding framework for nonlinear dynamics based on state-dependent double-bracket flows.
Specifically,
we generalize DB-QITE by replacing 
the state-independent Hamiltonian $H$ in
Eq.~\eqref{eq:fixed-db} with a state-dependent Hermitian operator $G(\rho)$
and consider
\begin{equation}
    \frac{d}{dt}\rho(t)
    =\comm{\comm{\rho(t)}{G(\rho(t))}}{\rho(t)}.
    \label{eq:state-dependent-db}
\end{equation}
Under suitable boundedness and Lipschitz continuity conditions on $G(\rho)$,
together with access to an ancilla-free unitary approximating
$e^{iG(\rho)\theta}$ for $\theta\in\mathbb{R}$,
we construct a recursive unitary implementation.
Starting from an initial state preparation oracle and its inverse, 
we approximate the evolution up to time $T$
by successively applying one-step updates of duration $\tau$.
At each step, 
we recursively construct the next state preparation circuit
by coherently calling the current state preparation circuit and its inverse as subroutines.
This framework can apply to a nonlinear Schr\"odinger equation (NLSE), 
including the discrete Gross--Pitaevskii (GP) equation.

The central quantity in our complexity analysis is a trace-norm Lipschitz
bound $\Lambda(t)$, which bounds the trace-norm distance between two solutions at
time $t$ relative to their initial distance.
It quantifies whether the distance between solutions contracts
at least exponentially $(\Lambda(t)=e^{-\lambda t}, \lambda > 0)$,
does not increase $(\Lambda(t)=1)$,
or can grow at most exponentially $(\Lambda(t)=e^{\lambda t}, \lambda>0)$.
With fixed model parameters and
a constant number of subroutine calls per update,
we obtain upper bounds of $\exp(O(T/\epsilon))$,
$\exp(O(T^2/\epsilon))$, and $\exp(e^{O(T)}/\epsilon)$
on the total number of calls to the initial state preparation
oracle and its inverse
when $\Lambda(t)=e^{-\lambda t}, 1$, and $e^{\lambda t}$,
respectively, for $T\ge 1$ and target accuracy $0<\epsilon\le 1$.
For the first two forms of $\Lambda(t)$, 
we obtain tighter query bounds at fixed accuracy than 
the general doubly exponential bounds in $T$ obtained 
by adapting earlier DB-QITE constructions and error-accumulation analyses~\cite{Gluza2026DBQITE,Wright_2026_DBTFD}
to the present state-dependent problem.
Thus, even within the same ancilla-free embedding framework, 
the query upper bounds can change qualitatively 
depending on 
the stability properties of the nonlinear dynamics, as characterized by the trace-norm Lipschitz bound $\Lambda(t)$.

To examine how these stability-dependent query upper bounds compare with the optimal worst-case query complexity, we consider a single-qubit instance of the discrete GP equation with a specified initial state family under suitable oracle-access assumptions.
In this instance, we first obtain a worst-case initial state preparation oracle
query lower bound of $\Omega(e^{gT/2})$ at fixed $0<\epsilon<1$, where $g>0$
is the nonlinearity strength.
For the query upper bound, 
the proposed framework yields 
a bound of $\exp(\exp(O(T)))$ based on the stability
properties of the dynamics, at fixed $g$ and $\epsilon$.
By incorporating additional information about the solution trajectories,
we improve this bound to $\exp(O(T^2))$,
uniformly over the specified initial state family.
However, 
this upper bound does not establish optimality. To achieve the optimal query complexity in a fully coherent, ancilla-free setting,
we further construct a
recursive circuit tailored to the exact solutions, 
which is related to double-bracket one-step
propagators.
Its worst-case query upper bound coincides with the lower bound up to a constant
factor, establishing the optimal worst-case query complexity
$\Theta(e^{gT/2})$ at fixed $0<\epsilon<1$ for sufficiently large $gT$.
While previous work~\cite{BrustleWiebe2025} does not
establish that its algorithm attains the lower bound
up to a constant factor,
our recursive construction achieves 
the optimal query complexity
for the specified initial state family, even under the ancilla-free constraint.

Taken together, our results suggest that trace-norm stability
can serve as a complexity parameter for coherent nonlinear simulation
beyond normalized imaginary-time evolution.
They also lay the groundwork for designing ancilla-free unitary embeddings that achieve optimal query complexity.

The remainder of this paper is organized as follows.
Section~\ref{sec:proposed-framework} introduces the state-dependent
double-bracket dynamics and their general recursive ancilla-free implementation.
Section~\ref{sec:one-step-error-bounds} first derives one-step
error bounds for the general implementation and then presents
the NLSE-specific implementation together with its one-step
error bounds.
Section~\ref{sec:global-analysis} uses a trace-norm Lipschitz bound to derive
global error bounds and initial state preparation oracle query
upper bounds.
Section~\ref{sec:application} applies the framework to the discrete
GP equation and analyzes the query complexity for the specified single-qubit initial state family.
Section~\ref{sec:conclusion} summarizes the results and discusses
future directions.

\section{Nonlinear Double-Bracket Dynamics and General Ancilla-Free Implementation}
\label{sec:proposed-framework}
In this section,
we introduce the class of nonlinear differential equations considered in this work
and present a general framework for their ancilla-free implementation.

\subsection{Target Nonlinear Dynamics}
We specify the target nonlinear dynamics. 
Let $\cH$ be an $N$-dimensional Hilbert space and
$\mathcal{S}_{\rm pure} \coloneqq
\left\{
    \rho=\proj{\psi}
    \mid
    \ket{\psi}\in\cH,
    \ \braket{\psi|\psi}=1
\right\}.
$
Motivated by the DB-QITE dynamics in Eq.~\eqref{eq:fixed-db},
we replace the state-independent Hamiltonian $H$ with a
state-dependent Hermitian operator $G(\rho)$.
For an initial state $\rho(0)\in\mathcal{S}_{\rm pure}$, 
we consider the nonlinear double-bracket dynamics given by Eq.~\eqref{eq:state-dependent-db}.
As shown in Appendix~\ref{appendix:purity-preservation},
Eq.~\eqref{eq:state-dependent-db} preserves all eigenvalues of $\rho(t)$.
Consequently, $\rho(t)$ remains in $\mathcal{S}_{\rm pure}$ throughout
the evolution whenever $\rho(0)\in\mathcal{S}_{\rm pure}$.
We assume that there exist finite constants
$M_G,L_G\ge0$ such that, for all
$\rho,\sigma\in\mathcal{S}_{\rm pure}$,
\begin{equation}
    \norm{G(\rho)}_{\rm op}
    \le
    M_G,
    \label{eq:G-sup-on-pure}
\end{equation}
and
\begin{equation}
    \norm{G(\rho)-G(\sigma)}_{\rm op}
    \le
    L_G\norm{\rho-\sigma}_1,
    \label{eq:G-Lipschitz-bound}
\end{equation}
where $\norm{\cdot}_{\rm op}$ and $\norm{\cdot}_1$
denote the operator norm induced by the Euclidean norm
and the trace norm, respectively.
For any linear operator $X$ on $\cH$, the trace norm
is defined by $\norm{X}_1 \coloneqq \Tr\sqrt{X^\dagger X}$.
Eqs.~\eqref{eq:G-sup-on-pure} and \eqref{eq:G-Lipschitz-bound} represent boundedness and Lipschitz continuity conditions on $G(\rho)$, respectively.
We write $\varphi_t(\rho)$ for the exact solution of Eq.~\eqref{eq:state-dependent-db} at time $t$ with initial state $\rho$.
Lemma~\ref{lem:generic-exponential-solution-bound} establishes an upper bound
that allows at most exponential expansion of the trace-norm distance
between two solutions.

\begin{lemma}
\label{lem:generic-exponential-solution-bound}
Under Eqs.~\eqref{eq:G-sup-on-pure} and
\eqref{eq:G-Lipschitz-bound}, define
\begin{equation}
    \lambda_{\rm exp}
    \coloneqq
    2(M_G+L_G).
\end{equation}
Then, for all $\rho,\sigma\in\mathcal{S}_{\rm pure}$ and $t\ge0$,
\begin{equation}
    \norm{
        \varphi_t(\rho)-\varphi_t(\sigma)
    }_1
    \le
    e^{\lambda_{\rm exp}t}
    \norm{\rho-\sigma}_1.
    \label{eq:generic-exponential-solution-bound}
\end{equation}
\end{lemma}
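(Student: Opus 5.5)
We prove this with a Grönwall argument on the trace-norm distance between the two trajectories. Write $\rho(t)=\varphi_t(\rho)$, $\sigma(t)=\varphi_t(\sigma)$ and $\Delta(t)=\rho(t)-\sigma(t)$. Both trajectories stay in $\mathcal{S}_{\rm pure}$ (Appendix~\ref{appendix:purity-preservation}), so Eqs.~\eqref{eq:G-sup-on-pure} and \eqref{eq:G-Lipschitz-bound} apply along them. The right-hand side is locally Lipschitz, hence the solutions are unique and differentiable. We then bound $\norm{\dot\Delta(t)}_1$ by a constant times $\norm{\Delta(t)}_1$.

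\textbf{Step 1: simplify the vector field on pure states.} For $\rho=\proj{\psi}$ and Hermitian $G$, expanding gives
\[
\comm{\comm{\rho}{G}}{\rho}=\rho G+G\rho-2\rho G\rho .
\]
Since $\rho^2=\rho$, this is $\{\rho,G\}-2\rho G\rho$, where $\{\cdot,\cdot\}$ denotes the anticommutator. Set $F(\rho)\coloneqq\rho G(\rho)+G(\rho)\rho-2\rho G(\rho)\rho$, so that $\dot\Delta=F(\rho(t))-F(\sigma(t))$.

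\textbf{Step 2: split the difference.} Write
\[
F(\rho)-F(\sigma)=\bigl[F_{G(\rho)}(\rho)-F_{G(\rho)}(\sigma)\bigr]+\bigl[F_{G(\rho)}(\sigma)-F_{G(\sigma)}(\sigma)\bigr],
\]
where $F_G(\cdot)$ denotes the map with $G$ held fixed.

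For the second bracket, $F_G(\sigma)$ is linear in $G$, and $F_{G}(\sigma)=\comm{\comm{\sigma}{G}}{\sigma}$ has rank at most $2$. Using Hölder with $\norm{\sigma}_1=1$, the bracket is at most $2\norm{G(\rho)-G(\sigma)}_{\rm op}\le 2L_G\norm{\rho-\sigma}_1$. To get this, set $D=G(\rho)-G(\sigma)$ and $P=\sigma$, and write $PD(1-P)+(1-P)DP$; each term has trace norm at most $\norm{D}_{\rm op}$.

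For the first bracket, $G=G(\rho)$ is fixed and we need $\norm{F_G(\rho)-F_G(\sigma)}_1\le 2M_G\norm{\rho-\sigma}_1$. The crude bound from $\rho G\rho-\sigma G\sigma=\Delta G\rho+\sigma G\Delta$ gives $4M_G$, which is too weak. The target constant is more natural if we write
\[
F_G(\rho)=\rho G(1-\rho)+(1-\rho)G\rho .
\]
Then
\[
\rho G(1-\rho)-\sigma G(1-\sigma)=\Delta G(1-\rho)-\sigma G\Delta .
\]
Its trace norm is at most $\norm{\Delta}_1 M_G(\norm{1-\rho}_{\rm op}+\norm{\sigma}_{\rm op})=2M_G\norm{\Delta}_1$. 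The Hermitian-conjugate term gives the same bound, so this route yields $4M_G$ in total.

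\textbf{The main obstacle is the constant.} The total so far is $4M_G+2L_G$, versus the claimed $2(M_G+L_G)$. To sharpen it I would use that $\Delta$ is traceless, Hermitian and rank $\le2$ with eigenvalues $\pm\norm{\Delta}_1/2$.

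One way is to write $\Delta G(1-\rho)+(1-\rho)G\Delta$ and combine it with its partner term before taking norms, exploiting the rank-2 structure. Another is to avoid the split entirely and compute $\frac{d}{dt}\norm{\Delta}_1$ directly. For pure states $\norm{\Delta}_1=2\sqrt{1-|\braket{\psi|\phi}|^2}$, where $\phi$ is the state vector of $\sigma(t)$, so it suffices to differentiate the fidelity $f=|\braket{\psi|\phi}|^2$.

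With $\dot\rho=F(\rho)$, we get $\dot f=\Tr(F(\rho)\sigma)+\Tr(\rho F(\sigma))$. This involves only the quantities $\braket{\psi|G|\phi}$ and $\braket{\psi|G|\psi}$. Bounding it by $4(M_G+L_G)\,f(1-f)$-type expressions, which is where the factors of $2$ are saved, gives
\[
\frac{d}{dt}\sqrt{1-f}\le\lambda_{\rm exp}\sqrt{1-f}.
\]

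In either form, the final step is Grönwall's inequality for the absolutely continuous function $\norm{\Delta(t)}_1$, namely $\frac{d}{dt}\norm{\Delta}_1\le\lambda_{\rm exp}\norm{\Delta}_1$, which gives Eq.~\eqref{eq:generic-exponential-solution-bound}.
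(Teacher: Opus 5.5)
\textbf{Gap: the proposal proves only the rate $4M_G+2L_G$, not the stated $\lambda_{\rm exp}=2(M_G+L_G)$.}

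Your overall architecture matches the paper's. You apply Gr\"onwall to $\norm{\rho(t)-\sigma(t)}_1$. You split $F(\rho)-F(\sigma)$ into a $G$-difference term and a fixed-$G$ state-difference term. You bound the former by $2L_G\norm{\rho-\sigma}_1$ via $PD(\id-P)+(\id-P)DP$, which is the paper's rank-one reflection bound. Freezing $G(\rho)$ rather than $G(\sigma)$ is immaterial. One minor slip: with $\rho^2=\rho$ one has $\comm{\comm{\rho}{G}}{\rho}=2\rho G\rho-\rho G-G\rho$, the negative of what you wrote, which is harmless for norm bounds.

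As written, however, the proposal does not prove the lemma. You establish only the rate $4M_G+2L_G$. Neither proposed sharpening is carried out. The fidelity route rests on an unverified assertion (``$f(1-f)$-type expressions''). So the claimed constant $\lambda_{\rm exp}=2(M_G+L_G)$ is left unproven.

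\textbf{The missing step is a one-line regrouping.} It needs neither the rank-two spectrum of $\Delta$ nor the fidelity. Your own ``crude'' identity already gives it. With $G=G(\rho)$ fixed,
$F_G(\rho)-F_G(\sigma)=\Delta G+G\Delta-2(\Delta G\rho+\sigma G\Delta)=\Delta G R_\rho+R_\sigma G\Delta$,
where $R_\rho=\id-2\rho$ and $R_\sigma=\id-2\sigma$ are unitary. Hence $\norm{F_G(\rho)-F_G(\sigma)}_1\le 2M_G\norm{\Delta}_1$.

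Equivalently, add your two pieces before taking norms. This gives $\Delta G(\id-\rho-\sigma)+(\id-\rho-\sigma)G\Delta$, and $\id-\rho-\sigma=\tfrac12(R_\rho+R_\sigma)$ has operator norm at most one. Your mistake was bounding $\rho G(\id-\rho)-\sigma G(\id-\sigma)$ and its partner separately, which loses a factor of two.

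This is exactly the paper's argument. It writes the vector field as $\tfrac12\bigl(R_\rho G(\rho)R_\rho-G(\rho)\bigr)$ and uses $R_\rho BR_\rho-R_\sigma BR_\sigma=(R_\rho-R_\sigma)BR_\rho+R_\sigma B(R_\rho-R_\sigma)$ with $\norm{R_\rho-R_\sigma}_1=2\norm{\rho-\sigma}_1$. With this in place, your Gr\"onwall step closes with the stated $\lambda_{\rm exp}=2(M_G+L_G)$.
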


\begin{proof}
See Appendix~\ref{appendix:Lem-generic-exponential-solution-bound}.
\end{proof}

\noindent
The parameter $\lambda_{\rm exp}$ provides a general upper
bound on the exponential growth rate of the trace-norm
distance between solutions.
In the global error and query-complexity analysis of Sec.~\ref{sec:global-analysis}, 
this bound controls the amplification of perturbations in the initial state.

As a concrete example, 
we consider an NLSE of the form
\begin{equation}
    i\frac{d}{dt}\ket{\psi(t)}
    =
    H(\rho(t))\ket{\psi(t)},
    \label{eq:NLSE}
\end{equation}
where $H(\rho)$ is a Hermitian operator.
We assume that there exist finite constants $M_H,L_H\ge0$ such that,
for all $\rho,\sigma\in\mathcal{S}_{\rm pure}$,
\begin{equation}
    \norm{H(\rho)}_{\rm op}
    \le
    M_H,
    \label{eq:H-bound-on-pure}
\end{equation}
and
\begin{equation}
    \norm{H(\rho)-H(\sigma)}_{\rm op}
    \le
    L_H\norm{\rho-\sigma}_1.
    \label{eq:H-Lipschitz-bound}
\end{equation}
If
\begin{equation}
    G(\rho)
    =
    -i\comm{\rho}{H(\rho)},
    \label{eq:generator-NLSE}
\end{equation}
the density operator $\rho(t)=\proj{\psi(t)}$ associated with
Eq.~\eqref{eq:NLSE} satisfies Eq.~\eqref{eq:state-dependent-db}.
The commutator bounds for rank-one projectors show that
Eqs.~\eqref{eq:G-sup-on-pure} and
\eqref{eq:G-Lipschitz-bound} hold, for example, with
\begin{equation}
    M_G = M_H,
\end{equation}
and
\begin{equation}
    L_G = M_H+L_H.
\end{equation}
We now return to state-dependent Hermitian operators $G(\rho)$ satisfying
Eqs.~\eqref{eq:G-sup-on-pure} and~\eqref{eq:G-Lipschitz-bound}.

\subsection{General Recursive Ancilla-Free Implementation} 
\label{subsec:general-framework}
In this subsection, we present the proposed recursive ancilla-free
simulation framework.
With $\rho(t)=\proj{\psi(t)}$ and a suitable choice of global phase, 
Eq.~\eqref{eq:state-dependent-db} admits the state-vector representation
\begin{equation}
    \frac{d}{dt}\ket{\psi(t)}
    =
    \comm{\rho(t)}{G(\rho(t))}\ket{\psi(t)}.
\end{equation}
We assume access to an initial state preparation oracle $U_0$
satisfying $U_0\ket{0}=\ket{\psi(0)}$ and its inverse $U_0^\dagger$.
Here, $\ket{0}$ denotes the tensor-product zero state.
Our goal is to use $U_0$ and $U_0^\dagger$ to recursively construct
an ancilla-free circuit that prepares an approximation to
$\ket{\psi(T)}$ up to a global phase, where $T$ is the target
evolution time.
We measure the approximation error by the trace-norm distance.
To this end, we choose a step size $\tau>0$ such that
$M\coloneqq T/\tau$ is a positive integer.
As shown in Fig.~\ref{fig:recursive-implementation}, we simulate the dynamics up to the target evolution time $T$ by recursively applying one-step updates of duration $\tau$.
We first describe the one-step updates.

\begin{figure*}[t]
  \begin{center}
    \includegraphics[width=\linewidth]{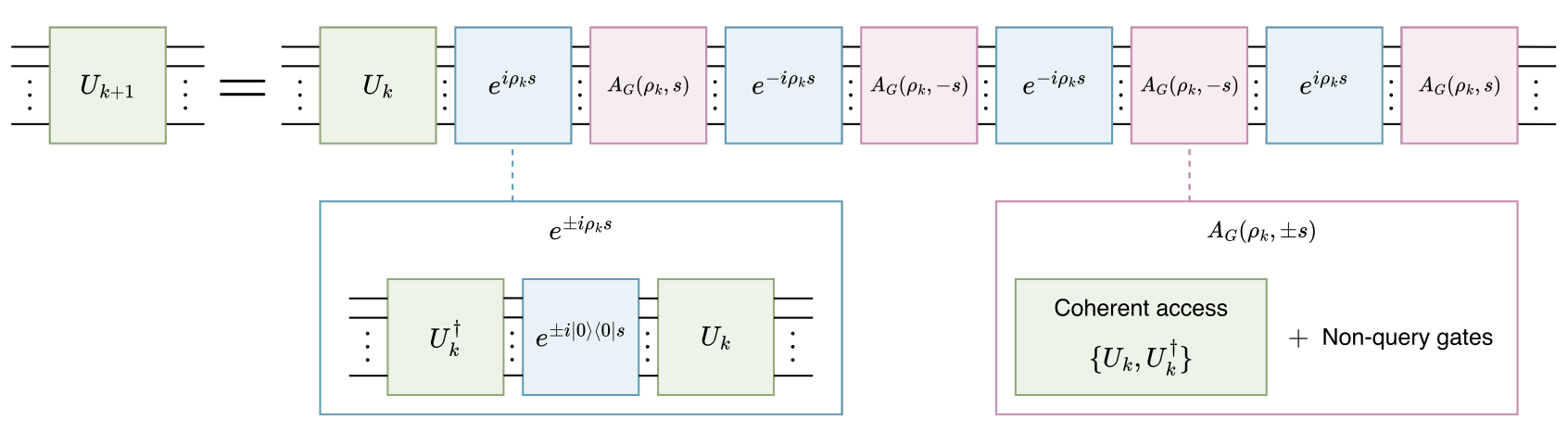}
    \caption{An overview of the recursive ancilla-free implementation of nonlinear
    state-dependent double-bracket dynamics in Eq.~\eqref{eq:state-dependent-db}.
    This circuit implements an approximate one-step update of duration
    $\tau$, recursively constructing $U_{k+1}$ from $U_k$ and $U_k^\dagger$.
    The state preparation circuit $U_k$ satisfies
    $U_k\ket{0}=\ket{\psi_k}$, where $\ket{\psi_k}$ approximates
    the solution at $t_k=k\tau$ up to a global phase.
    This implementation is based on the
    symmetric group-commutator formula, with $s=\sqrt{\tau/2}$.
    The projector exponentials are implemented as
    $e^{\pm i\rho_k s}=U_k e^{\pm i\proj{0}s}U_k^\dagger$,
    where $\rho_k=\proj{\psi_k}$.
    The ancilla-free unitaries $A_G(\rho_k,\pm s)$ approximate
    $e^{\pm iG(\rho_k)s}$
    and are implemented using coherent access to $U_k$ and $U_k^\dagger$,
    together with non-query gates.}
    \label{fig:recursive-implementation}
  \end{center}
\end{figure*}

For an input state $\rho$, 
we hold $\comm{\rho}{G(\rho)}$ fixed over one step and define the discretized one-step propagator by
\begin{equation}
    W(\rho,\tau)
    \coloneqq
    e^{\tau\comm{\rho}{G(\rho)}}.
    \label{eq:discretized-one-step-propagator}
\end{equation}
This gives the one-step approximation
\begin{equation} \label{eq:first-order-approximation}
    \ket{\psi(t+\tau)}
    = W(\rho(t),\tau)\ket{\psi(t)}+O(\tau^2).
\end{equation}
The corresponding discretized one-step map on density operators is
\begin{equation}
    \Phi_\tau(\rho)
    =
    W(\rho,\tau)\rho W(\rho,\tau)^\dagger.
\end{equation}
To approximate $W(\rho,\tau)$ without ancilla qubits, we use the symmetric
group-commutator formula analyzed in Sec.~\ref{subsec:one-step-error-general-implementation}:
\begin{align}
    W(\rho,\tau)
    ={}&e^{\tau\comm{\rho}{G(\rho)}} \\
    ={}&e^{iG(\rho)s}e^{i\rho s}e^{-iG(\rho)s}e^{-i\rho s} \\
    &\times e^{-iG(\rho)s}e^{-i\rho s}e^{iG(\rho)s}e^{i\rho s}
    +O(\tau^2), \notag
\end{align}
where $s=\sqrt{\tau/2}$.
We approximate $W(\rho,\tau)$ using state preparation circuits for the projector exponentials 
and ancilla-free approximations to the exponentials of $G(\rho)$.
We assume that, given coherent access to a state preparation circuit
for $\rho$ and its inverse, we can implement an ancilla-free unitary
$A_G(\rho,\theta)$ satisfying
\begin{equation}
    \norm{
        A_G(\rho,\theta)-e^{iG(\rho)\theta}
    }_{\rm op}
    \le
    \epsilon_G(\theta).
\end{equation}
The ancilla-free circuit implementing $A_G(\rho,\theta)$
and its approximation parameters may be chosen separately
for each $\theta$.
We require $\epsilon_G(\pm s)=O(\tau^2)$, as specified in
Sec.~\ref{subsec:one-step-error-general-implementation}.
Replacing the exponentials of $G(\rho)$ by these unitaries gives
\begin{equation}
    \label{eq:symmetric-gc}
    \begin{split}
        V_{\rm Gen}(\rho,\tau)
        \coloneqq{}&
        A_G(\rho,s)e^{i\rho s}A_G(\rho,-s)e^{-i\rho s} \\
        &\times
        A_G(\rho,-s)e^{-i\rho s}A_G(\rho,s)e^{i\rho s}.
    \end{split}
\end{equation}
The implemented one-step map is
\begin{equation}
    \widetilde{\Phi}^{({\rm Gen})}_\tau(\rho)
    =
    V_{\rm Gen}(\rho,\tau)\rho V_{\rm Gen}(\rho,\tau)^\dagger.
\end{equation}

Using $V_{\rm Gen}(\rho,\tau)$, we define states $\ket{\psi_k}$
that approximate $\ket{\psi(t_k)}$ with $t_k\coloneqq k\tau$ and
$k\in\{0,1,\ldots,M\}$.
Starting from $\ket{\psi_0}=\ket{\psi(0)}$, we define
\begin{equation}
    \ket{\psi_{k+1}}
    \coloneqq
    V_{\rm Gen}(\rho_k,\tau)\ket{\psi_k},
\end{equation}
where $\rho_k\coloneqq\proj{\psi_k}$ and $k\in\{0,1,\ldots,M-1\}$.
We recursively construct unitaries $U_k$ satisfying
$U_k\ket{0}=\ket{\psi_k}$ via
\begin{equation}
    \label{eq:recursive-relation}
    U_{k+1}
    \coloneqq
    V_{\rm Gen}(\rho_k,\tau)U_k.
\end{equation}
At step $k$, $U_k$ and $U_k^\dagger$ provide the state preparation
access required to implement $A_G(\rho_k,\pm s)$.
The projector exponentials are implemented using the identity
$e^{i\rho_k\theta}=U_ke^{i\proj{0}\theta}U_k^\dagger$
for any $\theta\in\mathbb{R}$,
where the unitary $e^{i\proj{0}\theta}$ can be realized exactly and efficiently without ancilla qubits~\cite{barenco_elementary_1995}.
Thus, $U_{k+1}$ can be constructed from $U_k$ and $U_k^\dagger$,
and ultimately from $U_0$ and $U_0^\dagger$.
The resulting circuit $U_M$ prepares $\ket{\psi_M}$ as an
approximation to the target state $\ket{\psi(T)}$.

We finally determine the query complexity with respect to the
initial state preparation oracle $U_0$ and its inverse $U_0^\dagger$.
For each $k\in\{0,1,\ldots,M-1\}$, let $a_{G,k+1}$ be a common upper
bound on the total number of calls to $U_k$ and $U_k^\dagger$ needed
to implement either $A_G(\rho_k,s)$ or $A_G(\rho_k,-s)$ at the
required accuracy $\epsilon_G(\pm s)=O(\tau^2)$.
We treat the implementation as a black box whose query cost may
depend on $\tau$.
The four $A_G(\rho_k,\pm s)$ factors in $V_{\rm Gen}(\rho_k,\tau)$
require at most $4a_{G,k+1}$ calls to $U_k$ and $U_k^\dagger$,
while the four projector exponentials require eight.
Thus, $V_{\rm Gen}(\rho_k,\tau)$ requires at most $8+4a_{G,k+1}$
calls, and including the rightmost $U_k$ in
Eq.~\eqref{eq:recursive-relation} gives at most $9+4a_{G,k+1}$
calls for $U_{k+1}$.
Let $Q_k$ denote the total number of calls to $U_0$ and $U_0^\dagger$
in the recursively expanded implementation of $U_k$, with $Q_0=1$.
Recursive expansion gives
\begin{equation}
    Q_k
    \le
    \prod_{j=1}^k(9+4a_{G,j}),
    \qquad
    k\in\{1,2,\ldots,M\}.
\end{equation}
If there exists a uniform upper bound $a_G$ on the total number
of calls to $U_{j-1}$ and $U_{j-1}^\dagger$ required to implement
$A_G(\rho_{j-1},\pm s)$ for either sign at the required accuracy,
such that $a_{G,j}\le a_G$ for all
$j\in\{1,2,\ldots,M\}$, then
\begin{equation}
    Q_k
    \le
    (9+4a_G)^k
    =
    \exp\left(O\left(k\log(9+4a_G)\right)\right).
\end{equation}

In the next section, 
we bound the one-step discretization error between the exact solution $\varphi_\tau(\rho)$ after time $\tau$
and the corresponding state $\Phi_\tau(\rho)$ obtained by a one-step discretization,
as well as the one-step implementation error associated with $V_{\rm Gen}(\rho,\tau)$. 
We also present an NLSE-specific circuit with two reflections about the current state in place of four projector exponentials.

\section{One-Step Error Bounds}
\label{sec:one-step-error-bounds}
To bound the global error, measured by the trace-norm
distance between the exact solution and the state produced by the proposed framework 
at the target evolution time $T$,
we first derive one-step error bounds for the general
implementation.
We then use the NLSE-specific form of the state-dependent
Hermitian operator $G(\rho)$ 
to construct a simpler ancilla-free implementation
and bound its one-step error.

\subsection{A One-Step Error Bound for the General Recursive Ancilla-Free Implementation}
\label{subsec:one-step-error-general-implementation}
In this subsection, we bound the one-step trace-norm error of the
general recursive ancilla-free implementation introduced in Sec.~\ref{subsec:general-framework}, 
which includes the NLSE in Eq.~\eqref{eq:NLSE}.
We bound the discretization and implementation errors separately
and combine these bounds to obtain an $O(\tau^2)$ bound,
consistent with the first-order approximation in Eq.~\eqref{eq:first-order-approximation}.

For $\rho\in\mathcal{S}_{\rm pure}$, the one-step error of the
general implementation satisfies
\begin{equation}
    \begin{split}
        &\norm{
            \widetilde{\Phi}^{({\rm Gen})}_\tau(\rho)
            -\varphi_\tau(\rho)
        }_1
        \\
        \le{}&
        \norm{
            \widetilde{\Phi}^{({\rm Gen})}_\tau(\rho)
            -\Phi_\tau(\rho)
        }_1
        +
        \norm{
            \Phi_\tau(\rho)-\varphi_\tau(\rho)
        }_1.
    \end{split}
\end{equation}
We first bound the one-step discretization error in the second term 
and then the one-step implementation error in the first term.

\begin{lemma}[One-step discretization error bound]
\label{lem:discretization-error}
For every $\rho\in\mathcal{S}_{\rm pure}$ and $0<\tau\le1$,
\begin{equation}
    \norm{
        \Phi_\tau(\rho)-\varphi_\tau(\rho)
    }_1
    \le
    C_{\rm disc}\tau^2,
    \label{eq:one-step-discretization-error}
\end{equation}
where
\begin{equation}
    C_{\rm disc}
    \coloneqq
    M_G\lambda_{\rm exp}+2M_G^2.
\end{equation}
\end{lemma}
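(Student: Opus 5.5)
We compare the exact trajectory with the frozen-generator trajectory through a Duhamel-type comparison on the interval $[0,\tau]$. Write $\rho(t)=\varphi_t(\rho)$ and introduce the auxiliary path $\sigma(t)\coloneqq W(\rho,t)\rho W(\rho,t)^\dagger$, so that $\sigma(0)=\rho(0)=\rho$ and $\sigma(\tau)=\Phi_\tau(\rho)$. Set $X(\rho)\coloneqq\comm{\rho}{G(\rho)}$. For Hermitian $\rho$ and $G(\rho)$ this operator is anti-Hermitian, so $W(\rho,t)$ is unitary and
\begin{equation*}
\dot\sigma(t)=\comm{X(\rho)}{\sigma(t)}.
\end{equation*}
The exact flow, Eq.~\eqref{eq:state-dependent-db}, can be rewritten as $\dot\rho(t)=\comm{X(\rho(t))}{\rho(t)}$.

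Let $\mathcal{W}_t(\cdot)=W(\rho,t)(\cdot)W(\rho,t)^\dagger$ be the frozen unitary conjugation channel. Its propagator is trace-norm isometric, so the variation-of-constants formula gives
\begin{equation*}
\norm{\rho(\tau)-\sigma(\tau)}_1\le\int_0^\tau\norm{\comm{X(\rho(t))-X(\rho)}{\rho(t)}}_1dt.
\end{equation*}
Since $\rho(t)$ is a rank-one projector, $\norm{\comm{Y}{\rho(t)}}_1\le 2\norm{Y}_{\rm op}$. It then suffices to bound $\norm{X(\rho(t))-X(\rho)}_{\rm op}$ by a constant times $t$; integrating $\int_0^\tau t\,dt=\tau^2/2$ will produce the $\tau^2$.

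To bound $X(\rho(t))-X(\rho)$, split it as
\begin{equation*}
\comm{\rho(t)-\rho}{G(\rho(t))}+\comm{\rho}{G(\rho(t))-G(\rho)}.
\end{equation*}
\begin{itemize}
\item \textbf{Rate of change of $\rho(t)$.} From the flow, $\norm{\dot\rho}_1=\norm{\comm{X(\rho(t))}{\rho(t)}}_1$. Write $\rho(t)=\proj{\psi}$ with $G=G(\rho(t))$. A direct rank-one calculation gives $\comm{\comm{\rho}{G}}{\rho}=\rho G Q+QG\rho$ with $Q=\id-\rho$. The trace norm of this is $2\norm{QG\psi}_2\le 2M_G$, although the crude estimate $4M_G$ would also do. Hence $\norm{\rho(t)-\rho}_1\le 2M_Gt$.
\item \textbf{First term.} Since $\rho(t)-\rho$ is traceless rank-two Hermitian, $\norm{\comm{\rho(t)-\rho}{G}}_{\rm op}\le 2\norm{\rho(t)-\rho}_{\rm op}M_G\le M_G\norm{\rho(t)-\rho}_1$, using $\norm{\cdot}_{\rm op}=\tfrac12\norm{\cdot}_1$ for such differences. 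This is at most $2M_G^2t$.
\item \textbf{Second term.} By Eq.~\eqref{eq:G-Lipschitz-bound}, this is at most $2L_G\cdot 2M_Gt$.
\end{itemize}

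Multiplying by the factor $2$ from the rank-one commutator bound and integrating gives a bound of the form $(aM_G^2+bM_GL_G)\tau^2$ with explicit small constants. The target constant is $C_{\rm disc}=2M_G(M_G+L_G)+2M_G^2$, so the numerical constants must match this. I expect that to be the main obstacle: the crude estimates above overshoot. Sharpening to the claimed constants requires three refinements. First, use the exact rank-one identities $\norm{\comm{Y}{\proj{\psi}}}_1=2\sqrt{\langle Y^2\rangle-\langle Y\rangle^2}\le 2\norm{Y}_{\rm op}$ for Hermitian $iY$. Second, use the sharper rate $\norm{\dot\rho}_1\le 2M_G$. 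Third, group the terms so that the $L_G$ and $M_G$ contributions combine into $M_G\lambda_{\rm exp}$, with the extra $2M_G^2$ absorbing the remaining piece. If the constants still do not align, an alternative is a triangle inequality through Lemma~\ref{lem:generic-exponential-solution-bound}, comparing $\rho(t)$ with frozen evolutions. That route naturally yields a factor $\lambda_{\rm exp}$ times the velocity bound $M_G$, plus a second-order Taylor remainder of the frozen propagator bounded by $2M_G^2\tau^2$. I would try that decomposition if the direct Duhamel constants fall short. The hypothesis $\tau\le1$ is only needed if some higher-order remainder is absorbed.
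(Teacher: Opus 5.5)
Your Duhamel comparison is a valid strategy, but as written it does not reach the stated constant, and the three refinements you list target the wrong places. The slack is in the second term. You bound $\norm{\comm{\rho}{G(\rho(t))-G(\rho)}}_{\rm op}$ by $2\norm{G(\rho(t))-G(\rho)}_{\rm op}$, which gives $(2M_G^2+4M_GL_G)\tau^2$ in total. This is not dominated by $C_{\rm disc}\tau^2=(4M_G^2+2M_GL_G)\tau^2$ when $L_G>M_G$. That happens in the NLSE setting whenever $L_H>0$, since there $M_G=M_H$ and $L_G=M_H+L_H$.

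None of your listed refinements closes this gap:
\begin{itemize}
\item Sharpening the outer trace-norm commutator does not help, because the factor $2$ there is attained in general.
\item The rate $\norm{\dot\rho}_1\le 2M_G$ is already in your estimate.
\item Regrouping does not change the coefficients.
\end{itemize}

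The missing observation is this: for a projector $\rho$, $\comm{\rho}{Y}=-\tfrac12\comm{R_\rho}{Y}$ with $R_\rho=\id-2\rho$ unitary, so $\norm{\comm{\rho}{Y}}_{\rm op}\le\norm{Y}_{\rm op}$. The paper uses this identity in Appendix~\ref{appendix:proof-one-step-implementation-error} and Lemma~\ref{lem:GP-generator-bounds}. With it, the second term is at most $2M_GL_Gt$. Then $\norm{X(\rho(t))-X(\rho)}_{\rm op}\le 2M_G(M_G+L_G)t=M_G\lambda_{\rm exp}t$, and your Duhamel inequality yields $\norm{\Phi_\tau(\rho)-\varphi_\tau(\rho)}_1\le M_G\lambda_{\rm exp}\tau^2\le C_{\rm disc}\tau^2$. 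There are also two harmless sign slips. First, $\comm{\comm{\rho}{G}}{\rho}=-(\rho GQ+QG\rho)$. Second, for anti-Hermitian $Y$ the rank-one identity reads $2\sqrt{\norm{Y\psi}_2^2-\lvert\langle Y\rangle_\psi\rvert^2}$.

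Once repaired, your argument is a genuinely different route, and it gives a sharper constant. The paper compares each of $\varphi_\tau(\rho)$ and $\Phi_\tau(\rho)$ with the common linearization $\rho+\tau\cV(\rho)$:
\begin{itemize}
\item For the exact flow it uses the vector-field Lipschitz bound $\norm{\cV(\sigma)-\cV(\omega)}_1\le\lambda_{\rm exp}\norm{\sigma-\omega}_1$, giving $M_G\lambda_{\rm exp}\tau^2$. This bound comes from the proof of Lemma~\ref{lem:generic-exponential-solution-bound}, not from the flow bound that lemma states.
\item For the frozen map it uses a second-order Taylor remainder with $\norm{\comm{X(\rho)}{\comm{X(\rho)}{\sigma(s)}}}_1\le 4M_G^2$, giving $2M_G^2\tau^2$.
\end{itemize}
Your fallback is essentially this argument. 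The paper's route is modular and reuses an estimate it already needs, but because it expands the frozen propagator separately it pays the extra $2M_G^2\tau^2$. In your comparison, the shared drift $\comm{X(\rho)}{\cdot}$ is removed by a trace-norm isometry. Only the generator difference $X(\rho(t))-X(\rho)$ contributes, so the $2M_G^2$ term disappears. Neither argument uses $\tau\le 1$.
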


\begin{proof}
See Appendix~\ref{appendix:discretization-error}.
\end{proof}

We next derive an $O(\tau^2)$ bound on the one-step implementation error of $V_{\rm Gen}(\rho,\tau)$
in Eq.~\eqref{eq:symmetric-gc} relative to $W(\rho,\tau)$,
so that we can obtain an $O(\tau^2)$ bound on the total one-step error.
To this end, with $s=\sqrt{\tau/2}$, we assume that
\begin{equation}
    \epsilon_G(\pm s)\le c_G\tau^2,
\end{equation}
for all $0<\tau\le1$, where $c_G>0$ is independent of $\rho$ and $\tau$.

\begin{lemma}[One-step implementation error bound]
\label{lem:one-step-implementation-error}
For every $\rho\in\mathcal{S}_{\rm pure}$ and $0<\tau\le1$,
\begin{equation}
    \norm{
        V_{\rm Gen}(\rho,\tau)-W(\rho,\tau)
    }_{\rm op}
    \le
    C_{\rm GC}\tau^2,
    \label{eq:symmetric-gc-error-bound}
\end{equation}
where
\begin{equation}
    C_{\rm GC}
    \coloneqq
    \frac{8}{3}(1+M_G)^4
    +\frac{1}{2}M_G^2
    +4c_G.
\end{equation}
\end{lemma}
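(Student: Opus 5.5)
We split the error into an oracle-replacement part and a pure group-commutator part. Let $V_{\rm exact}(\rho,\tau)$ denote the product in Eq.~\eqref{eq:symmetric-gc} with every $A_G(\rho,\pm s)$ replaced by the exact exponential $e^{\pm iG(\rho)s}$. The triangle inequality gives
\begin{equation}
\norm{V_{\rm Gen}-W}_{\rm op}\le\norm{V_{\rm Gen}-V_{\rm exact}}_{\rm op}+\norm{V_{\rm exact}-W}_{\rm op}.
\end{equation}

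\textbf{Oracle-replacement term.} All factors in both products are unitary. A telescoping (hybrid) argument therefore swaps one factor at a time, and each swap costs at most $\epsilon_G(\pm s)\le c_G\tau^2$. There are four $A_G$ factors, so this term is at most $4c_G\tau^2$, which is the $4c_G$ contribution to $C_{\rm GC}$.

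\textbf{Group-commutator term.} Write $A=iG(\rho)$ and $B=i\rho$. Both are anti-Hermitian with $\norm{A}_{\rm op}\le M_G$ and $\norm{B}_{\rm op}=1$. The exact product is
\begin{equation}
V_{\rm exact}=X(s)\,Y(s),\qquad X(s)=e^{As}e^{Bs}e^{-As}e^{-Bs},\qquad Y(s)=e^{-As}e^{-Bs}e^{As}e^{Bs}.
\end{equation}
We use the standard group-commutator expansions
\begin{equation}
X(s)=e^{s^2[A,B]+s^3R_1+O(s^4)},\qquad Y(s)=e^{s^2[A,B]-s^3R_1+O(s^4)}.
\end{equation}
The cubic terms cancel, by the symmetry $(A,B)\to(-A,-B)$. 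Hence $XY=e^{2s^2[A,B]}+O(s^4)$. Since $[A,B]=-[G,\rho]=[\rho,G]$ and $2s^2=\tau$, the leading term is $e^{\tau[\rho,G(\rho)]}=W$, and $s^4=\tau^2/4$ gives the $O(\tau^2)$ rate.

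To obtain explicit constants rather than BCH asymptotics, we plan two steps.
\begin{enumerate}
\item Expand each of $X$ and $Y$ via Taylor's theorem with integral remainder up to third order in $s$. Each is a product of four exponentials with generators of norm at most $(1+M_G)$, so the fourth-order remainder is bounded by a constant times $(1+M_G)^4 s^4$. This produces the $\frac{8}{3}(1+M_G)^4$ term after using $s^4=\tau^2/4$ and summing over the two halves.
\item Compare the third-order truncation of $XY$ with $1+\tau[\rho,G]$, where the cubic terms cancel exactly. Then use $\norm{W-1-\tau[\rho,G]}_{\rm op}\le\frac{1}{2}\tau^2\norm{[\rho,G]}_{\rm op}^2$. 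The bound $\norm{[\rho,G]}_{\rm op}\le M_G$ for a rank-one projector gives the $\frac12 M_G^2$ term.
\end{enumerate}

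\textbf{Main obstacle.} The hard part is the bookkeeping in the first step. The $s^2$ and $s^3$ coefficients of $XY$ must be shown to match $\tau[\rho,G]$ exactly, with all second-order-in-$\tau$ pieces (including $\frac12\tau^2[\rho,G]^2$ and products of the degree-two parts of $X$ and $Y$) absorbed into the remainder. The approach is to write $X=1+s^2C+s^3D_X+R_X$ and $Y=1+s^2C+s^3D_Y+R_Y$, with $D_Y=-D_X$ and $\norm{R_{X,Y}}_{\rm op}\le\kappa(1+M_G)^4s^4$. In the product $XY$, every cross term is then of degree at least $4$ in $s$. We bound these cross terms using $\tau\le1$, i.e.\ $s\le1/\sqrt2$, and fold them into the $(1+M_G)^4$ constant.

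Combining the three contributions gives $\norm{V_{\rm Gen}(\rho,\tau)-W(\rho,\tau)}_{\rm op}\le C_{\rm GC}\tau^2$.
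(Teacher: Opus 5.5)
\textbf{Gap in the cross-term step.} Your expansions of the two halves are right: $X(s)=\id+s^2C+s^3D+R_X$ with $C=\comm{\rho}{G(\rho)}$, and $Y(s)=X(-s)$ gives $D_Y=-D_X$ and $2s^2C=\tau\comm{\rho}{G(\rho)}$. The step that fails is the last one, where you fold the cross terms of $XY$ into the $(1+M_G)^4$ constant using only $s\le1/\sqrt2$.

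The problem is the degree in $M_G$, not the degree in $s$. For a four-factor half, the generator norms sum to $2(1+M_G)$, so the natural bound is $\norm{R_X}_{\rm op},\norm{R_Y}_{\rm op}\le\frac23(1+M_G)^4s^4$. The cross terms are then controlled only by quantities such as
\[
\norm{R_X}_{\rm op}\norm{R_Y}_{\rm op}\le\tfrac49(1+M_G)^8s^8,\qquad \norm{R_X}_{\rm op}\norm{s^2C}_{\rm op}\le\tfrac23M_G(1+M_G)^4s^6 .
\]
The bound $s\le1/\sqrt2$ removes the surplus powers of $s$ but not the surplus powers of $M_G$. So for large $M_G$ no fixed multiple of $(1+M_G)^4s^4$ dominates these terms, yet the lemma must hold with the stated $C_{\rm GC}$ for every $M_G$.

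Your constant bookkeeping is also inconsistent. With the bound above, the two half-remainders give $\frac13(1+M_G)^4\tau^2$. If each of the four generators is bounded by $1+M_G$, as you wrote, they give $\frac{16}{3}(1+M_G)^4\tau^2$. Neither reading gives $\frac83(1+M_G)^4\tau^2$.

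\textbf{Clean repair: do not split.} Taylor-expand the whole eight-factor product $\widehat V(s)=X(s)X(-s)$ to fourth order at once, which is what the paper does.
\begin{itemize}
\item Your computation already shows $\widehat V(0)=\id$, vanishing first and third derivatives at $s=0$, and $\partial_s^2\widehat V|_{s=0}=4\comm{\rho}{G(\rho)}$.
\item Every factor is unitary, so the Leibniz rule gives $\norm{\partial_s^4\widehat V}_{\rm op}\le(4M_G+4)^4$ for all real $s$.
\item The single remainder is then $\frac{s^4}{4!}4^4(1+M_G)^4=\frac83(1+M_G)^4\tau^2$. This is exactly the first term of $C_{\rm GC}$, and no cross terms arise.
\end{itemize}
Your oracle-replacement bound $4c_G\tau^2$ and the estimate $\norm{W-\id-\tau\comm{\rho}{G(\rho)}}_{\rm op}\le\frac12M_G^2\tau^2$ then finish the proof as you describe.

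\textbf{If you keep the split.} You need two extra ingredients.
\begin{itemize}
\item Set $P=s^2C+s^3D$ and $Q=s^2C-s^3D$. Using $\norm{X}_{\rm op}=\norm{Y}_{\rm op}=1$, write $XY-(\id+P)(\id+Q)=R_XY+(\id+P)R_Y$ with $\norm{\id+P}_{\rm op}\le1+\norm{R_X}_{\rm op}$. The only bad cross term is then $\norm{R_X}_{\rm op}\norm{R_Y}_{\rm op}$.
\item Since $\norm{V_{\rm Gen}-W}_{\rm op}\le2$, the claim is trivial whenever $C_{\rm GC}\tau^2\ge2$. You may therefore assume $(1+M_G)^4s^4<3/16$. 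In that regime the higher-degree terms, including those in $PQ$, fit into the slack between $\frac13$ and $\frac83$.
\end{itemize}
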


\begin{proof}
See Appendix~\ref{appendix:proof-one-step-implementation-error}.
\end{proof}

\noindent
Lemma~\ref{lem:one-step-implementation-error} gives the implementation error bound
\begin{equation}
    \norm{
        \widetilde{\Phi}^{({\rm Gen})}_\tau(\rho)
        -\Phi_\tau(\rho)
    }_1
    \le
    C_{\rm Gen}\tau^2,
\end{equation}
where $C_{\rm Gen} \coloneqq 2C_{\rm GC}$.
Combining this bound with Eq.~\eqref{eq:one-step-discretization-error}
yields
\begin{equation}
    \norm{
        \widetilde{\Phi}^{({\rm Gen})}_\tau(\rho)
        -\varphi_\tau(\rho)
    }_1
    \le
    (C_{\rm Gen}+C_{\rm disc})\tau^2.
\end{equation}

\subsection{An NLSE-Specific One-Step Error Bound}
\label{subsec:one-step-error-NLSE}
We now focus on the NLSE in Eq.~\eqref{eq:NLSE},
for which the state-dependent Hermitian operator takes
the commutator form $G(\rho)=-i\comm{\rho}{H(\rho)}$,
as given in Eq.~\eqref{eq:generator-NLSE}.
This structure allows us to construct a simpler ancilla-free
circuit that approximates the same one-step propagator
$W(\rho,\tau)$.
The circuit uses exponentials of $H(\rho)$ in place of
those of $G(\rho)$, together with reflections about
the current state.
We then bound the one-step implementation error in trace norm. 
Combining this estimate with the discretization error bound yields a bound on the total one-step error.

Under Eqs.~\eqref{eq:H-bound-on-pure} and~\eqref{eq:H-Lipschitz-bound},
we assume that, given coherent access to a state preparation
circuit for $\rho$ and its inverse, we can implement an
ancilla-free unitary $A_H(\rho,\theta)$ satisfying
\begin{equation}
    \norm{
        A_H(\rho,\theta)-e^{iH(\rho)\theta}
    }_{\rm op}
    \le
    \epsilon_H(\theta),
\end{equation}
where
\begin{equation}
    \epsilon_H(\theta)
    \le
    c_H\theta^2
\end{equation}
for all $0<\lvert\theta\rvert\le1$, 
with a constant $c_H>0$
independent of $\rho$ and $\theta$.
The ancilla-free circuit implementing $A_H(\rho,\theta)$
and its approximation parameters may be chosen separately
for each $\theta$.

Define the reflection associated with $\rho$ by $R_\rho = \id-2\rho$.
Using Eq.~\eqref{eq:generator-NLSE} and $\rho^2=\rho$, we obtain
\begin{equation}
    i\comm{\rho}{G(\rho)}
    =
    \frac{1}{2}
    \left(
        H(\rho)-R_\rho H(\rho)R_\rho
    \right).
\end{equation}
Hence
\begin{equation}
    W(\rho,\tau)
    =
    \exp\left[
        -\frac{i\tau}{2}
        \left(
            H(\rho)-R_\rho H(\rho)R_\rho
        \right)
    \right].
\end{equation}
Applying the first-order Lie--Trotter formula and replacing the
exponentials of $H(\rho)$ by $A_H(\rho,\pm\tau/2)$, we approximate
$W(\rho,\tau)$ by
\begin{equation}
    V_{\rm NLSE}(\rho,\tau)
    \coloneqq
    A_H(\rho,-\tau/2)R_\rho
    A_H(\rho,\tau/2)R_\rho.
    \label{eq:NLSE-one-step}
\end{equation}
For $\rho_k=U_k\proj{0}U_k^\dagger$, the reflection is implemented as
\begin{equation}
    R_{\rho_k}
    =
    U_k(\id-2\proj{0})U_k^\dagger.
\end{equation}
Thus, $V_{\rm NLSE}$ can be implemented without ancilla qubits, using
two reflections about the current state and two $A_H$ factors in place of the
four projector exponentials and four $A_G$ factors in $V_{\rm Gen}$.
Let $a_{H,k}$
be an upper bound on the total number of calls to $U_{k-1}$ and
$U_{k-1}^\dagger$ required to implement either
$A_H(\rho_{k-1},-\tau/2)$ or $A_H(\rho_{k-1},\tau/2)$.
Replacing $V_{\rm Gen}$ by $V_{\rm NLSE}$ in
Eq.~\eqref{eq:recursive-relation} gives the recursion factor
$5+2a_{H,k}$, compared with $9+4a_{G,k}$ for the general circuit.
The relative query costs therefore depend on the costs of implementing
$A_H$ and $A_G$ at their respective required accuracies.

For every $\rho\in\mathcal{S}_{\rm pure}$ and $0<\tau\le1$,
the first-order Lie--Trotter estimate and the bound on
$\epsilon_H(\theta)$ give
\begin{equation}
    \norm{
        V_{\rm NLSE}(\rho,\tau)-W(\rho,\tau)
    }_{\rm op}
    \le
    \left(
        \frac{c_H}{2}
        +\frac{M_H^2}{4}
    \right)\tau^2.
\end{equation}
Consequently, for
\begin{equation}
    \widetilde{\Phi}^{({\rm NLSE})}_\tau(\rho)
    \coloneqq
    V_{\rm NLSE}(\rho,\tau)\rho
    V_{\rm NLSE}(\rho,\tau)^\dagger,
\end{equation}
we have the implementation error bound
\begin{equation}
    \norm{
        \widetilde{\Phi}^{({\rm NLSE})}_\tau(\rho)
        -\Phi_\tau(\rho)
    }_1
    \le
    C_{\rm NLSE}\tau^2,
\end{equation}
where $C_{\rm NLSE} \coloneqq c_H+ M_H^2/2$.
Combining this implementation error bound with
Eq.~\eqref{eq:one-step-discretization-error} yields
\begin{equation}
    \norm{
        \widetilde{\Phi}^{({\rm NLSE})}_\tau(\rho)
        -\varphi_\tau(\rho)
    }_1
    \le
    (C_{\rm NLSE}+C_{\rm disc})\tau^2.
\end{equation}
In Sec.~\ref{sec:global-analysis}, we develop a unified analysis
of the propagation of one-step errors for both the general
and NLSE-specific implementations, yielding global error
bounds and initial state preparation oracle query upper bounds.

\section{Global Error Bounds and Query Complexity of the Initial State Preparation Oracle}
\label{sec:global-analysis}
In the preceding section,
we derived one-step error
bounds for both the general and NLSE-specific implementations.
In this section, we develop a unified analysis of global error and
initial state preparation oracle query complexity that applies to both implementations.
The key parameter is a trace-norm Lipschitz bound $\Lambda(t)$,
which bounds the sensitivity of the nonlinear evolution
to perturbations in the initial state.
We use this bound to control the propagation of one-step errors
and obtain a global error bound at the target evolution time $T$.
We then determine a sufficient number of time steps $M$ and the
resulting query complexity with respect to the initial state preparation
oracle $U_0$ and its inverse $U_0^\dagger$.

\subsection{Global Error}
\label{subsec:global-error}
We first bound the accumulation of one-step errors over the simulation.
The Lipschitz bound $\Lambda(t)$ defined below controls 
the amplification of perturbations in the initial state, measured in the trace norm,
and is used to determine a sufficient number of
recursive steps for achieving the target accuracy.
We allow the Lipschitz bound to hold on a subset
$\Omega\subset\mathcal{S}_{\rm pure}$ containing the exact
trajectory and the states generated by the algorithm.

\begin{definition}[Trace-norm Lipschitz bound]
\label{def:Lipschitz-bound}
Let $\Omega\subset\mathcal{S}_{\rm pure}$.
$\varphi_t(\sigma)$ denotes the exact solution of
Eq.~\eqref{eq:state-dependent-db} at time $t$
with initial state $\sigma \in \mathcal{S}_{\rm pure}$.
A function $\Lambda:[0,\infty)\to[0,\infty)$ is a trace-norm
Lipschitz bound for $\varphi_t$ on $\Omega$ if
\begin{equation}
    \norm{\varphi_t(\sigma_1)-\varphi_t(\sigma_2)}_1
    \le
    \Lambda(t)\norm{\sigma_1-\sigma_2}_1
    \label{eq:Lipschitz-bound}
\end{equation}
for all $\sigma_1,\sigma_2\in\Omega$ and $t\ge0$.
\end{definition}

The following global error estimate applies to a general
implemented one-step map $\widetilde\Phi_\tau$, including both
implementations in Sec.~\ref{sec:one-step-error-bounds}.

\begin{theorem}[Global error bound]
\label{thm:global-error}
Fix $T>0$ and $\rho_0\in\mathcal{S}_{\rm pure}$, and let
$\rho(t)=\varphi_t(\rho_0)$ be the corresponding solution of
Eq.~\eqref{eq:state-dependent-db}.
Choose a positive integer $M$ such that $\tau=T/M\le1$.
For the implemented one-step map $\widetilde\Phi_\tau$, define
\begin{equation}
    \rho_{k+1}=\widetilde\Phi_\tau(\rho_k)
    \qquad (k \in \{0,1,\dots,M-1\}).
\end{equation}
Suppose that $\Omega\subset\mathcal{S}_{\rm pure}$ satisfies
\begin{equation}
    \left\{\rho(t) \mid t\in[0,T]\right\}
    \cup
    \left\{\rho_k \mid k \in \{0,1,\dots,M\}\right\}
    \subset\Omega,
    \label{eq:Omega-containing-trajectories}
\end{equation}
and that $\varphi_t$ admits a Lipschitz bound $\Lambda(t)$ on
$\Omega$ as in Definition~\ref{def:Lipschitz-bound}.
Assume also that, for every $\sigma\in\Omega$ and $0<\tau\le1$,
\begin{equation}
    \norm{\widetilde\Phi_\tau(\sigma)-\varphi_\tau(\sigma)}_1
    \le C_{\rm step}\tau^2,
    \label{eq:exact-solution-difference}
\end{equation}
where $C_{\rm step}>0$ is independent of $\tau$ and $\sigma$.
Then
\begin{align}
    \norm{\rho_M-\rho(T)}_1
    &\le C_{\rm step}\tau^2
    \sum_{j=0}^{M-1}\Lambda(\tau)^j.
    \label{eq:global-error-exact} 
\end{align}
\end{theorem}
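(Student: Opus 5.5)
The plan is a standard telescoping (Lady Windermere's fan) argument. The global error splits into one-step errors, each of which is propagated by the exact flow over the remaining time. Only the Lipschitz bound at the single step size $\tau$ is used, applied repeatedly to the exact one-step map $\varphi_\tau$. Write $e_k \coloneqq \norm{\rho_k-\rho(t_k)}_1$ with $t_k=k\tau$, so that $e_0=0$ since $\rho_0=\rho(0)$.

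The key step is a one-step recursion for $e_k$. By the semigroup property of the exact flow, which holds because Eq.~\eqref{eq:state-dependent-db} has a unique solution under the Lipschitz assumption~\eqref{eq:G-Lipschitz-bound}, we have $\rho(t_{k+1})=\varphi_\tau(\rho(t_k))$. Inserting and subtracting $\varphi_\tau(\rho_k)$ and applying the triangle inequality gives
\begin{equation}
    e_{k+1}
    \le
    \norm{\widetilde\Phi_\tau(\rho_k)-\varphi_\tau(\rho_k)}_1
    +\norm{\varphi_\tau(\rho_k)-\varphi_\tau(\rho(t_k))}_1 .
\end{equation}
We treat the two terms separately.

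\emph{First term.} Since $\rho_k\in\Omega$ by Eq.~\eqref{eq:Omega-containing-trajectories}, the hypothesis~\eqref{eq:exact-solution-difference} bounds it by $C_{\rm step}\tau^2$.

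\emph{Second term.} Both $\rho_k$ and $\rho(t_k)$ lie in $\Omega$, again by Eq.~\eqref{eq:Omega-containing-trajectories}. Definition~\ref{def:Lipschitz-bound} at $t=\tau$ therefore bounds it by $\Lambda(\tau)e_k$.

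Together these give the linear recursion $e_{k+1}\le \Lambda(\tau)e_k + C_{\rm step}\tau^2$.

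Unrolling this recursion by induction from $e_0=0$ gives $e_M\le C_{\rm step}\tau^2\sum_{j=0}^{M-1}\Lambda(\tau)^j$, which is Eq.~\eqref{eq:global-error-exact}. The inductive step uses only that $\Lambda(\tau)\ge0$, so multiplying the inequality for $e_k$ by $\Lambda(\tau)$ preserves its direction.

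The main subtlety is making sure every state to which the Lipschitz bound and the one-step error bound are applied lies in $\Omega$. In this recursion we only compare states at matching grid times, namely the algorithmic state $\rho_k$ and the exact state $\rho(t_k)$, and both belong to $\Omega$ by assumption. This is why the recursion form is preferable to the fan form, which propagates $\varphi_{t}(\rho_k)$ over longer times: the intermediate states $\varphi_{j\tau}(\rho_k)$ that appear there need not lie in $\Omega$. A secondary point is the semigroup identity $\varphi_{t+\tau}=\varphi_\tau\circ\varphi_t$, which follows from uniqueness of solutions to the autonomous Lipschitz ODE, together with the purity preservation shown in Appendix~\ref{appendix:purity-preservation}.
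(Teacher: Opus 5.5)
Your proposal is correct and matches the paper's proof: it uses the same triangle inequality through the intermediate state $\varphi_\tau(\rho_k)$ to get $e_{k+1}\le C_{\rm step}\tau^2+\Lambda(\tau)e_k$, then unrolls from $e_0=0$. Your checks that both compared states lie in $\Omega$ and that $\Lambda(\tau)\ge0$ are details the paper leaves implicit. One small wording point: you open by calling this a Lady Windermere's fan argument, but the proof you actually give is the step-by-step recursion (as you yourself note later), so that opening label is misleading.
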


\begin{proof}
Let $e_k=\norm{\rho_k-\rho(t_k)}_1$, where $t_k=k\tau$.
Since $\rho(t_{k+1})=\varphi_\tau(\rho(t_k))$, the triangle
inequality and Eqs.~\eqref{eq:Lipschitz-bound}
and~\eqref{eq:exact-solution-difference} give
\begin{equation}
    \begin{split}
        e_{k+1}
        &\le
        \norm{\widetilde\Phi_\tau(\rho_k)
            -\varphi_\tau(\rho_k)}_1\\
        &\quad+
        \norm{\varphi_\tau(\rho_k)
            -\varphi_\tau(\rho(t_k))}_1\\
        &\le C_{\rm step}\tau^2+\Lambda(\tau)e_k.
    \end{split}
\end{equation}
Iterating this inequality with $e_0=0$ proves
Eq.~\eqref{eq:global-error-exact}.
\end{proof}

\noindent
The bounds in Sec.~\ref{sec:one-step-error-bounds} allow us to take
$\widetilde\Phi_\tau=\widetilde\Phi^{({\rm Gen})}_\tau$ and
$C_{\rm step}=C_{\rm Gen}+C_{\rm disc}$ for the general implementation, or
$\widetilde\Phi_\tau=\widetilde\Phi^{({\rm NLSE})}_\tau$ and
$C_{\rm step}=C_{\rm NLSE}+C_{\rm disc}$ for the NLSE-specific implementation.

\subsection{Query Complexity of the Initial State Preparation Oracle}
\label{subsec:initial-state-query-complexity}
We now combine the global error bound with the recursive circuit
construction to bound the query complexity with respect to the
initial state preparation oracle $U_0$ and its inverse $U_0^\dagger$.
We consider the following three cases:
the exponential contraction bound $\Lambda(t)=e^{-\lambda t}$,
the nonexpansive bound $\Lambda(t)=1$,
and the exponential expansion bound $\Lambda(t)=e^{\lambda t}$,
with $\lambda>0$ in the exponential cases.
Theorem~\ref{thm:query-complexity-Lipschitz-bounds},
one of the main results of this work,
provides upper bounds on the query complexity
in each of these three cases.

\begin{theorem}[Query complexity under different Lipschitz bounds]
\label{thm:query-complexity-Lipschitz-bounds}
Fix a target evolution time $T>0$ and an error tolerance $\epsilon>0$.
Assume that the hypotheses of Theorem~\ref{thm:global-error}
hold for the choices of $M$ specified below, and adopt its notation
with $\tau=T/M$.
Let the implemented one-step map be
\begin{equation}
    \widetilde\Phi_\tau(\rho)
    =V(\rho,\tau)\rho V(\rho,\tau)^\dagger,
\end{equation}
where $V(\rho,\tau)$ is unitary.
Let $U_0$ be an initial state preparation oracle satisfying
$U_0\ket{0}=\ket{\psi(0)}$, with $\rho_0=\proj{\psi(0)}$.
Define $U_{k+1}=V(\rho_k,\tau)U_k$
for $k\in\{0,1,\ldots,M-1\}$.
Suppose that each $V(\rho_k,\tau)$ admits an ancilla-free
implementation using at most $b$ calls in total to
$U_k$ and $U_k^\dagger$, uniformly in $k$.
Then $U_M$ prepares $\ket{\psi_M}=U_M\ket{0}$ with
$\rho_M=\proj{\psi_M}$ satisfying
$\norm{\rho_M-\rho(T)}_1\le\epsilon$ when
\begin{equation}
    M=
    \begin{dcases}
        \left\lceil\max\left\{
            1,T,\lambda T,
            \frac{2C_{\rm step}T}{\lambda\epsilon}
        \right\}\right\rceil
        & (\Lambda(t)=e^{-\lambda t}),\\
        \left\lceil\max\left\{
            1,T,\frac{C_{\rm step}T^2}{\epsilon}
        \right\}\right\rceil
        & (\Lambda(t)=1),\\
        \left\lceil\max\left\{
            1,T,\lambda T,
            \frac{C_{\rm step}Te^{\lambda T}}
                 {\lambda\epsilon}
        \right\}\right\rceil
        & (\Lambda(t)=e^{\lambda t}),
    \end{dcases}
    \label{eq:step-complexity-stability}
\end{equation}
where $\lambda>0$ in the exponential cases.
The total number $Q_M$ of queries to $U_0$ and $U_0^\dagger$
used to implement $U_M$ satisfies
\begin{equation}
    Q_M
    \le
    (1+b)^M
    =
    \exp\left(M\log(1+b)\right),
    \label{eq:query-complexity-stability}
\end{equation}
where $b$ may depend on the chosen step size and accuracy.
In particular, for the two implementations, we may take
\begin{equation}
    b=
    \begin{cases}
        8+4a_G, & \text{general implementation},\\
        4+2a_H, & \text{NLSE-specific implementation},
    \end{cases}
\end{equation}
where $a_G$ and $a_H$ uniformly bound the total numbers of calls
to $U_k$ and $U_k^\dagger$ required by
$A_G(\rho_k,\pm\sqrt{\tau/2})$ and
$A_H(\rho_k,\pm\tau/2)$, respectively, for either sign at the
accuracies specified in Sec.~\ref{sec:one-step-error-bounds}.
\end{theorem}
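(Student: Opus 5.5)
The proof has two independent parts: an accuracy part, which applies Theorem~\ref{thm:global-error} and bounds the resulting geometric sum in each of the three cases, and a counting part, which unrolls the recursion $U_{k+1}=V(\rho_k,\tau)U_k$.

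\emph{Accuracy.} In every case the choice of $M$ ensures $M\ge T$, so $\tau=T/M\le1$ and Theorem~\ref{thm:global-error} applies. It gives $\norm{\rho_M-\rho(T)}_1\le C_{\rm step}\tau^2 S$ with $S=\sum_{j=0}^{M-1}\Lambda(\tau)^j$. It then suffices to show $C_{\rm step}\tau^2S\le\epsilon$ case by case.

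\begin{itemize}
\item \textbf{Nonexpansive case.} Here $S=M$, so the error is at most $C_{\rm step}\tau^2M=C_{\rm step}T^2/M$. This is $\le\epsilon$ because $M\ge C_{\rm step}T^2/\epsilon$.
\item \textbf{Contractive case.} Use $S\le 1/(1-e^{-\lambda\tau})$. The condition $M\ge\lambda T$ gives $x\coloneqq\lambda\tau\le1$. On $[0,1]$ one has $1-e^{-x}\ge x/2$, since $1-e^{-x}\ge x-x^2/2\ge x/2$. Hence $S\le 2/(\lambda\tau)$, and the error is at most $2C_{\rm step}\tau/\lambda=2C_{\rm step}T/(\lambda M)\le\epsilon$ by the choice of $M$.
\item \textbf{Expansive case.} Use $S=(e^{\lambda\tau M}-1)/(e^{\lambda\tau}-1)\le e^{\lambda T}/(\lambda\tau)$, which follows from $e^{x}-1\ge x$. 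The error is then at most $C_{\rm step}\tau e^{\lambda T}/\lambda=C_{\rm step}Te^{\lambda T}/(\lambda M)\le\epsilon$. In this case $\lambda\tau\le1$ is not actually needed; the condition is kept only for uniformity.
\end{itemize}

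\emph{Query count.} Let $Q_k$ be the number of calls to $U_0$ and $U_0^\dagger$ in the recursively expanded $U_k$, with $Q_0=1$. The circuit $U_{k+1}$ consists of $V(\rho_k,\tau)$, which uses at most $b$ calls to $U_k$ or $U_k^\dagger$, followed by one further $U_k$. Since $U_k^\dagger$ expands into exactly as many oracle calls as $U_k$ (each $U_0$ and $U_0^\dagger$ is swapped), every call costs $Q_k$. This gives $Q_{k+1}\le(1+b)Q_k$ and hence $Q_M\le(1+b)^M$.

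\emph{Specializations.} For the two implementations, the counts from Sec.~\ref{subsec:general-framework} and Sec.~\ref{subsec:one-step-error-NLSE} are:
\begin{itemize}
\item $V_{\rm Gen}$ uses four projector exponentials at two calls each, plus four $A_G$ factors at $a_G$ calls each, so $b=8+4a_G$.
\item $V_{\rm NLSE}$ uses two reflections at two calls each, plus two $A_H$ factors at $a_H$ calls each, so $b=4+2a_H$.
\end{itemize}
In both cases the recursion factor $1+b$ matches the per-step factors $9+4a_G$ and $5+2a_H$ stated earlier.

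The only step needing care is the contractive geometric-sum estimate, specifically checking the inequality $1-e^{-x}\ge x/2$ for $x\in(0,1]$, which is where $M\ge\lambda T$ is used. Everything else is bookkeeping.
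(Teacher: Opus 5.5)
Your proposal is correct and follows the paper's proof essentially step by step. Like the paper, you apply Theorem~\ref{thm:global-error}, bound the geometric sum via $1-e^{-\lambda\tau}\ge\lambda\tau/2$ (using $\lambda\tau\le1$) or $e^{\lambda\tau}-1\ge\lambda\tau$, and unroll $Q_{k+1}\le(1+b)Q_k$; the only point you leave implicit, which the paper states, is that $U_k\ket{0}$ reproduces the iterates $\rho_{k+1}=\widetilde\Phi_\tau(\rho_k)$ by induction, so Theorem~\ref{thm:global-error} applies to $\proj{\psi_M}$.
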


\begin{proof}
The definitions give $\rho_0=\rho(0)$ and
$\rho_{k+1}=\widetilde\Phi_\tau(\rho_k)$.
For $0<\lambda\tau\le1$, we have
$1-e^{-\lambda\tau}\ge\lambda\tau/2$, while
$e^{\lambda\tau}-1\ge\lambda\tau$ holds for all $\tau>0$.
Theorem~\ref{thm:global-error} therefore gives
\begin{equation}
    \norm{\rho_M-\rho(T)}_1
    \le
    \begin{dcases}
        \dfrac{2C_{\rm step}}{\lambda}\tau
        & (\Lambda(t)=e^{-\lambda t}),\\
        C_{\rm step}T\tau
        & (\Lambda(t)=1),\\
        \dfrac{C_{\rm step}e^{\lambda T}}{\lambda}\tau
        & (\Lambda(t)=e^{\lambda t}).
    \end{dcases}
\end{equation}
In each maximum in Eq.~\eqref{eq:step-complexity-stability},
the entries $1$ and $T$ ensure $M\ge1$ and $\tau\le1$,
respectively.
The entry $\lambda T$ ensures $\lambda\tau\le1$ in both
exponential cases.
The final entry makes the corresponding error bound at most
$\epsilon$.
The recursion $U_{k+1}=V(\rho_k,\tau)U_k$ uses at most $b$
calls to $U_k$ and $U_k^\dagger$ within $V(\rho_k,\tau)$ and
one additional call to $U_k$.
Thus, $Q_{k+1}\le(1+b)Q_k$, which, together with $Q_0=1$,
gives Eq.~\eqref{eq:query-complexity-stability}.
The stated choices of $b$ follow from the circuit counts in
Secs.~\ref{sec:proposed-framework} and~\ref{subsec:one-step-error-NLSE}.
\end{proof}

\noindent
Under the hypotheses of Theorem~\ref{thm:query-complexity-Lipschitz-bounds},
assume $C_{\rm step}=O(1)$, $T\ge1$, and $0<\epsilon\le1$,
with fixed $\lambda>0$ in the exponential cases.
Simplifying the number of time steps in
Eq.~\eqref{eq:step-complexity-stability} under these assumptions
and substituting the resulting expressions into
Eq.~\eqref{eq:query-complexity-stability}
yields the following corollary.

\begin{corollary}
\label{cor:query-complexity-simplified}
Under the above assumptions, with $M$ chosen as in
Eq.~\eqref{eq:step-complexity-stability}, the query complexity satisfies
\begin{equation}
    Q_M \le
    \begin{dcases}
        \exp\left(
            O\left(
                \frac{T}{\epsilon}\log(1+b)
            \right)
        \right)
        & (\Lambda(t)=e^{-\lambda t}),\\
        \exp\left(
            O\left(
                \frac{T^2}{\epsilon}\log(1+b)
            \right)
        \right)
        & (\Lambda(t)=1),\\
        \exp\left(
            O\left(
                \frac{Te^{\lambda T}}{\epsilon}\log(1+b)
            \right)
        \right)
        & (\Lambda(t)=e^{\lambda t}).
    \end{dcases}
\end{equation}
\end{corollary}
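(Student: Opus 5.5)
The plan is to substitute the three choices of $M$ from Eq.~\eqref{eq:step-complexity-stability} into the bound $Q_M\le\exp(M\log(1+b))$ of Theorem~\ref{thm:query-complexity-Lipschitz-bounds}. It then suffices to show that, under $C_{\rm step}=O(1)$, $T\ge1$, $0<\epsilon\le1$ and fixed $\lambda>0$, each $M$ is bounded by a constant multiple of the quantity appearing in the corresponding exponent. The only tool needed is $\lceil x\rceil\le x+1\le 2\max\{1,x\}$, together with the fact that a maximum of finitely many nonnegative terms is at most their sum.

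\textbf{Contractive case.} We need $M=O(T/\epsilon)$. Since $T\ge1$ and $\epsilon\le1$, each of the entries $1$, $T$ and $\lambda T$ is at most $\max\{1,\lambda\}\,T/\epsilon$. The last entry equals $(2C_{\rm step}/\lambda)\,T/\epsilon$. Hence the maximum is at most $c\,T/\epsilon$ with $c=\max\{1,\lambda,2C_{\rm step}/\lambda\}=O(1)$. Because $T/\epsilon\ge1$, the ceiling adds at most one more multiple of $T/\epsilon$, so $M\le 2c\,T/\epsilon$.

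\textbf{Nonexpansive case.} The same argument applies. The entries $1$ and $T$ are each at most $T^2/\epsilon$, because $T\ge1$ and $\epsilon\le1$. The last entry is $C_{\rm step}T^2/\epsilon$. So $M\le 2\max\{1,C_{\rm step}\}\,T^2/\epsilon$.

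\textbf{Expansive case.} Here we use $e^{\lambda T}\ge1$. This gives $1,\ T,\ \lambda T\le\max\{1,\lambda\}\,Te^{\lambda T}/\epsilon$. The last entry is $(C_{\rm step}/\lambda)\,Te^{\lambda T}/\epsilon$. So $M=O(Te^{\lambda T}/\epsilon)$.

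In each case we then multiply the bound on $M$ by $\log(1+b)$ and exponentiate. This gives the three stated bounds, with the implied constants depending only on $\lambda$ and $C_{\rm step}$ and not on $b$. This independence holds even though $b$ may depend on $\tau$, because the factor $\log(1+b)$ is carried through the computation unchanged.

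The only step that needs any care is absorbing the ceiling and the lower-order entries $1$, $T$ and $\lambda T$ into the leading term. This relies on the standing assumptions $T\ge1$ and $\epsilon\le1$, which ensure that the leading quantity is at least $1$. Beyond that, the argument is routine bookkeeping, and I expect no genuine obstacle.
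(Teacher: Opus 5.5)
Your proposal is correct and follows the paper's own route: the paper likewise obtains the corollary by bounding each choice of $M$ in Eq.~\eqref{eq:step-complexity-stability} by a constant multiple of its leading term, using $T\ge1$, $0<\epsilon\le1$, $C_{\rm step}=O(1)$ and fixed $\lambda$, and substituting into $Q_M\le\exp(M\log(1+b))$. You simply make explicit the step the paper leaves implicit, namely absorbing the ceiling and the entries $1$, $T$, $\lambda T$ into the leading term, and your constants and your handling of the $\tau$-dependent $b$ are correct.
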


To compare the dependence on $T$ alone, we fix the accuracy
and all model-dependent constants, including the positive
exponential rates, and assume $b=O(1)$ uniformly at the
required step sizes and accuracies.
Applying Corollary~\ref{cor:query-complexity-simplified}
with the exponential expansion bound $\Lambda(t)=e^{\lambda_{\rm exp}t}$ from
Lemma~\ref{lem:generic-exponential-solution-bound},
where $\lambda_{\rm exp}=2(M_G+L_G)$, 
yields an initial state preparation oracle query upper bound
of $\exp(\exp(O(T)))$ in the general case.
The same doubly exponential dependence on $T$ in the query upper bound 
also follows by 
applying the DB-QITE constructions and error-accumulation analyses developed in previous work~\cite{Gluza2026DBQITE,Wright_2026_DBTFD}
to the present state-dependent problem under our assumptions.

By introducing the trace-norm Lipschitz bound $\Lambda(t)$
in Definition~\ref{def:Lipschitz-bound} into the global error
and query analysis, we can improve this general upper bound
when tighter bounds are available for the target nonlinear
dynamics.
In particular, if the dynamics admits either
the exponential contraction bound $\Lambda(t)=e^{-\lambda t}$
with $\lambda>0$ or the nonexpansive bound $\Lambda(t)=1$,
Corollary~\ref{cor:query-complexity-simplified}
gives initial state preparation oracle query upper bounds
of $\exp(O(T))$ or $\exp(O(T^2))$, respectively.
Thus, our analysis classifies the initial state
preparation oracle query upper bounds according to
$\Lambda(t)$ and identifies conditions on the target
nonlinear dynamics under which the general doubly
exponential bound can be improved.
As an example, in Appendix~\ref{appendix:ITE-contraction},
we establish an exponential contraction bound on a
specified subset $\Omega$ for the normalized imaginary-time
evolution targeted by DB-QITE~\cite{Gluza2026DBQITE},
assuming a time-independent Hamiltonian $H$
with a unique ground state.
There, for $\rho_0\in\Omega$ and provided that the states
generated by the algorithm remain in $\Omega$ and the
remaining conditions stated in the appendix hold,
we apply Corollary~\ref{cor:query-complexity-simplified}
to obtain an initial state preparation oracle query
upper bound $\exp(O(T))$ at fixed accuracy and
model-dependent constants.
This improves on our general doubly exponential bound
$\exp(\exp(O(T)))$.

We have derived error bounds and query complexity upper bounds
for the proposed framework.
In the next section, we apply this framework to the discrete GP equation.
We briefly describe an ancilla-free implementation and then investigate
the initial state preparation oracle query complexity
of a single-qubit instance.

\section{Application to the GP Equation}
\label{sec:application}
To examine how these stability-dependent query upper bounds compare with the optimal worst-case query complexity,
we apply the framework of Secs.~\ref{sec:proposed-framework}--\ref{sec:global-analysis}
to the discrete GP equation 
under suitable oracle-access assumptions.
We first summarize an ancilla-free implementation on a grid of
$N=2^n$ points.
The number of queries to the initial state preparation oracle is
bounded by a polynomial in $n$ at fixed evolution time and accuracy,
with model parameters bounded independently of $n$.
We then analyze the query complexity of a single-qubit instance
for a specified initial state family.
For this family, with fixed nonlinearity strength $g>0$ and
trace-norm accuracy $0<\epsilon<1$, we derive a worst-case query
lower bound of $\Omega(e^{gT/2})$ from the amplification of
distinguishability between initially close states~\cite{an_quantum_2025}.
Using trajectory-dependent error estimates, we then derive a
worst-case query upper bound of $\exp(O(T^2))$ for recursive
simulation along the solution trajectories.
This improves on the general doubly exponential bound but
does not coincide with the lower bound.
To determine whether the lower bound can be attained, we further
construct a separate ancilla-free recursive circuit tailored to
the exact solutions and relate its construction to the double-bracket
framework.
Its worst-case query upper bound is $O(e^{gT/2})$, 
achieving the optimal query complexity up to a constant factor at fixed accuracy.

\subsection{An Ancilla-Free Implementation of the Discrete GP Equation}
In this subsection, we outline an ancilla-free implementation of
a discrete GP equation whose initial state preparation oracle
query complexity is polylogarithmic in the number of grid points $N$
under the assumptions stated below.
On a grid with $N=2^n$ points, 
we consider a discrete GP equation
of the form
\begin{equation}
    i\frac{d}{dt}\ket{\psi(t)}
    =
    \left(K+g\cD(\rho(t))\right)\ket{\psi(t)},
    \label{eq:discrete-GP}
\end{equation}
where $K$ is a state-independent Hamiltonian,
$g\in\mathbb{R}$ is the nonlinearity strength,
$\rho(t)=\proj{\psi(t)}$, and
\begin{equation}
    \cD(\rho)
    =
    \sum_{j=0}^{N-1}
    \bra{j}\rho\ket{j}\ket{j}\bra{j}
\end{equation}
is the dephasing map.

We summarize below the main steps of the implementation
and the query-complexity analysis; full details are provided
in Appendix~\ref{appendix:discrete-GP}.
To apply the NLSE-specific circuit of
Sec.~\ref{subsec:one-step-error-NLSE} to Eq.~\eqref{eq:discrete-GP},
we need to approximate the exponentials of
$H(\rho)=K+g\cD(\rho)$.
Assuming ancilla-free access to $e^{iK\theta}$ for arbitrary
$\theta\in\mathbb{R}$, the first-order Lie--Trotter formula
reduces this task to approximating
$e^{\pm ig\cD(\rho)\tau/2}$.
Directly applying this formula to the full dephasing sum
over $N=2^n$ conjugations by tensor products of $\id$ and $Z$
uses $O(N)$ calls in total to the current state
preparation circuit and its inverse.
The resulting recursive construction therefore yields an initial
state preparation oracle query upper bound that is exponential
in $n$ even for a fixed number of time steps, and hence does not
establish a polylogarithmic dependence on $N =2^n$.
To avoid this exponential dependence on $n$, we approximate
the dephasing map using a small-bias set~\cite{ta-shma_explicit_2017}.
At fixed step size $\tau$ and model parameters, the resulting
one-step implementation requires $O(n)$ calls in total to the
current state preparation circuit and its inverse.
For fixed $T$ and $\epsilon$, with $\lvert g\rvert$ and
$\norm{K}_{\rm op}$ bounded independently of $n$,
the resulting recursive construction yields an initial
state preparation oracle query upper bound that is
polynomial in $n$, i.e., polylogarithmic in $N$.

\subsection{Problem Setup for the Query Complexity Analysis of Single-Qubit GP Simulation}
To examine the relation between nonlinear amplification of
distinguishability and simulation query complexity in an
analytically tractable setting, we now consider a single-qubit instance.
The initial states are specified by a known parameter
$0<\xi<1$, and the target evolution time $T>0$ is given
independently of $\xi$.
For each fixed pair $(\xi,T)$, one circuit must simulate
either of two specified initial states, with the choice
supplied only through the initial state preparation oracle.
We evaluate the worst-case query complexity over $\xi$
at each target evolution time $T$.

Choosing $K=gX/4$, where $g>0$ is known and $X$ and $Z$
denote the Pauli $X$ and $Z$ operators, respectively,
and using $\cD(\rho)=(\rho+Z\rho Z)/2$, we obtain
\begin{equation}
    i\frac{d}{dt}\ket{\psi(t)}
    =
    \left[
        \frac{g}{4}X
        +\frac{g}{2}\left(\rho(t)+Z\rho(t)Z\right)
    \right]\ket{\psi(t)}.
    \label{eq:single-qubit-GP}
\end{equation}
Writing $\rho=(\id+xX+yY+zZ)/2$, where $r=(x,y,z)$ is its
Bloch vector and $Y$ is the Pauli $Y$ operator, gives
\begin{align}
    \frac{d}{dt}x&=-gyz, \\
    \frac{d}{dt}y&=gz\left(x-\frac{1}{2}\right),\\
    \frac{d}{dt}z&=\frac{gy}{2}.
\end{align}
For each $0<\xi<1$, the two allowed initial states are
represented by the Bloch vectors
\begin{equation}
    r_\pm(0)=
    \left(1-\xi,\,
    \pm\sqrt{\xi(1-\xi)},\,
    \pm\sqrt{\xi}\right),
    \label{eq:single-qubit-GP-initial-bloch}
\end{equation}
with the same sign in the last two components.
In the Bloch representation, the corresponding exact solutions are
\begin{equation}
    r_\pm(t)=
    \left(
        \tanh^2 a(t),\,
        \pm\frac{\tanh a(t)}{\operatorname{cosh}a(t)},\,
        \pm\frac{1}{\operatorname{cosh}a(t)}
    \right),
    \label{eq:single-qubit-GP-solution-bloch}
\end{equation}
for all $t\ge0$, 
where 
$a(t)=a_0-gt/2$ with $a_0=\operatorname{artanh}\sqrt{1-\xi}$.
$\ket{\psi_\pm(t)}$ denote the normalized state vectors corresponding to
$r_\pm(t)$.
Their density operators are given by
$\rho_\pm(t)=\proj{\psi_\pm(t)}$.
The two initial states approach one another as $\xi\to0$.
Their solutions become orthogonal at $t=2a_0/g$,
and both converge as $t\to\infty$ to the stationary state
\begin{equation}
    \rho_{\rm sta}
    \coloneqq\proj{+},
\end{equation}
where $\ket{+}=(\ket{0}+\ket{1})/\sqrt2$.
We use their separation to derive a query lower bound
and their convergence to choose a stopping time
for the GP circuit.

The parameters $g,\xi,T$, and the desired trace-norm accuracy
$0 < \epsilon < 1$ are known when the circuit is designed.
The given unitary oracle $U_0$ prepares one of the two
states in Eq.~\eqref{eq:single-qubit-GP-initial-bloch},
but which candidate was prepared is not supplied as classical information.
The algorithm must produce a normalized output state
approximating the corresponding $\rho_\pm(T)$ within
trace-norm error $\epsilon$ 
for every unitary oracle $U_0$ satisfying
$U_0\ket{0}=\ket{\psi_+(0)}$ or
$U_0\ket{0}=\ket{\psi_-(0)}$.
For fixed $g,\xi,T,\epsilon$, the non-query gates and their
parameters, as well as the number and placement of oracle
queries, are identical for both candidate initial states
and independent of the unspecified action of $U_0$.
They may depend on the known parameters, so a different
circuit may be designed for a different value of $\xi$.

For an algorithm $\mathcal A$ satisfying these requirements,
let $Q_{\mathcal A}(g,T,\xi,\epsilon)$ be its query count,
maximized over all unitary oracles $U_0$ satisfying
$U_0\ket{0}=\ket{\psi_+(0)}$ or
$U_0\ket{0}=\ket{\psi_-(0)}$, and define
\begin{equation}
    Q_{\mathcal A}^{\rm wc}(g,T,\epsilon)
    \coloneqq\sup_{0<\xi<1}
       Q_{\mathcal A}(g,T,\xi,\epsilon).
\end{equation}
All upper bounds uniform in $\xi$ refer to this quantity.

\subsection{Worst-Case Query Lower Bound}
\label{subsec:single-qubit-GP-lower}
We establish a worst-case initial state preparation oracle
query lower bound for the single-qubit GP simulation problem
defined above.
To derive this bound, we use the following specialization
of Theorem~4 in Ref.~\cite{an_quantum_2025} to deterministic
algorithms with access to $U_0$ and $U_0^\dagger$.
\begin{theorem}[Theorem 4 in Ref.~\cite{an_quantum_2025}]
\label{thm:state-amplifier-query-lower-bound}
Let $\ket{\psi}$ and $\ket{\phi}$ be normalized states with
real overlap $\braket{\psi|\phi}\ge1-\eta$, where $0<\eta<1$.
Suppose a deterministic algorithm receives a unitary oracle
$U_0$ satisfying
$U_0\ket{0}=\ket{\psi}$ or $U_0\ket{0}=\ket{\phi}$.
For the fixed state pair and fixed known classical parameters,
the non-query gates and their parameters are the same for
all such oracles, as are the number and placement of queries
and whether each query applies $U_0$ or $U_0^\dagger$.
For every pair of oracles preparing the respective states,
suppose the corresponding output density operators are
separated by $\Omega(1)$ in trace norm, with a constant
independent of $\eta$ and the choice of oracles.
Then the algorithm requires $\Omega(\eta^{-1/2})$ queries
to $U_0$ or $U_0^\dagger$ in the worst case over all such
unitary oracles.
\end{theorem}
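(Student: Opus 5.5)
The plan is a hybrid argument over oracle queries, in the style of the Grover/BBBV lower bound. The idea is that a single query to $U_0$ or $U_0^\dagger$ can only change the running state by an amount controlled by the distance between the two candidate oracles. If the oracles are chosen so that this distance is $O(\sqrt{\eta})$, a $q$-query algorithm separates its two outputs by at most $O(q\sqrt{\eta})$. Requiring an $\Omega(1)$ separation then forces $q=\Omega(\eta^{-1/2})$.

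\textbf{Step 1: choose close oracles.} Since the hypothesis must hold for \emph{every} pair of oracles, it suffices to exhibit one pair $U_\psi,U_\phi$ with $U_\psi\ket{0}=\ket{\psi}$, $U_\phi\ket{0}=\ket{\phi}$ and $\norm{U_\psi-U_\phi}_{\rm op}=O(\sqrt{\eta})$. Fix any $U_\psi$ and set $U_\phi=RU_\psi$, where $R$ is the rotation in $\mathrm{span}\{\ket{\psi},\ket{\phi}\}$ taking $\ket{\psi}$ to $\ket{\phi}$ and acting as the identity on the orthogonal complement. Because the overlap $\braket{\psi|\phi}$ is real, $R$ is a plane rotation by the angle $\theta=\arccos\braket{\psi|\phi}$. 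Its eigenvalues are $e^{\pm i\theta}$ and $1$, so
\[
\norm{R-\id}_{\rm op}=\lvert e^{i\theta}-1\rvert=\sqrt{2(1-\cos\theta)}\le\sqrt{2\eta}.
\]
The same bound holds for $\norm{U_\psi^\dagger-U_\phi^\dagger}_{\rm op}$, since the operator norm is invariant under taking adjoints.

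\textbf{Step 2: hybrid argument.} The circuit has the form $G_qO_qG_{q-1}\cdots O_1G_0\ket{0}$, where each $O_j\in\{U_0,U_0^\dagger\}$ and the non-query gates $G_j$ do not depend on which oracle is supplied. Let $H_m$ be the circuit that uses $U_\phi$ in the first $m$ queries and $U_\psi$ in the rest. The outputs of consecutive hybrids $H_m$ and $H_{m+1}$ differ in a single query, so their output vectors differ in Euclidean norm by at most $\sqrt{2\eta}$. Telescoping over $m$, the final pure states $\ket{\Psi_\psi}$ and $\ket{\Psi_\phi}$ satisfy $\norm{\ket{\Psi_\psi}-\ket{\Psi_\phi}}_2\le q\sqrt{2\eta}$.

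For pure states, the trace-norm distance of the projectors is at most $2\norm{\ket{a}-\ket{b}}_2$. If the algorithm ends with a trace-preserving operation (for example, tracing out work registers), monotonicity of the trace norm preserves this bound. The output density operators are therefore within $2\sqrt{2}\,q\sqrt{\eta}$ of each other.

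\textbf{Step 3: conclude.} The hypothesis gives an output separation of at least $c>0$ with $c$ independent of $\eta$, so $c\le 2\sqrt2\,q\sqrt{\eta}$, that is, $q\ge c/(2\sqrt2)\,\eta^{-1/2}$. This count is the worst case over oracles because the query count is the same for all of them.

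The main obstacle is Step 1: making sure a valid pair of close oracles exists under the stated model. The realness of the overlap is exactly what makes $R$ a rotation with $\norm{R-\id}_{\rm op}=O(\sqrt{\eta})$; with a complex overlap one would first have to absorb its phase into $\ket{\phi}$. The remaining steps are routine.
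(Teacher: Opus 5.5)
Your proof is correct and complete. The paper does not prove this statement itself: it imports it from Ref.~\cite{an_quantum_2025}, specialized to deterministic algorithms with access to $U_0$ and $U_0^\dagger$. There is therefore no in-paper argument to compare with, and your hybrid argument gives a self-contained proof of exactly the form used to derive Eq.~\eqref{eq:single-qubit-GP-query-lower-bound}.

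Each hypothesis is used where it should be:
\begin{itemize}
\item The quantifier ``for every pair of oracles'' is what lets you choose the close pair $U_\phi=RU_\psi$. A separation hypothesis about one fixed, possibly far-apart pair would not suffice, since such oracles can differ on other inputs and be told apart with a single query.
\item The fixed number, placement, and type ($U_0$ or $U_0^\dagger$) of the queries makes the hybrids well defined and the query count independent of the oracle. The bound per hybrid step also holds when queries act on a subregister, because $\norm{(U_\phi-U_\psi)\otimes\id}_{\rm op}=\norm{U_\phi-U_\psi}_{\rm op}$.
\item The realness of $\braket{\psi|\phi}$ gives $\norm{R-\id}_{\rm op}=\sqrt{2(1-\braket{\psi|\phi})}\le\sqrt{2\eta}$.
\end{itemize}

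Your argument also gives an explicit constant, $q\ge c\,\eta^{-1/2}/(2\sqrt2)$, where $c$ is the guaranteed output separation.
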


Fix a target evolution time $T>0$ and set
\begin{equation}
    \eta=\frac{1}{\operatorname{cosh}^2(gT/2)}.
\end{equation}
Consider the pair in
Eq.~\eqref{eq:single-qubit-GP-initial-bloch} with $\xi=\eta$,
choosing the phases of the initial state vectors so that
their overlap is real and nonnegative.
Then
\begin{equation}
    \braket{\psi_+(0)|\psi_-(0)}=1-\eta.
\end{equation}
This choice gives $a_0=gT/2$ and hence
$\rho_+(T)=\proj{0}$ and $\rho_-(T)=\proj{1}$.
For a fixed accuracy $0<\epsilon<1$, output density operators
satisfying
\begin{equation}
    \norm{\widetilde\rho_\pm(T)-\rho_\pm(T)}_1
    \le\epsilon
\end{equation}
obey
$\norm{\widetilde\rho_+(T)-\widetilde\rho_-(T)}_1
\ge2-2\epsilon>0$.
Theorem~\ref{thm:state-amplifier-query-lower-bound}
therefore gives
\begin{equation}
    Q_{\mathcal A}^{\rm wc}(g,T,\epsilon)
    \ge Q_{\mathcal A}(g,T,\eta,\epsilon)
    =\Omega\!\left(\cosh\frac{gT}{2}\right)
    =\Omega\!\left(e^{gT/2}\right).
    \label{eq:single-qubit-GP-query-lower-bound}
\end{equation}
The choice $\xi=\eta$ selects instances for this worst-case
lower bound without restricting the independent inputs
$\xi$ and $T$ of the simulation problem.

\subsection{Trajectory Simulation and Query Upper Bound}
\label{subsec:single-qubit-GP-error-upper}
We apply the NLSE-specific circuit of
Sec.~\ref{subsec:one-step-error-NLSE} to
Eq.~\eqref{eq:single-qubit-GP} to approximate the solution
trajectory starting from the given initial state.
We derive an initial state preparation oracle query upper
bound uniform over $0<\xi<1$ for every independently
specified target evolution time $T>0$.

For this single-qubit instance, the dephasing map has the exact
representation $\cD(\rho)=(\rho+Z\rho Z)/2$.
Applying the first-order Lie--Trotter formula, we define an
ancilla-free approximation to
$e^{\pm i(K+g\cD(\rho))\tau/2}$ by
\begin{equation}
    A_H(\rho,\pm\tau/2)
    =e^{\pm i(g\tau/8)X}
     e^{\pm i(g\tau/4)\rho}
     Z e^{\pm i(g\tau/4)\rho}Z.
    \label{eq:single-qubit-GP-H-step}
\end{equation}
Using this implementation in Eq.~\eqref{eq:NLSE-one-step},
define
\begin{equation}
    \widetilde\Phi_\tau^{({\rm GP})}(\rho)
    \coloneqq
    V_{\rm NLSE}(\rho,\tau)\rho
    V_{\rm NLSE}(\rho,\tau)^\dagger.
    \label{eq:single-qubit-GP-implemented-map}
\end{equation}

Lemma~\ref{lem:single-qubit-GP-relative-difference} gives 
an exponential expansion bound
and bounds the one-step error in terms of the distance
from $\rho_{\rm sta}$.
\begin{lemma}
\label{lem:single-qubit-GP-relative-difference}
There exist constants $C,\Delta_0>0$, independent of $g$, $T$,
and $\xi$, such that, for every $\tau$ satisfying
$0<g\tau\le\Delta_0$ and all pure single-qubit states
$\rho,\sigma$,
\begin{align}
    \norm{\varphi_\tau(\rho)-\varphi_\tau(\sigma)}_1
    &\le e^{g\tau/2}\norm{\rho-\sigma}_1,
    \label{eq:single-qubit-GP-solution-bound}\\
    \norm{\widetilde\Phi_\tau^{({\rm GP})}(\rho)
        -\varphi_\tau(\rho)}_1
    &\le C(g\tau)^2\norm{\rho-\rho_{\rm sta}}_1.
    \label{eq:single-qubit-GP-relative-difference}
\end{align}
The first inequality holds for all $\tau\ge0$.
\end{lemma}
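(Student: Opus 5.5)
The plan is to work entirely in the Bloch representation. For pure single-qubit states $\rho=(\id+r\cdot\sigma)/2$ and $\sigma=(\id+s\cdot\sigma)/2$, we have $\norm{\rho-\sigma}_1=|r-s|$, the Euclidean distance between the Bloch vectors. The exact flow $\varphi_t$ is generated by the polynomial vector field
\begin{equation}
f(r)=g\bigl(-yz,\;z(x-\tfrac12),\;y/2\bigr).
\end{equation}
The key observation is that everything depends on $g$ and $\tau$ only through $u\coloneqq g\tau$. This holds because $f$ is $g$ times a $g$-independent field, and every exponent in $V_{\rm NLSE}$ and $A_H$ in Eq.~\eqref{eq:single-qubit-GP-H-step} is $g\tau$ times a $g$-independent operator. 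This scaling is what makes the constants $C,\Delta_0$ independent of $g,T,\xi$.

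\textbf{First inequality.} I would bound the growth of $\delta(t)=r(t)-s(t)$ directly, which works for all $\tau\ge0$. The Jacobian of $f$ is
\begin{equation}
J(r)=g\begin{pmatrix}0&-z&-y\\ z&0&x-\tfrac12\\ 0&\tfrac12&0\end{pmatrix},
\qquad
\tfrac12(J+J^{\mathsf T})=\tfrac g2\begin{pmatrix}0&0&-y\\0&0&x\\-y&x&0\end{pmatrix}.
\end{equation}
The symmetric part has eigenvalues $0,\pm\tfrac g2\sqrt{x^2+y^2}$, and these are at most $g/2$ on the closed unit ball. Since the ball is convex, I can write
\begin{equation}
\tfrac{d}{dt}\tfrac12|\delta|^2=\int_0^1\delta^{\mathsf T}J\bigl(s+\theta\delta\bigr)\delta\,d\theta\le\tfrac g2|\delta|^2 .
\end{equation}
Grönwall then gives $|\delta(\tau)|\le e^{g\tau/2}|\delta(0)|$, which is Eq.~\eqref{eq:single-qubit-GP-solution-bound}.

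\textbf{Second inequality.} Define the one-step error map $E_u(r)\coloneqq\widetilde\Phi^{({\rm GP})}_\tau(\rho)-\varphi_\tau(\rho)$, viewed as a $2\times2$ matrix-valued function of $(r,u)$ on $S^2\times[0,\Delta_0]$. Its real-analyticity follows because the ODE is polynomial and $V_{\rm NLSE}$ is a product of exponentials of operators affine in $\rho$. I would carry out three steps.
\begin{enumerate}
\item Show that $\rho_{\rm sta}=\proj{+}$ is a common fixed point, so that $E_u(r_{\rm sta})=0$ for all $u$.
  \begin{itemize}
  \item For the exact flow, $f(1,0,0)=0$.
  \item For the circuit, at $\rho_{\rm sta}$ we have $\rho+Z\rho Z=\id$, so $H(\rho_{\rm sta})=gX/4+g\id/2$. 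The factors $e^{\pm i g\tau X/8}$, $e^{\pm i(g\tau/4)\proj{+}}$, $Ze^{\pm i(g\tau/4)\proj{+}}Z=e^{\pm i(g\tau/4)\proj{-}}$ and $R_{\rho_{\rm sta}}$ all commute with $\proj{+}$. Hence $V_{\rm NLSE}$ leaves $\rho_{\rm sta}$ invariant.
  \end{itemize}
\item Show that $E_0=0$ and $\partial_uE_u|_{u=0}=0$ for every $r$. The first holds trivially. The second is first-order consistency: the derivative at $u=0$ of both $W$-conjugation and the exact flow is the same commutator $\comm{\comm{\rho}{G}}{\rho}/g$, and the Trotter and $A_H$ errors enter only at order $u^2$. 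This is implicit in the $O(\tau^2)$ bounds of Sec.~\ref{subsec:one-step-error-NLSE} and Lemma~\ref{lem:discretization-error}.
\item Combine by Taylor expansion. Taylor's formula in $u$ gives
\begin{equation}
E_u(r)=u^2\int_0^1(1-\theta)\,\partial_u^2E_{\theta u}(r)\,d\theta .
\end{equation}
Subtracting the same expression at $r_{\rm sta}$ and applying the mean value theorem in $r$ gives
\begin{equation}
\norm{E_u(r)}_1\le u^2\sup_{S^2\times[0,\Delta_0]}\norm{\nabla_r\partial_u^2E}\,|r-r_{\rm sta}| .
\end{equation}
To apply the mean value theorem, I would extend $E$ smoothly to a neighbourhood of the sphere, or integrate along a great-circle arc, at the cost of a factor $\pi/2$. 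The supremum is a finite absolute constant by compactness and $g$-independence, so this is Eq.~\eqref{eq:single-qubit-GP-relative-difference}.
\end{enumerate}

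I expect the main obstacle to be step~2, namely confirming cleanly that the $O(u)$ terms of the implemented map and the exact flow agree for every pure $\rho$. Concretely, this means checking that $\tfrac{d}{du}V_{\rm NLSE}|_{u=0}=-\tfrac i2\bigl(H-R_\rho HR_\rho\bigr)/g$ once the Lie--Trotter and $A_H$ factors are expanded. I would do this by differentiating the product in Eq.~\eqref{eq:NLSE-one-step} term by term at $u=0$, using $R_\rho^2=\id$. The fixed-point computation and compactness arguments are routine by comparison. Finally, $\Delta_0$ can be taken as any fixed constant, say $1$, since the only restriction is to have a compact $u$-interval.
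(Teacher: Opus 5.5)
Your proposal is correct and follows essentially the same route as the paper: a Bloch-coordinate Jacobian bound plus Gr\"onwall gives the first inequality. For the second, you use the common fixed point $\rho_{\rm sta}$, first-order agreement at $h=g\tau=0$, the integral Taylor remainder in $h$, and a compactness bound on its derivative along the great-circle arc from $\rho_{\rm sta}$, with the factor $\pi/2$ coming from $2\sin(\theta/2)\ge2\theta/\pi$. Of your two options for that last step, the arc is the clean one, since a straight-line mean value argument needs the remainder defined on the whole ball, not just a neighbourhood of the sphere, because the chord to $\rho_{\rm sta}$ can pass near the origin.
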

\begin{proof}
See Appendix~\ref{appendix:proof-single-qubit-GP-relative-difference}.
\end{proof}

\noindent
Using $\norm{\rho-\rho_{\rm sta}}_1\le2$ in
Eq.~\eqref{eq:single-qubit-GP-relative-difference} gives a uniform
one-step error bound of $2C(g\tau)^2$.
For $0<\tau\le\min\{1,\Delta_0/g\}$, we apply the global
error analysis of Sec.~\ref{sec:global-analysis} with 
the exponential expansion bound
$\Lambda(t)=e^{gt/2}$ and $b=O(1)$.
At fixed $g>0$ and accuracy, this gives an initial state
preparation oracle query upper bound of
$\exp(\exp(O(T)))$ for sufficiently large $T$,
as in Corollary~\ref{cor:query-complexity-simplified}.

We obtain a tighter bound by retaining the dependence of
the one-step error on $\norm{\rho-\rho_{\rm sta}}_1$ and
using the exact solution trajectories to estimate the
accumulated error and choose the stopping time.

\begin{theorem}[Query upper bound from trajectory-dependent error analysis]
\label{thm:single-qubit-GP-query-upper-bound}
Assume access to an initial state preparation oracle $U_0$
satisfying
$U_0\ket{0}=\ket{\psi_+(0)}$ or
$U_0\ket{0}=\ket{\psi_-(0)}$
and its inverse $U_0^\dagger$. 
Then there exists a deterministic, fully coherent,
ancilla-free algorithm $\mathcal A$ based on the
implemented one-step map in
Eq.~\eqref{eq:single-qubit-GP-implemented-map}
that satisfies the simulation requirements stated above
for all $g>0$, $T>0$, $0<\xi<1$, and $0<\epsilon<1$.
Its worst-case initial state preparation oracle query
count satisfies
\begin{equation}
    Q_{\mathcal A}^{\rm wc}(g,T,\epsilon)
    \le\exp\left[O\left(
        1+gT+\frac{(gT)^2}{\epsilon^2}
    \right)\right].
    \label{eq:single-qubit-GP-query-upper-bound}
\end{equation}
The implied constant is independent of $g,T,\xi$, and
$\epsilon$.
\end{theorem}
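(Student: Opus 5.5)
Our plan is to run the recursive NLSE-specific circuit with the one-step map $\widetilde\Phi^{({\rm GP})}_\tau$, but only up to a stopping time $T'=\min\{T,T_{\rm stop}\}$, where $T_{\rm stop}$ depends only on the known $(g,\xi,\epsilon)$. After time $T'$ we output the current state, or replace it by $\proj{+}$ when that is already $\epsilon$-close. Both branches use the same circuit, so the oracle-independence requirement holds automatically.

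\textbf{Step 1: trajectory-weighted error recursion.} Let $e_k=\norm{\rho_k-\rho(t_k)}_1$. As in Theorem~\ref{thm:global-error}, but using Eq.~\eqref{eq:single-qubit-GP-relative-difference} and the triangle inequality $\norm{\rho_k-\rho_{\rm sta}}_1\le e_k+\norm{\rho(t_k)-\rho_{\rm sta}}_1$, we get
\begin{equation}
e_{k+1}\le \bigl(e^{g\tau/2}+C(g\tau)^2\bigr)e_k + C(g\tau)^2 d_k,
\end{equation}
where $d_k\coloneqq\norm{\rho_\pm(t_k)-\rho_{\rm sta}}_1$. From Eq.~\eqref{eq:single-qubit-GP-solution-bloch}, $d_k$ equals the Bloch distance $\bigl((1-\tanh^2a)^2+\cosh^{-2}a\bigr)^{1/2}=O(1/\cosh a(t_k))$. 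Hence $d_k\le 2$ always, and $d_k=O(e^{-\lvert a_0-gt_k/2\rvert})$. Unrolling the recursion gives
\begin{equation}
e_M\le C(g\tau)^2\sum_k e^{(g/2+Cg^2\tau)(T'-t_k)}d_k .
\end{equation}

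\textbf{Step 2: the key estimate.} We need to bound $S\coloneqq\sum_k e^{g(T'-t_k)/2}d_k$. The factor $d_k$ decays like $e^{-(gt_k/2-a_0)}$ after the peak time $2a_0/g$ and like $e^{-(a_0-gt_k/2)}$ before it. The growth factor $e^{g(T'-t_k)/2}$ is therefore cancelled by $d_k$ after the peak, contributing only $O(e^{-(gT'/2-a_0)})$ per step. Before the peak, the summand is at most $e^{gT'/2-a_0}$ per step. The intended stopping time is $T_{\rm stop}=(2a_0+2\log(4/\epsilon))/g$. At that time $d(T_{\rm stop})\le\epsilon/2$, so we may stop and output the current state. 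Beyond $T_{\rm stop}$ the exact state stays within $\epsilon/2$ of $\rho_{\rm sta}$, so outputting $\rho_M$ at $T'$ is acceptable. With $T'\le T_{\rm stop}$, every term satisfies $e^{g(T'-t_k)/2}d_k=O(1/\epsilon)$. This gives $S=O(M/\epsilon)$ and $e_M\le O\bigl(C(g\tau)^2 (M/\epsilon)\, e^{Cg^2\tau T'}\bigr)=O\bigl(g^2\tau T'e^{Cg^2\tau T'}/\epsilon\bigr)$.

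\textbf{Step 3: choosing $M$.} Since $T'\le T$, we take $M$ of order $\max\{1,gT/\Delta_0,(gT)^2/\epsilon^2\}$, with $\tau=T'/M$. This makes $g^2\tau T'\le O(\epsilon^2)\le 1$, hence $e_M\le\epsilon/2$, and the overall error is at most $\epsilon$. The choice is uniform in $\xi$ because $T'\le T$ and all constants are independent of $\xi$. By Eq.~\eqref{eq:single-qubit-GP-H-step}, each $A_H$ uses $4$ calls, so $b=4+2\cdot4=12$. Theorem~\ref{thm:query-complexity-Lipschitz-bounds} then gives $Q\le 13^M=\exp[O(1+gT+(gT)^2/\epsilon^2)]$.

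The main obstacle is Step 2, which shows that the amplification is offset by convergence to $\rho_{\rm sta}$ through a careful choice of stopping time. A particular difficulty is the error feedback $Ce_k$ inside the relative bound, which forces the $e^{Cg^2\tau T}$ factor and hence the $(gT)^2$ scaling. If the per-term bound $O(1/\epsilon)$ turns out too lossy, the first fallback is to split the sum at the peak time $2a_0/g$ and estimate the two pieces separately. Ignoring the tighter form the split yields only changes constants, which is why a $1/\epsilon^2$ dependence can still be tolerated.
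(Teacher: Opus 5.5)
Your proposal follows the paper's proof essentially step for step: the same trajectory-weighted recursion $e_{k+1}\le(e^{g\tau/2}+C(g\tau)^2)e_k+C(g\tau)^2d_k$, the same use of the one-sided bound $1/\cosh a\le 2e^{-a}$ to make every unrolled term satisfy $e^{g(T'-t_k)/2}d_k\le 2\sqrt2\,e^{-a(T')}$ (independent of $k$, and $O(1/\epsilon)$ once $T'$ is capped at $a_0$ plus a logarithm), and the same choice $M=O(1+gT+(gT)^2/\epsilon^2)$ giving $Q\le 13^M$. The one thing to repair is the constant in your cutoff: in fact $d(t)=\sqrt2/\cosh a(t)$ (your formula drops the $y$-component), so your $T_{\rm stop}$ only gives $d\le\epsilon/\sqrt2$, not $\epsilon/2$, and when you output $\rho_M$ for $T>T_{\rm stop}$ the term $\norm{\rho(T')-\rho(T)}_1\le d(T')+d(T)$ must be at most $\epsilon/2$, so you need $d\le\epsilon/4$ after the cutoff, which the paper gets by using $\log(8\sqrt2/\epsilon)$ in place of $\log(4/\epsilon)$ (your alternative of outputting $\proj{+}$ when $T>T_{\rm stop}$ also closes the error budget with your constants as written).
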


The proof proceeds as follows.
After both exact solutions have approached
$\rho_{\rm sta}$ sufficiently closely, the state prepared
at a cutoff time can approximate the solution at later
times within the desired accuracy.
We choose this cutoff to control both the error from
stopping the simulation and the accumulated one-step errors.
Fix an accuracy $0<\epsilon<1$ and define the stopping time by
\begin{equation}
    S=\min\{T,t_{\rm cut}\},
    \label{eq:single-qubit-GP-simulation-duration}
\end{equation}
where the cutoff time is given by
\begin{equation}
    t_{\rm cut}
    =\frac{2}{g}\left(
        a_0+\log\frac{8\sqrt2}{\epsilon}
      \right).
\end{equation}
The cutoff time $t_{\rm cut}$ depends only on the known
parameters and is later than $2a_0/g$, the time at which
the two exact solutions become orthogonal.
Equation~\eqref{eq:single-qubit-GP-solution-bloch} gives
\begin{equation}
    \norm{\rho_\pm(t)-\rho_{\rm sta}}_1
    =\frac{\sqrt{2}}{\operatorname{cosh}a(t)}.
\end{equation}
For $t\ge t_{\rm cut}$, using
$1/\operatorname{cosh}a\le2e^a$ yields
\begin{equation}
    \norm{\rho_\pm(t)-\rho_{\rm sta}}_1
    \le\frac{\epsilon}{4}.
\end{equation}
If $S<T$, the triangle inequality therefore implies
\begin{equation}
    \norm{\rho_\pm(S)-\rho_\pm(t)}_1
    \le\frac{\epsilon}{2}
    \qquad \left( S\le t\le T \right).
    \label{eq:single-qubit-GP-duration-error}
\end{equation}
The same inequality holds trivially when $S=T$.
Thus, it suffices to simulate up to $S$ with error at most
$\epsilon/2$ and use the resulting state for later times.

For the recursion up to $S$, choose a step size $\tau>0$
such that $M \coloneqq S/\tau$ is a positive integer,
and set $t_k=k\tau$.
Start the recursion with the given oracle $U_0$.
For either candidate initial state, define
\begin{equation}
    \rho_{k+1}^\pm
    =\widetilde\Phi_\tau^{({\rm GP})}(\rho_k^\pm)
    \quad \left(k \in \{0, 1, \dots, M-1 \} \right)
\end{equation}
with $\rho_0^\pm=\rho_\pm(0)$.
The recursive circuit
$U_{k+1}=V_{\rm NLSE}(\rho_k^\pm,\tau)U_k$ prepares
$\rho_{k+1}^\pm$ for the respective given oracle.
For fixed $g,\xi,T,\epsilon$, the stopping time $S$,
the number of time steps $M$, the non-query gates and their
parameters, and the number and placement of queries are
the same for both candidate initial states.
Whether each query applies $U_0$ or $U_0^\dagger$ is also the
same, and these choices are independent of the unspecified
action of $U_0$.

Set $e_k^\pm=\norm{\rho_k^\pm-\rho_\pm(t_k)}_1$.
For $0<g\tau\le\Delta_0$, Lemma~\ref{lem:single-qubit-GP-relative-difference} and the triangle
inequality give
\begin{equation}
    \begin{split}
        e_{k+1}^\pm
        \le{}&\left(e^{g\tau/2}+C(g\tau)^2\right)e_k^\pm\\
        &+C(g\tau)^2
          \norm{\rho_\pm(t_k)-\rho_{\rm sta}}_1.
    \end{split}
\end{equation}
For all real $a$, 
$1/\operatorname{cosh}a\le2e^{-a}$.
Consequently, along the simulated trajectory,
\begin{equation}
    \norm{\rho_\pm(t_k)-\rho_{\rm sta}}_1
    \le2\sqrt2e^{-a_0+gt_k/2}.
\end{equation}
Since $e_0^\pm=0$ and
$e^{g\tau/2}+C(g\tau)^2\le
e^{g\tau/2+C(g\tau)^2}$, expanding the recurrence yields,
for $k \in \{1,2,\dots,M\}$,
\begin{align}
    e_k^\pm
    &\le2\sqrt2 C(g\tau)^2
      e^{-a_0+g(t_k-\tau)/2}
      \sum_{j=0}^{k-1}e^{C(k-1-j)(g\tau)^2}\\
    &\le2\sqrt2 C e^{-a(S)}
      \frac{(gS)^2}{M}
      \exp\left(C\frac{(gS)^2}{M}\right).
\end{align}
The trajectory-dependent factor in the one-step error
compensates for the subsequent exponential error
amplification, while the stopping time controls the
remaining factor $e^{-a(S)}$.
Indeed, $S\le t_{\rm cut}$ gives
$e^{-a(S)}\le8\sqrt2/\epsilon$.
Hence
\begin{equation}
    \max_{0\le k\le M}e_k^\pm
    \le\frac{32C}{\epsilon}
       \frac{(gS)^2}{M}
       \exp\left(C\frac{(gS)^2}{M}\right).
    \label{eq:single-qubit-GP-global-error}
\end{equation}

As shown in Appendix~\ref{appendix:single-qubit-GP-step-count},
a sufficient number of time steps is
\begin{equation}
    M=O\left(1+gS+\frac{(gS)^2}{\epsilon^2}\right),
    \label{eq:single-qubit-GP-step-count}
\end{equation}
which ensures $g\tau\le\Delta_0$ and
\begin{equation}
    \max_{0\le k\le M}
    \norm{\rho_k^\pm-\rho_\pm(t_k)}_1
    \le\frac{\epsilon}{2}.
\end{equation}
Combining the case $k=M$ with
Eq.~\eqref{eq:single-qubit-GP-duration-error} gives
\begin{equation}
    \norm{\rho_M^\pm-\rho_\pm(t)}_1\le\epsilon
    \qquad (S\le t\le T).
\end{equation}
Thus, $U_k$ prepares an approximation to the original solution
at each grid point $t_k\le S$, and $U_M$ prepares an
approximation at every time in $[S,T]$.
In particular, its output satisfies the prescribed accuracy
at the target evolution time $T$.

Recall from Sec.~\ref{subsec:general-framework} that
$Q_k$ counts calls to $U_0$ and $U_0^\dagger$ in $U_k$.
Each $A_H$ factor in Eq.~\eqref{eq:single-qubit-GP-H-step}
contains two exponentials of the projector onto the current state,
each implemented using one call to $U_k$ and one to
$U_k^\dagger$.
Thus, the two $A_H$ factors in
Eq.~\eqref{eq:NLSE-one-step} require eight calls in total
to $U_k$ and $U_k^\dagger$, and the two current state
reflections require four additional calls.
Including the rightmost $U_k$ in the recursive update gives
$Q_{k+1}\le13Q_k$.
Since $Q_0=1$, we obtain $Q_M\le13^M$.
Combining $Q_M\le13^M$ with
Eq.~\eqref{eq:single-qubit-GP-step-count} and $S\le T$,
and taking the supremum over $0<\xi<1$, yields
Eq.~\eqref{eq:single-qubit-GP-query-upper-bound}.
The implied constant is independent of $g,T,\xi$, and
$\epsilon$.
The construction applies to every unitary oracle satisfying
either initial state preparation condition.
All these circuits are deterministic, fully coherent,
and ancilla-free.
This proves Theorem~\ref{thm:single-qubit-GP-query-upper-bound}.

At fixed $g>0$ and accuracy, trajectory-dependent error
estimates and a stopping time improve the initial state
preparation oracle query upper bound from
$\exp(\exp(O(T)))$ to the singly exponential bound
$\exp(O(T^2))$, uniformly over the allowed initial state family.
This demonstrates the usefulness of the proposed double-bracket
framework: it propagates the solution from one time step
to the next while achieving a query upper bound substantially
closer to the exponential worst-case lower bound of
Sec.~\ref{subsec:single-qubit-GP-lower}.
In Sec.~\ref{subsec:single-qubit-GP-achieving-optimal},
we construct a separate recursive circuit tailored to the
exact solutions and relate its reflection updates to the
double-bracket framework.
Its query upper bound coincides with the lower bound up to a
constant factor at fixed trace-norm accuracy.

\subsection{Achieving Optimal Worst-Case Query Complexity without Following the Solution Trajectories}
\label{subsec:single-qubit-GP-achieving-optimal}
The GP circuit analyzed in the preceding subsection propagates the solution from one time step to the next. 
The simulation task itself, however, requires only the state at the target evolution time $T$, so the intermediate states of the recursion need not follow the solution trajectories. 
We exploit this freedom to construct a separate recursive circuit tailored to the exact solutions.
The recursion uses products of reflections that exactly
implement a one-step propagator of the form introduced in
Sec.~\ref{subsec:general-framework}.
Its initial state preparation oracle query upper bound is
uniform in $\xi$ and coincides with the worst-case lower bound of
Sec.~\ref{subsec:single-qubit-GP-lower} up to a constant factor
at fixed trace-norm accuracy.
The construction applies for all $0<\xi<1$ and $T>0$,
with $\xi$ and $T$ specified independently.

To construct the circuit, define
$\alpha(t)=\arcsin\left(  1/\left( \sqrt2 \operatorname{cosh}a(t) \right) \right)$ 
and $\beta(t)=\arctan(\tanh a(t))$.
With $\ket{-}=(\ket{0}-\ket{1})/\sqrt2$, the exact
solutions satisfy, up to a global phase,
\begin{equation}
    \ket{\psi_\pm(t)}
    =\cos\alpha(t)\ket{+}
      \pm e^{-i\beta(t)}\sin\alpha(t)\ket{-}.
\end{equation}
For a real angle $u$, write $\widehat\rho_\pm(u)=\proj{\chi_\pm(u)}$, 
where $\ket{\chi_\pm(u)}=\cos u\ket{+}\pm\sin u\ket{-}$.
The same known rotation $e^{-i\beta(t)X/2}$ removes the
relative phase for both candidates and maps
$\ket{\psi_\pm(t)}$ to $\ket{\chi_\pm(\alpha(t))}$ up to
a global phase.
Thus, $u=\alpha(t)$ represents the solution at time $t$
after this rotation, with $0<\alpha(t)\le\pi/4$.

The remaining task is therefore to change $\alpha(0)$
to $\alpha(T)$ for both candidates using the same circuit,
and then restore the phase specified by $\beta(T)$.
Theorem~\ref{thm:single-qubit-GP-optimal-query-bound} below shows that this task admits a
deterministic, fully coherent, ancilla-free recursive circuit
with optimal worst-case initial state preparation oracle
query complexity at fixed trace-norm accuracy.

\begin{theorem}[Optimal worst-case query complexity via a circuit tailored to exact solutions]
\label{thm:single-qubit-GP-optimal-query-bound}
Under the oracle assumptions stated in
Theorem~\ref{thm:single-qubit-GP-query-upper-bound},
there exists a deterministic, fully coherent, ancilla-free
algorithm $\mathcal A$ implemented by a recursive circuit
tailored to the exact solutions that satisfies the simulation
requirements stated above for all $g>0$, $T>0$,
$0<\xi<1$, and $0<\epsilon<1$.
It prepares 
$\rho_\pm(T)$ exactly, and its worst-case initial state
preparation oracle query count satisfies
\begin{equation}
    Q_{\mathcal A}^{\rm wc}(g,T,\epsilon)
    \le\frac{3\pi}{2\sqrt2}e^{gT/2}
    \le\frac{3\pi}{\sqrt2}\cosh\frac{gT}{2}.
    \label{eq:single-qubit-GP-tighter-upper-bound}
\end{equation}
The optimal worst-case initial state preparation oracle
query complexity, denoted $Q_{\rm opt}^{\rm wc}(g,T,\epsilon)$,
satisfies
\begin{equation}
    Q_{\rm opt}^{\rm wc}(g,T,\epsilon)
    =\Theta\left(\cosh\frac{gT}{2}\right)
    =\Theta\left(e^{gT/2}\right)
    \label{eq:single-qubit-GP-optimal-query-bound}
\end{equation}
for every fixed $0<\epsilon<1$ and sufficiently large $gT$.
The worst case is taken over the specified initial state
family and the allowed preparation oracles, with $\xi$
supplied as classical information.
\end{theorem}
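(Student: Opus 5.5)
The plan is to work entirely in the rotated frame described before the theorem, where the task reduces to mapping $\ket{\chi_\pm(\alpha(0))}$ to $\ket{\chi_\pm(\alpha(T))}$ with a single circuit that does not depend on the sign. The circuit is conjugated at the start by the known rotation $e^{-i\beta(0)X/2}$ and at the end by its inverse-type rotation $e^{i\beta(T)X/2}$; both are non-query gates.

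\textbf{Basic tripling step.} The key observation is that the real plane spanned by $\ket{+},\ket{-}$ carries two reflections:
\begin{itemize}
\item the known reflection $R_{+}\coloneqq 2\proj{+}-\id$, which is non-query;
\item the reflection about the current state, $2\widehat\rho_\pm(u)-\id=-U_k(\id-2\proj{0})U_k^\dagger$.
\end{itemize}
Their product $(2\widehat\rho_\pm(u)-\id)R_{+}$ is a rotation by twice the angle between the two axes. That angle is $+u$ for the ``$+$'' candidate and $-u$ for the ``$-$'' candidate. Hence the same operator sends $\ket{\chi_\pm(u)}$ to $\ket{\chi_\pm(3u)}$ for both signs simultaneously.

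This is a Grover-type step of the double-bracket form. The product of reflections is an exact one-step propagator $e^{\tau[\rho,G]}$ generated by the commutator of $\rho$ with a fixed projector, and I will remark on this to link it to Sec.~\ref{subsec:general-framework}. Recursively, $U_{k+1}=(2\rho_k-\id)R_+U_k$ uses three calls to $U_k,U_k^\dagger$. This gives the recursion $Q_{k+1}=3Q_k$ while the angle triples, so the query count grows linearly in the angle amplification.

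\textbf{Fractional final step and the decreasing case.} To hit $\alpha(T)$ exactly rather than an angle of the form $3^m\alpha(0)$, I will replace the last step by a generalized step using partial reflections. The candidate is $e^{i\phi_1\rho_k}e^{i\phi_2\proj{+}}$, possibly composed with known rotations. I will choose phases $\phi_1,\phi_2$ that map $\chi_\pm(u)$ to $\chi_\pm(u')$, up to a common known phase correction, for any prescribed $u'\in[u,3u]$. This uses the same symmetry argument, since both candidates are mirror images under $\ket{-}\mapsto-\ket{-}$ and every gate used commutes with that mirror. A final known $Z$-type phase then fixes any relative phase the partial reflections introduce.

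When $\alpha(T)<\alpha(0)$ (past the orthogonality time, $a(T)<0<a_0$, or when $a_0<0$ throughout), angles must be shrunk rather than grown. Shrinking costs $O(1)$ queries, for example by reflecting $u\mapsto\pi/2-u$-type moves or by directly applying one fractional step with appropriate phases. Alternatively, I will first amplify to $\pi/4$ and then come back down.

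\textbf{Query count.} This reduces to bounding the amplification ratio. We have
\[
\frac{\sin\alpha(T)}{\sin\alpha(0)}=\frac{\cosh a_0}{\cosh(a_0-gT/2)}\le e^{gT/2}
\]
uniformly in $a_0$, that is, uniformly in $\xi$. Each tripling level costs three times the previous one. Summing the geometric series and using $u\le\frac{\pi}{2}\sin u$ on $[0,\pi/4]$ should therefore give at most a constant times $\alpha(T)/\alpha(0)\le\frac{\pi}{2\sqrt2}\cdot\frac{\sin\alpha(T)}{\sin\alpha(0)}$ calls, with constant $3$. This yields $\frac{3\pi}{2\sqrt2}e^{gT/2}$. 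Because every step is exact, $\rho_\pm(T)$ is prepared exactly. Combining with the lower bound of Eq.~\eqref{eq:single-qubit-GP-query-lower-bound} gives the $\Theta$ statement for large $gT$.

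\textbf{Main obstacle.} I expect the main obstacle to be the exact fractional step. It must work identically for both candidates with phases depending only on known data, and its cost must fit into the constant $3\pi/(2\sqrt2)$ without spoiling the geometric sum. I would first try the standard exact amplitude-amplification phase-matching equation, adapted to the mirror-symmetric pair.
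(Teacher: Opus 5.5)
Your architecture matches the paper's. You use the same tripling step $R_{\widehat\rho_\pm(u)}R_{\rho_{\rm sta}}$ with $Q_{k+1}=3Q_k$, a final adjustment built from two projector exponentials plus a known phase correction, and the bound through $\sin\alpha(T)/\sin\alpha_0=\cosh a_0/\cosh a(T)\le e^{gT/2}$.

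The gap is that the step carrying the exactness claim is left open. As you state it, only for $u'\in[u,3u]$, it also does not cover $\alpha(T)<\alpha_0$. That case occurs exactly when $gT>4a_0$, and note that $a_0>0$ always. Of your shrinking options, only the direct fractional step works:
\begin{itemize}
\item The query-free move $u\mapsto\pi/2-u$ is the computational-basis $Z$, which swaps $\ket{+}$ and $\ket{-}$. It sends $(0,\pi/4]$ into $[\pi/4,\pi/2)$, so it never shrinks such an angle.
\item ``Amplify to $\pi/4$ and come back down'' still needs an exact downward step.
\end{itemize}

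The missing computation is short. Take $\phi_1=\phi_2=\theta$. The $\ket{-}$ amplitude of $e^{i\theta\widehat\rho_+(u)}e^{i\theta\rho_{\rm sta}}\ket{\chi_+(u)}$ is $e^{i\theta}\sin u\,[1-4\cos^2u\sin^2(\theta/2)]$. Equating it to $-e^{i\theta}\sin v$ gives $\sin^2(\theta/2)=(1+\sin v/\sin u)/(4\cos^2u)$. This lies in $(0,1]$ for every $0<v\le\min\{3u,\pi/4\}$, because $v\le 3u\le\pi-v$ gives $\sin v\le\sin 3u=\sin u\,(4\cos^2u-1)$. Unitarity then forces the $\ket{+}$ amplitude to have modulus $\cos v$. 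A correction diagonal in $\{\ket{+},\ket{-}\}$ therefore finishes the step with three calls, and it works whether $v<u$ or $v>u$. Your mirror-symmetry argument transfers this to the $-$ candidate only if that correction is diagonal in $\{\ket{+},\ket{-}\}$, i.e.\ an $X$ rotation in the computational basis. A computational-basis $Z$ phase would not commute with the mirror.

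The count also needs two repairs.
\begin{itemize}
\item There is no geometric series. $\widehat U_{k+1}$ contains three copies of $\widehat U_k$, so costs multiply and $Q_m=3^m$. With $m=\max\{0,\lfloor\log_3(\alpha(T)/\alpha_0)\rfloor\}$, the total is $3^{m+1}\le 3\max\{1,\alpha(T)/\alpha_0\}$.
\item The inequality $u\le\frac{\pi}{2}\sin u$ would only yield $\frac{3\pi}{2}e^{gT/2}$. You need $u\le\frac{\pi}{2\sqrt2}\sin u$ on $(0,\pi/4]$, which holds since $u/\sin u$ is increasing there. You also need $\frac{\pi}{2\sqrt2}e^{gT/2}\ge1$ to absorb the maximum.
\end{itemize}
With these fixes your plan gives the stated bound and exact preparation of $\rho_\pm(T)$. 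Combined with the lower bound, it gives the $\Theta$ statement.
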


The proof proceeds as follows.
We construct the recursive update using the reflections
$R_\rho=\id-2\rho$ and $R_{\rho_{\rm sta}}=-X$.
For $0<u\le\pi/4$, direct multiplication in the basis
$\{\ket{+},\ket{-}\}$ gives
\begin{equation}
    R_{\widehat\rho_\pm(u)}R_{\rho_{\rm sta}}
    =\begin{pmatrix}
       \cos2u&\mp\sin2u\\
       \pm\sin2u&\cos2u
     \end{pmatrix}.
\end{equation}
Equivalently,
\begin{equation}
    R_{\widehat\rho_\pm(u)}R_{\rho_{\rm sta}}
    =
    \exp\left(
        \frac{4u}{\sin2u}
        \comm{\widehat\rho_\pm(u)}{\rho_{\rm sta}}
    \right).
    \label{eq:single-qubit-GP-reflection-commutator}
\end{equation}
This is the one-step propagator in
Eq.~\eqref{eq:discretized-one-step-propagator} for
$G(\rho)=\rho_{\rm sta}$, with the commutator held fixed at
the input state and step parameter $4u/\sin2u$.
Its action triples the angle:
\begin{equation}
    R_{\widehat\rho_\pm(u)}R_{\rho_{\rm sta}}
    \ket{\chi_\pm(u)}
    =\ket{\chi_\pm(3u)}.
\end{equation}
The target GP state at time $T$ is specified by
$\alpha(T)$ and $\beta(T)$.

The angle $\alpha(t)$ increases until $t=2a_0/g$ and
decreases thereafter, so $\alpha(T)$ may be smaller than
$\alpha(0)$.
The final adjustment described below allows both an increase
and a decrease of the angle.
Set $\alpha_0=\alpha(0)$, choose
\begin{equation}
    m=\max\left\{0,
       \left\lfloor\log_3\frac{\alpha(T)}{\alpha_0}\right\rfloor
       \right\},
    \label{eq:single-qubit-GP-tripling-count}
\end{equation}
and define $\alpha_k=3^k\alpha_0$ for
$k\in\{0,1,\dots,m\}$.
Starting from $\widehat U_0=e^{-i\beta(0)X/2}U_0$, define
\begin{equation}
    \widehat U_{k+1}
    =\widehat U_k(\id-2\proj{0})\widehat U_k^\dagger
       R_{\rho_{\rm sta}}\widehat U_k,
\end{equation}
for $k \in \{0,1, \dots, m-1 \}$.
Then $\widehat U_k\ket{0}=\ket{\chi_\pm(\alpha_k)}$
up to a global phase for either candidate, and
$0<\alpha_k\le\pi/4$ throughout the recursion.
Each recursive step uses three calls in total to
$\widehat U_k$ and $\widehat U_k^\dagger$.
Counting calls to $U_0$ and $U_0^\dagger$ together gives
$Q_{k+1}=3Q_k$ and $Q_0=1$, and hence $Q_k=3^k$.

It remains to change the angle from $\alpha_m$ to $\alpha(T)$.
The choice of $m$ ensures
$0<\alpha_m\le\pi/4$ and
$0<\alpha(T)\le\min\{3\alpha_m,\pi/4\}$.
The target of this adjustment is
$\ket{\chi_\pm(\alpha(T))}
=\cos\alpha(T)\ket{+}\pm\sin\alpha(T)\ket{-}$.
We choose $\theta$ so that the projector exponentials below
give coefficient magnitudes $\cos\alpha(T)$ and $\sin\alpha(T)$
in the basis $\{\ket{+},\ket{-}\}$, respectively.
The unitary $D_{\alpha_m,\alpha(T)}$ then corrects the resulting
phases to obtain this state.
The correction $D_{\alpha_m,\alpha(T)}$ is diagonal in
the basis $\{\ket{+},\ket{-}\}$ and is therefore an
$X$ rotation up to a global phase.
Explicit choices of $\theta$ and $D_{\alpha_m,\alpha(T)}$
are given in Appendix~\ref{appendix:single-qubit-GP-final-adjustment}.

Since $\widehat U_m\ket{0}=\ket{\chi_\pm(\alpha_m)}$
up to a global phase, the circuit
\begin{equation}
    \begin{split}
        \mathcal P_{\alpha_m,\alpha(T)}[\widehat U_m]
        \coloneqq{}&
        D_{\alpha_m,\alpha(T)}
        \widehat U_m e^{i\theta\proj{0}}\widehat U_m^\dagger e^{i\theta\rho_{\rm sta}}\widehat U_m
    \end{split}
    \label{eq:single-qubit-GP-angle-change-circuit}
\end{equation}
prepares $\ket{\chi_\pm(\alpha(T))}$ up to a global phase.
This adjustment uses three calls in total to $\widehat U_m$
and $\widehat U_m^\dagger$.
All its non-query gates depend only on $\alpha_m$ and
$\alpha(T)$, and are the same for both candidates.

Restoring the target relative phase specified by $\beta(T)$
gives the final circuit
\begin{equation}
    U_{\rm f}
    =e^{i\beta(T)X/2}
      \mathcal P_{\alpha_m,\alpha(T)}[\widehat U_m].
\end{equation}
This prepares 
$\rho_\pm(T)$ exactly, 
including when $\alpha(T)<\alpha_0$.
All gate parameters and the number of recursive steps depend
only on the known $g,\xi,T$.
No information identifying the given candidate is needed.

As shown in Appendix~\ref{appendix:single-qubit-GP-final-adjustment},
the worst-case initial state preparation oracle query count
satisfies Eq.~\eqref{eq:single-qubit-GP-tighter-upper-bound}.
The circuit is deterministic, fully coherent, and ancilla-free,
and it works for every unitary oracle satisfying either
initial state preparation condition.

Combining Eq.~\eqref{eq:single-qubit-GP-tighter-upper-bound}
with Eq.~\eqref{eq:single-qubit-GP-query-lower-bound}
yields Eq.~\eqref{eq:single-qubit-GP-optimal-query-bound}.
This proves Theorem~\ref{thm:single-qubit-GP-optimal-query-bound}.
The exact representation of the reflection updates as
one-step propagators in
Eq.~\eqref{eq:single-qubit-GP-reflection-commutator}
connects this optimal circuit to the double-bracket framework.
The reflection update can also be viewed as a Grover-type amplitude-amplification step, consistent with the double-bracket interpretation of Grover's algorithm~\cite{Suzuki_2025_GroverITE}.

This optimality result complements previous
work~\cite{BrustleWiebe2025}, which uses Carleman linearization
to derive an exponential query upper bound in $T$ for a
broader class of norm-preserving polynomial differential
equations but does not establish that the upper and lower
bounds coincide up to a constant factor.
For the specified initial state family, our construction
achieves the optimal worst-case initial state
preparation oracle query complexity at fixed trace-norm accuracy,
even under the ancilla-free constraint.

\section{Conclusion and Discussion}
\label{sec:conclusion}
In this work, 
we generalize DB-QITE to nonlinear dynamics 
by replacing a state-independent Hamiltonian with a state-dependent Hermitian operator $G(\rho)$.
Under the boundedness and Lipschitz continuity assumptions
on $G(\rho)$ and 
the oracle-access and implementation assumptions
described in Secs.~\ref{sec:proposed-framework}
and~\ref{sec:one-step-error-bounds},
we develop a fully coherent, ancilla-free framework that constructs a recursive circuit from the initial state
preparation oracle and its inverse.
We establish one-step error bounds
for the general and NLSE-specific implementations.
Combining these one-step error bounds with a trace-norm Lipschitz bound
$\Lambda(t)$ yields global error bounds and a classification
of initial state preparation oracle query upper bounds.
At fixed model parameters and with a uniformly bounded
recursion factor, 
these query upper bounds are
$\exp(O(T/\epsilon))$ under the exponential contraction bound
$\Lambda(t)=e^{-\lambda t}$,
$\exp(O(T^2/\epsilon))$ under the nonexpansive bound
$\Lambda(t)=1$, and
$\exp(O(Te^{\lambda T}/\epsilon))$ under the exponential expansion bound
$\Lambda(t)=e^{\lambda t}$,
for $T\ge1$, $0<\epsilon\le1$, and fixed $\lambda>0$
in the exponential cases.
Under the same assumptions, the exponential expansion bound
$\Lambda(t)=e^{\lambda_{\rm exp}t}$ holds in the general case,
yielding an initial state preparation oracle query upper bound
of $\exp(\exp(O(T)))$ at fixed accuracy.
Adapting earlier DB-QITE constructions and
error-accumulation analyses~\cite{Gluza2026DBQITE,Wright_2026_DBTFD}
to the present state-dependent problem also gives a doubly exponential query upper bound in $T$.
Our analysis improves this general doubly
exponential upper bound under either the exponential contraction bound or the nonexpansive bound.

To examine how these stability-dependent query upper bounds compare with the optimal worst-case query complexity,
we apply the proposed framework to the discrete GP equation 
under the oracle-access and implementation assumptions stated in
Sec.~\ref{sec:application}.
For general $N=2^n$, the small-bias construction gives an
initial state preparation oracle query upper bound polynomial
in $n$ at fixed $T$, accuracy, and model parameters.
For the specified single-qubit initial state family with known
$\xi$ and independently specified $T$, 
we apply the general
result of Ref.~\cite{an_quantum_2025} to translate the
amplification of distinguishability between initially close
states under the nonlinear dynamics into a worst-case query
lower bound of $\Omega(e^{gT/2})$
at fixed
$0<\epsilon<1$, where $g>0$ is the nonlinearity strength.
Trajectory-dependent error estimates and a stopping time
improve the recursive GP circuit's query upper bound from
$\exp(\exp(O(T)))$ to $\exp(O(T^2))$ at fixed $g$ and
$\epsilon$, uniformly in $\xi$.
This demonstrates the usefulness of the proposed double-bracket
framework for propagating the solution from one time step
to the next with a singly exponential query upper bound,
although this bound does not coincide with the lower bound.
To determine whether the lower bound can be attained within
the fully coherent and ancilla-free formulation,
we construct a separate ancilla-free recursive circuit
tailored to the exact solutions.
The exact representation of its reflection updates as
double-bracket one-step propagators connects this
construction to the proposed framework.
The recursive circuit tailored to the exact solutions yields
a worst-case query upper bound that coincides with the lower bound
up to a constant factor, establishing the optimal worst-case
query complexity $\Theta(e^{gT/2})$ at fixed $0<\epsilon<1$
for sufficiently large $gT$.
For comparison, previous work~\cite{BrustleWiebe2025}, which
uses Carleman linearization for a broader class of
norm-preserving polynomial differential equations, derives
an exponential query upper bound in $T$ without establishing the optimality up to a constant factor.
Our construction thus attains optimal worst-case query complexity
for the specified initial state family at fixed trace-norm accuracy,
even under the ancilla-free constraint.

Taken together, our results establish a quantitative connection
between the sensitivity of nonlinear solution dynamics and
the initial state preparation oracle query complexity of their
fully coherent, ancilla-free embedding into unitary quantum mechanics.
Within the proposed framework, trace-norm stability controls the propagation
of simulation errors and leads to qualitatively different query
upper bounds.
These findings suggest that trace-norm stability can serve as a complexity parameter
for coherent nonlinear simulation beyond normalized imaginary-time
evolution.
These results, including the optimality established for the single-qubit GP instance,
also provide a foundation for designing ancilla-free unitary embeddings that achieve optimal query complexity.

A natural direction for future work is to identify further concrete examples
of nonlinear differential equations to which our framework applies,
and to examine to what extent trace-norm stability can be used to
characterize the initial state preparation oracle query complexity.
Another future direction is to extend the proposed framework
to an explicitly time- and state-dependent Hermitian operator $G(t,\rho)$.
This would allow us to investigate the relation between
the sensitivity of nonlinear solution dynamics and the initial
state preparation oracle query complexity for a broader class
of dynamics than that considered here.
A further question is when optimal query complexity
for simulating nonlinear dynamics
can be attained by fully coherent, ancilla-free constructions.

\section*{Statement of AI use}
In this work, the idea of generalizing the double-bracket
quantum algorithms by replacing the state-independent Hamiltonian with a
state-dependent Hermitian operator originated with the authors.
OpenAI models GPT-5.6-sol, GPT-5.6-luna, GPT-5.6-terra, GPT-5.6-sol-wm, and GPT-6-astra, accessed through ChatGPT in August and September
2026, were used to explore proof approaches, check intermediate
mathematical arguments, error estimates, and query-complexity bounds,
and assist with literature searches.
They were also used extensively to draft and revise the entire
manuscript, including the mathematical proofs.
The mathematical arguments and proofs were independently
verified by the authors.
The authors take sole responsibility for the correctness of the final manuscript.

\begin{acknowledgments}
This work is supported by MEXT Quantum Leap Flagship Program (MEXT Q-LEAP) Grant No. JPMXS0120319794, JST COI-NEXT Grant No. JPMJPF2014, and JST CREST JPMJCR24I3.
YI is supported by JST SPRING Grant Number JPMJSP2138.
HH is supported by JST PRESTO, Japan, Grant Number JPMJPR25F7 and JSPS KAKENHI Grant No. JP24K16979.
\end{acknowledgments}

\bibliography{references}

\appendix
\section{Purity Preservation under the State-Dependent Double-Bracket Flow}
\label{appendix:purity-preservation}
Let $\rho(t)$ be a differentiable Hermitian solution of
Eq.~\eqref{eq:state-dependent-db} on an interval containing $t=0$.
Since $G(\rho(t))$ is Hermitian,
$\comm{\rho(t)}{G(\rho(t))}$ is anti-Hermitian.
Let $U(t)$ satisfy
\begin{equation}
\frac{d}{dt}U(t)
=
\comm{\rho(t)}{G(\rho(t))}U(t)
\label{eq:purity-propagator}
\end{equation}
with $U(0)=\id$.
The anti-Hermiticity of the generator $\comm{\rho(t)}{G(\rho(t))}$ implies that $U(t)$ is unitary.
Using Eqs.~\eqref{eq:state-dependent-db} and~\eqref{eq:purity-propagator}, we obtain
\begin{align}
&\frac{d}{dt}
\left(
U(t)^\dagger\rho(t)U(t)
\right) \notag \\
=&
U(t)^\dagger
\Bigl(
-\comm{\rho(t)}{G(\rho(t))}\rho(t)
\notag\\
&\quad
+\comm{\comm{\rho(t)}{G(\rho(t))}}{\rho(t)}
+\rho(t)\comm{\rho(t)}{G(\rho(t))}
\Bigr)
U(t) \\
=&0.
\end{align}
Therefore,
$\rho(t)=U(t)\rho(0)U(t)^\dagger$, so the evolution preserves all eigenvalues of $\rho(t)$.
In particular, if $\rho(0)\in\mathcal{S}_{\rm pure}$, then $\rho(t)$
remains a rank-one projector and
$\operatorname{Tr}(\rho(t)^2)=1$ for all times for which the solution
exists.
Thus, $\rho(t)\in\mathcal{S}_{\rm pure}$ throughout the evolution.

\section{Proof of Lemma~\ref{lem:generic-exponential-solution-bound}} 
\label{appendix:Lem-generic-exponential-solution-bound}
We prove Lemma~\ref{lem:generic-exponential-solution-bound}.
\vspace{1em}
\newline
\textit{Proof of Lemma~\ref{lem:generic-exponential-solution-bound}}
\hspace{1em}
Define the double-bracket vector field by
\begin{equation}
    \cV(\rho)
    \coloneqq
    \comm{\comm{\rho}{G(\rho)}}{\rho}.
\end{equation}
For a rank-one projector $\rho$, let
\begin{equation}
    R_\rho\coloneqq\id-2\rho.
\end{equation}
Since $\rho^2=\rho$, the operator $R_\rho$ is a Hermitian unitary and
\begin{align}
    \cV(\rho)
    &=
    2\rho G(\rho)\rho-G(\rho)\rho-\rho G(\rho) \\
    &=
    \frac{1}{2}
    \left(
        R_\rho G(\rho)R_\rho-G(\rho)
    \right).
    \label{eq:DB-vector-field-reflection-form}
\end{align}
For any linear operator $X$ on $\cH$,
\begin{equation}
    \frac{1}{2}\left(R_\rho X R_\rho-X\right)
    =
    -\rho X(\id-\rho)-(\id-\rho)X\rho.
\end{equation}
Using $\norm{\rho}_1=1$ and
$\norm{\id-\rho}_{\rm op} \le 1$, we therefore obtain
\begin{align}
    \norm{
        \frac{1}{2}\left(R_\rho X R_\rho-X\right)
    }_1
    &\le
    \norm{\rho X(\id-\rho)}_1
    +
    \norm{(\id-\rho)X\rho}_1 \\
    &\le
    2\norm{X}_{\rm op}.
    \label{eq:rank-one-reflection-trace-bound}
\end{align}

Now set
\begin{equation}
    A\coloneqq G(\rho)
\end{equation}
and
\begin{equation}
    B\coloneqq G(\sigma).
\end{equation}
Equation~\eqref{eq:DB-vector-field-reflection-form} gives
\begin{equation}
    \begin{split}
        \cV(\rho)-\cV(\sigma)
        ={}&
        \frac{1}{2}
        \left(
            R_\rho(A-B)R_\rho-(A-B)
        \right)
        \\
        &+
        \frac{1}{2}
        \left(
            R_\rho B R_\rho-R_\sigma B R_\sigma
        \right).
    \end{split}
\end{equation}
By Eq.~\eqref{eq:rank-one-reflection-trace-bound} and the
Lipschitz continuity of $G$,
\begin{align}
    \norm{
        \frac{1}{2}
        \left(
            R_\rho(A-B)R_\rho-(A-B)
        \right)
    }_1
    &\le
    2\norm{A-B}_{\rm op}
    \\
    &\le
    2L_G\norm{\rho-\sigma}_1.
    \label{eq:DB-generator-difference-bound}
\end{align}

For the second term, we use
\begin{equation}
    R_\rho B R_\rho-R_\sigma B R_\sigma
    =
    (R_\rho-R_\sigma)BR_\rho
    +
    R_\sigma B(R_\rho-R_\sigma).
\end{equation}
Since $R_\rho$ and $R_\sigma$ are unitary,
$\norm{B}_{\rm op}\le M_G$, and
\begin{equation}
    \norm{R_\rho-R_\sigma}_1
    =
    2\norm{\rho-\sigma}_1,
\end{equation}
we have
\begin{align}
    &\norm{
        \frac{1}{2}
        \left(
            R_\rho B R_\rho-R_\sigma B R_\sigma
        \right)
    }_1
    \notag\\
    \le
    &\frac{1}{2}
    \left(
        \norm{(R_\rho-R_\sigma)BR_\rho}_1
        +
        \norm{R_\sigma B(R_\rho-R_\sigma)}_1
    \right)
    \\
    \le
    &M_G\norm{R_\rho-R_\sigma}_1
    \\
    =
    &2M_G\norm{\rho-\sigma}_1.
    \label{eq:DB-projector-difference-bound}
\end{align}
Combining Eqs.~\eqref{eq:DB-generator-difference-bound} and
\eqref{eq:DB-projector-difference-bound}, we obtain
\begin{equation}
    \norm{\cV(\rho)-\cV(\sigma)}_1
    \le
    \lambda_{\rm exp} \norm{\rho-\sigma}_1.
    \label{eq:DB-vector-field-Lipschitz}
\end{equation}

By Appendix~\ref{appendix:purity-preservation},
Eq.~\eqref{eq:state-dependent-db} preserves all eigenvalues of $\rho(t)$.
In particular, a solution with
$\rho(0)\in\mathcal S_{\rm pure}$ remains in
$\mathcal S_{\rm pure}$ throughout its interval of existence.
Because $\mathcal{S}_{\rm pure}$ is compact and the vector field is
Lipschitz there, the flow exists uniquely for all finite times.

Let
\begin{equation}
    \rho(t)\coloneqq\varphi_t(\rho)
\end{equation}
and
\begin{equation}
    \sigma(t)\coloneqq\varphi_t(\sigma).
\end{equation}
Using the integral form of the two differential equations together
with Eq.~\eqref{eq:DB-vector-field-Lipschitz}, we find
\begin{align}
    \norm{\rho(t)-\sigma(t)}_1
    \le{}&
    \norm{\rho-\sigma}_1
    \notag\\
    &+
    \lambda_{\rm exp}
    \int_0^t
        \norm{\rho(s)-\sigma(s)}_1
    \,ds.
\end{align}
Gronwall's inequality then yields
Eq.~\eqref{eq:generic-exponential-solution-bound}.
\qed

\section{Proof of Lemma~\ref{lem:discretization-error}} 
\label{appendix:discretization-error}
We prove Lemma~\ref{lem:discretization-error}.
\vspace{1em}
\newline
\textit{Proof of Lemma~\ref{lem:discretization-error}}
\hspace{1em}
Fix $\rho\in\mathcal S_{\rm pure}$ and $0<\tau\le1$.
Define
\begin{equation}
    \cV(\sigma)
    \coloneqq
    \comm{\comm{\sigma}{G(\sigma)}}{\sigma}
\end{equation}
for $\sigma\in\mathcal S_{\rm pure}$.
We estimate the deviations of both $\varphi_\tau(\rho)$ and
$\Phi_\tau(\rho)$ from the common first-order expression
$\rho+\tau\cV(\rho)$.

We first record bounds used below.
For a pure state $\sigma \in \mathcal{S}_{\rm pure}$, set
\begin{equation}
    A_\sigma\coloneqq\comm{\sigma}{G(\sigma)}
\end{equation}
and
\begin{equation}
    R_\sigma\coloneqq\id-2\sigma.
\end{equation}
Since $R_\sigma$ is unitary and
$A_\sigma=-\comm{R_\sigma}{G(\sigma)}/2$, we have
\begin{align}
    \norm{A_\sigma}_{\rm op}
    &\le\frac12\left(
        \norm{R_\sigma G(\sigma)}_{\rm op}
        +\norm{G(\sigma)R_\sigma}_{\rm op}
    \right) \\
    &\le\norm{G(\sigma)}_{\rm op}
    \le M_G.
\end{align}
Using $\norm{\sigma}_1=1$ and
$\norm{\comm{A}{B}}_1\le2\norm{A}_{\rm op}\norm{B}_1$,
we obtain
\begin{equation}
    \norm{\cV(\sigma)}_1
    =\norm{\comm{A_\sigma}{\sigma}}_1
    \le2\norm{A_\sigma}_{\rm op}\norm{\sigma}_1
    \le2M_G.
\end{equation}
Moreover, Eq.~\eqref{eq:DB-vector-field-Lipschitz} in
Appendix~\ref{appendix:Lem-generic-exponential-solution-bound}
gives
\begin{equation}
    \norm{\cV(\sigma)-\cV(\omega)}_1
    \le\lambda_{\rm exp}\norm{\sigma-\omega}_1,
\end{equation}
for $\lambda_{\rm exp}=2(M_G+L_G)$ and all $\sigma,\omega\in\mathcal S_{\rm pure}$.

For the exact solution, write $\rho_s=\varphi_s(\rho)$.
By Appendix~\ref{appendix:purity-preservation},
$\rho_s\in\mathcal S_{\rm pure}$ for $0\le s\le\tau$,
so the preceding bounds apply along the entire trajectory.
The integral form of Eq.~\eqref{eq:state-dependent-db} is
\begin{equation}
    \rho_s-\rho=\int_0^s\cV(\rho_u)\,du.
\end{equation}
It follows that
\begin{equation}
    \norm{\rho_s-\rho}_1
    \le\int_0^s\norm{\cV(\rho_u)}_1\,du
    \le2M_Gs.
\end{equation}
Subtracting $\tau\cV(\rho)$ from the integral expression
at $s=\tau$ gives
\begin{equation}
    \varphi_\tau(\rho)-\rho-\tau\cV(\rho)
    =\int_0^\tau
        \left(\cV(\rho_s)-\cV(\rho)\right)\,ds.
\end{equation}
The Lipschitz estimate therefore yields
\begin{align}
    \norm{\varphi_\tau(\rho)-\rho-\tau\cV(\rho)}_1
    &\le\int_0^\tau
        \norm{\cV(\rho_s)-\cV(\rho)}_1\,ds \\
    &\le\lambda_{\rm exp}
        \int_0^\tau\norm{\rho_s-\rho}_1\,ds \\
    &\le2M_G\lambda_{\rm exp}\int_0^\tau s\,ds \\
    &=M_G\lambda_{\rm exp}\tau^2.
\end{align}

We next consider the approximation
\begin{equation}
    \Phi_s(\rho)=e^{sA_\rho}\rho e^{-sA_\rho}.
\end{equation}
Here, $A_\rho$ is held fixed as $s$ varies.
Since $A_\rho$ is anti-Hermitian, $e^{sA_\rho}$ is unitary,
and hence $\norm{\Phi_s(\rho)}_1=1$.
Differentiating with respect to $s$ gives
\begin{align}
    \frac{d}{ds}\Phi_s(\rho)
    &=\comm{A_\rho}{\Phi_s(\rho)},\\
    \frac{d^2}{ds^2}\Phi_s(\rho)
    &=\comm{A_\rho}{\comm{A_\rho}{\Phi_s(\rho)}}.
\end{align}
In particular, $\Phi_0(\rho)=\rho$ and
$\left.d\Phi_s(\rho)/ds\right|_{s=0}=\cV(\rho)$.
Applying the commutator bound twice gives
\begin{align}
    \norm{\frac{d^2}{ds^2}\Phi_s(\rho)}_1
    &\le2\norm{A_\rho}_{\rm op}
        \norm{\comm{A_\rho}{\Phi_s(\rho)}}_1\\
    &\le4\norm{A_\rho}_{\rm op}^2
        \norm{\Phi_s(\rho)}_1\\
    &\le4M_G^2.
\end{align}
Taylor's formula with integral remainder now reads
\begin{equation}
    \Phi_\tau(\rho)-\rho-\tau\cV(\rho)
    =\int_0^\tau(\tau-s)
        \frac{d^2}{ds^2}\Phi_s(\rho)\,ds.
\end{equation}
Consequently,
\begin{align}
    \norm{\Phi_\tau(\rho)-\rho-\tau\cV(\rho)}_1
    &\le4M_G^2\int_0^\tau(\tau-s)\,ds \\
    &=2M_G^2\tau^2.
\end{align}

Finally, the triangle inequality gives
\begin{align}
    \norm{\Phi_\tau(\rho)-\varphi_\tau(\rho)}_1
    &\le\norm{\Phi_\tau(\rho)-\rho-\tau\cV(\rho)}_1\notag\\
    &\quad+\norm{\varphi_\tau(\rho)-\rho-\tau\cV(\rho)}_1 \\
    &\le\left(2M_G^2+M_G\lambda_{\rm exp}\right)\tau^2 \\
    &=C_{\rm disc}\tau^2.
\end{align}
This proves Eq.~\eqref{eq:one-step-discretization-error}.
\qed

\section{Proof of Lemma~\ref{lem:one-step-implementation-error}} 
\label{appendix:proof-one-step-implementation-error}
We prove Lemma~\ref{lem:one-step-implementation-error}.
\vspace{1em}
\newline
\textit{Proof of Lemma~\ref{lem:one-step-implementation-error}}
\hspace{1em}
Fix $\rho\in\mathcal{S}_{\rm pure}$ and define
\begin{equation}
    \begin{split}
        \widehat V(\rho,s)
        \coloneqq{}&
        e^{iG(\rho)s}e^{i\rho s}
        e^{-iG(\rho)s}e^{-i\rho s}
        \\
        &\times
        e^{-iG(\rho)s}e^{-i\rho s}
        e^{iG(\rho)s}e^{i\rho s}.
    \end{split}
\end{equation}
For $s=\sqrt{\tau/2}$, replacing the four exponentials of
$G(\rho)$ by the corresponding $A_G$ factors gives
\begin{equation}
    \norm{
        V_{\rm Gen}(\rho,\tau)-\widehat V(\rho,s)
    }_{\rm op}
    \le
    2\epsilon_G(s)+2\epsilon_G(-s)
    \le 4c_G\tau^2,
\end{equation}
where we repeatedly applied the triangle inequality and used
the unitarity of all factors.

We next compare $\widehat V(\rho,s)$ with $W(\rho,\tau)$.
Differentiating the product at $s=0$ gives
\begin{align}
    \widehat V(\rho,0)
    &=\id,
    \\
    \left.
        \frac{\partial}{\partial s}\widehat V(\rho,s)
    \right|_{s=0}
    &=0,
    \\
    \left.
        \frac{\partial^2}{\partial s^2}\widehat V(\rho,s)
    \right|_{s=0}
    &=4\comm{\rho}{G(\rho)},
    \\
    \left.
        \frac{\partial^3}{\partial s^3}\widehat V(\rho,s)
    \right|_{s=0}
    &=0.
\end{align}

For real $s$, every exponential factor is unitary.
The product rule and the multinomial theorem therefore imply
\begin{align}
    \norm{
        \frac{\partial^4}{\partial s^4}\widehat V(\rho,s)
    }_{\rm op}
    &\le
    \left(
        4\norm{G(\rho)}_{\rm op}
        +4\norm{\rho}_{\rm op}
    \right)^4 \\
    &\le
    4^4(1+M_G)^4.
\end{align}
Taylor's formula with integral remainder, together with
$2s^2=\tau$, consequently yields
\begin{equation}
    \begin{split}
        &\norm{
            \widehat V(\rho,s)
            -\id-\tau\comm{\rho}{G(\rho)}
        }_{\rm op}
        \\
        \le&
        \frac{s^4}{4!}\,4^4(1+M_G)^4
        =
        \frac{8}{3}(1+M_G)^4\tau^2.
    \end{split}
\end{equation}

On the other hand, the reflection $R_\rho=\id-2\rho$ is
unitary, and hence
\begin{equation}
    \norm{\comm{\rho}{G(\rho)}}_{\rm op}
    =
    \frac{1}{2}
    \norm{\comm{R_\rho}{G(\rho)}}_{\rm op}
    \le
    \norm{G(\rho)}_{\rm op}
    \le M_G.
\end{equation}
Since $\comm{\rho}{G(\rho)}$ is anti-Hermitian,
Taylor's formula with integral remainder also gives
\begin{align}
    &\norm{
        W(\rho,\tau)
        -\id-\tau\comm{\rho}{G(\rho)}
    }_{\rm op}
    \notag\\
    \le
    &\int_0^\tau
        (\tau-t)
        \norm{\comm{\rho}{G(\rho)}}_{\rm op}^2
    \,dt
    \\
    \le
    &\frac{1}{2}M_G^2\tau^2.
\end{align}
Combining these estimates by the triangle inequality, we obtain
\begin{equation}
    \begin{split}
        &\norm{
            V_{\rm Gen}(\rho,\tau)-W(\rho,\tau)
        }_{\rm op}
        \\
        \le
        &\left(
            \frac{8}{3}(1+M_G)^4
            +\frac{1}{2}M_G^2
            +4c_G
        \right)\tau^2
        =C_{\rm GC}\tau^2,
    \end{split}
\end{equation}
which proves Eq.~\eqref{eq:symmetric-gc-error-bound}.
\qed

\section{Exponential Contraction for Normalized
Imaginary-Time Evolution}
\label{appendix:ITE-contraction}
We consider the normalized imaginary-time evolution
targeted by DB-QITE~\cite{Gluza2026DBQITE}, which corresponds
to $G(\rho)=H$ for a time-independent Hermitian
Hamiltonian $H$.
The exact solution map is
\begin{equation}
    \varphi_t(\rho)
    =
    \frac{e^{-tH}\rho e^{-tH}}
         {\Tr(e^{-2tH}\rho)}.
    \label{eq:ITE-exact-solution}
\end{equation}
We now derive an exponential contraction bound
on a subset of $\mathcal{S}_{\rm pure}$.
We then apply this bound to the global error and query
analysis of Sec.~\ref{sec:global-analysis} to derive a singly
exponential initial state preparation oracle query upper
bound of $\exp(O(T))$ in the target evolution time $T$,
under the additional conditions stated below.

Suppose that $H$ has a unique ground state with energy
$E_0$ and first excited-state energy $E_1>E_0$.
Let $\Omega\subset\mathcal{S}_{\rm pure}$ be the subset
consisting of all pure states satisfying
$\Tr[(H-E_0\id)\rho]\le(E_1-E_0)/4$, 
i.e., $\Omega \coloneqq \{ \rho \in \mathcal{S}_{\rm pure} \mid \Tr[(H-E_0\id)\rho]\le(E_1-E_0)/4 \}$.
We show that
\begin{equation}
    \norm{\varphi_t(\rho)-\varphi_t(\sigma)}_1
    \le
    e^{-(E_1-E_0)t/2}\norm{\rho-\sigma}_1
    \label{eq:ITE-contraction}
\end{equation}
for all $\rho,\sigma\in\Omega$ and $t\ge0$.
Thus, in Definition~\ref{def:Lipschitz-bound}, we may take
$\Lambda(t)=e^{-\lambda t}$ with
$\lambda=(E_1-E_0)/2>0$.

First, we show that exact solutions initialized in
$\Omega$ remain in $\Omega$.
For $\rho(t)=\varphi_t(\rho)$,
Eq.~\eqref{eq:fixed-db} and $\rho(t)^2=\rho(t)$ give
\begin{equation}
    \frac{d}{dt}\rho(t)
    =
    -H\rho(t)-\rho(t)H
    +2\Tr(H\rho(t))\rho(t).
    \label{eq:ITE-projector-evolution}
\end{equation}
Consequently,
\begin{equation}
    \frac{d}{dt}\Tr(H\rho(t))
    =
    -2\left[
        \Tr(H^2\rho(t))
        -\bigl(\Tr(H\rho(t))\bigr)^2
    \right]
    \le0.
    \label{eq:ITE-energy-decrease}
\end{equation}
The energy condition defining $\Omega$ is therefore
preserved by the exact evolution.

Next, let $\rho(t)=\varphi_t(\rho)$ and
$\sigma(t)=\varphi_t(\sigma)$ for distinct
$\rho,\sigma\in\Omega$.
Since $e^{-tH}$ is invertible at every finite time,
Eq.~\eqref{eq:ITE-exact-solution} implies that
$\rho(t)\ne\sigma(t)$.
Let $P(t)$ denote the rank-two projector onto the
subspace spanned by their state vectors.
For two distinct pure states,
\begin{equation}
    (\rho(t)-\sigma(t))^2
    =
    \bigl[1-\Tr(\rho(t)\sigma(t))\bigr]P(t),
    \label{eq:ITE-projector-difference}
\end{equation}
and
\begin{equation}
    \norm{\rho(t)-\sigma(t)}_1^2
    =
    4\bigl[1-\Tr(\rho(t)\sigma(t))\bigr].
    \label{eq:ITE-pure-state-distance}
\end{equation}
Differentiating
Eq.~\eqref{eq:ITE-pure-state-distance}
and using
Eqs.~\eqref{eq:ITE-projector-evolution}
and~\eqref{eq:ITE-projector-difference} yields
\begin{equation}
    \begin{split}
        \frac{d}{dt}\log\norm{\rho(t)-\sigma(t)}_1
        &=
        \Tr(H\rho(t))+\Tr(H\sigma(t))\\
        &\quad-\Tr(HP(t)).
    \end{split}
    \label{eq:ITE-distance-rate}
\end{equation}
Since $P(t)$ has rank two,
$\Tr(HP(t))\ge E_0+E_1$.
Moreover, $\rho(t),\sigma(t)\in\Omega$ implies
\begin{equation}
    \Tr(H\rho(t))+\Tr(H\sigma(t))
    \le
    2E_0+\frac{E_1-E_0}{2}.
\end{equation}
Substituting these inequalities into
Eq.~\eqref{eq:ITE-distance-rate} gives
\begin{equation}
    \frac{d}{dt}\log\norm{\rho(t)-\sigma(t)}_1
    \le
    -\frac{E_1-E_0}{2}.
\end{equation}
Integrating proves Eq.~\eqref{eq:ITE-contraction}.
The same inequality holds trivially when $\rho=\sigma$.

Finally, we apply this exponential contraction bound
to the global error and query analysis of
Sec.~\ref{sec:global-analysis}.
Suppose that $\rho_0\in\Omega$.
Equation~\eqref{eq:ITE-energy-decrease} ensures that
the exact trajectory remains in $\Omega$.
We additionally assume that the states $\rho_k$
generated by the algorithm remain in $\Omega$,
so that Eq.~\eqref{eq:Omega-containing-trajectories}
holds.
Under the remaining hypotheses of
Corollary~\ref{cor:query-complexity-simplified},
at fixed accuracy and model-dependent constants
and with $b=O(1)$ uniformly at the required step
sizes and accuracies, we obtain
\begin{equation}
    Q_M\le\exp(O(T)).
\end{equation}
Thus, under these additional conditions, the
exponential contraction bound improves the general
doubly exponential initial state preparation oracle
query upper bound $\exp(\exp(O(T)))$ to a singly
exponential upper bound in the target evolution
time $T$.

\section{Discrete GP Equation} 
\label{appendix:discrete-GP}
Equation~\eqref{eq:discrete-GP} is an instance of
Eq.~\eqref{eq:NLSE} with $H(\rho)=K+g\cD(\rho)$.
By Eq.~\eqref{eq:generator-NLSE}, the density operator
$\rho(t)$ satisfies the double-bracket equation
\eqref{eq:state-dependent-db} with
\begin{equation}
    G(\rho)
    \coloneqq
    -i\comm{\rho}{K+g\cD(\rho)}.
    \label{eq:generator-GP}
\end{equation}
Lemma~\ref{lem:GP-generator-bounds} verifies that $G(\rho)$ satisfies
the boundedness and Lipschitz continuity assumptions
imposed in Sec.~\ref{sec:proposed-framework}.

\begin{lemma}
\label{lem:GP-generator-bounds}
For the generator $G(\rho)$ defined in Eq.~\eqref{eq:generator-GP},
the boundedness and Lipschitz conditions in
Eqs.~\eqref{eq:G-sup-on-pure} and~\eqref{eq:G-Lipschitz-bound}
hold with
\begin{align}
    M_G
    &= \norm{K}_{\rm op}+\frac{\lvert g\rvert}{2},
    \\
    L_G
    &= \norm{K}_{\rm op}+\lvert g\rvert.
\end{align}
\end{lemma}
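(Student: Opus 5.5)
The plan is to treat the two constants separately, using three elementary facts about rank-one projectors together with the structure of the dephasing map. Write $H(\rho)=K+g\cD(\rho)$, so that $G(\rho)=-i\comm{\rho}{K}-ig\comm{\rho}{\cD(\rho)}$. The three facts are as follows.

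\emph{Fact (a).} For a rank-one projector $\rho$ and any Hermitian $Y$,
\begin{equation}
\comm{\rho}{Y}=\rho Y(\id-\rho)-(\id-\rho)Y\rho .
\end{equation}
This is block off-diagonal with respect to $\rho$, so $\norm{\comm{\rho}{Y}}_{\rm op}=\norm{(\id-\rho)Y\rho}_{\rm op}\le\norm{Y}_{\rm op}$. This improves the crude factor $2$.

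\emph{Fact (b).} The commutator is invariant under shifts: $\comm{\rho}{Y}=\comm{\rho}{Y-c\id}$ for every real $c$.

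\emph{Fact (c).} For pure $\rho,\sigma$, the traceless rank-two operator $\rho-\sigma$ has eigenvalues $\pm\mu$. Hence $\norm{\rho-\sigma}_{\rm op}=\tfrac12\norm{\rho-\sigma}_1$.

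\emph{Boundedness.} The $K$ term gives at most $\norm{K}_{\rm op}$ by Fact (a). For the $g$ term, note that $\cD(\rho)$ is diagonal with entries $\bra{j}\rho\ket{j}\in[0,1]$. Therefore $\norm{\cD(\rho)-\tfrac12\id}_{\rm op}\le\tfrac12$. By Fact (b) and then Fact (a),
\begin{equation}
\norm{\comm{\rho}{\cD(\rho)}}_{\rm op}=\norm{\comm{\rho}{\cD(\rho)-\tfrac12\id}}_{\rm op}\le\tfrac12 .
\end{equation}
Adding the two contributions gives $M_G=\norm{K}_{\rm op}+\lvert g\rvert/2$.

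\emph{Lipschitz bound.} First, $\comm{\rho}{K}-\comm{\sigma}{K}=\comm{\rho-\sigma}{K}$. Bounding this by $2\norm{\rho-\sigma}_{\rm op}\norm{K}_{\rm op}$ and applying Fact (c) gives $\norm{K}_{\rm op}\norm{\rho-\sigma}_1$. For the nonlinear part, split
\begin{equation}
\comm{\rho}{\cD(\rho)}-\comm{\sigma}{\cD(\sigma)}
=\comm{\rho-\sigma}{\cD(\rho)-\tfrac12\id}+\comm{\sigma}{\cD(\rho)-\cD(\sigma)} .
\end{equation}
The first piece is at most $2\norm{\rho-\sigma}_{\rm op}\cdot\tfrac12=\tfrac12\norm{\rho-\sigma}_1$.

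For the second piece, apply Fact (a) with the projector $\sigma$ and the Hermitian operator $\cD(\rho-\sigma)$. This gives at most $\norm{\cD(\rho-\sigma)}_{\rm op}$. This equals the largest diagonal magnitude $\lvert\bra{j}(\rho-\sigma)\ket{j}\rvert$, which is bounded by $\norm{\rho-\sigma}_{\rm op}=\tfrac12\norm{\rho-\sigma}_1$. Summing, the $g$ part contributes $\lvert g\rvert\norm{\rho-\sigma}_1$, so $L_G=\norm{K}_{\rm op}+\lvert g\rvert$.

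The main obstacle is obtaining exactly the stated constants rather than bounds off by factors of $2$. This hinges on three points: using the off-diagonal-block form in Fact (a) instead of the triangle inequality, centering $\cD(\rho)$ at $\tfrac12\id$ via Fact (b), and converting operator norms to trace norms through Fact (c). I would check each of these carefully. In particular, I would verify that Fact (a) applies to $\cD(\rho-\sigma)$: it is Hermitian, which suffices, even though it is not a positive operator.
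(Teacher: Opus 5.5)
Your proposal is correct and follows the paper's proof essentially step for step. It uses the same centering of $\cD(\rho)$ at $\tfrac12\id$, the same split $\comm{\rho-\sigma}{\cD(\rho)-\tfrac12\id}+\comm{\sigma}{\cD(\rho-\sigma)}$, and the same conversion $\norm{\rho-\sigma}_{\rm op}=\tfrac12\norm{\rho-\sigma}_1$; the only cosmetic difference is that the paper obtains $\norm{\comm{\rho}{X}}_{\rm op}\le\norm{X}_{\rm op}$ from $\comm{\rho}{X}=-\tfrac12\comm{R_\rho}{X}$ with the unitary reflection $R_\rho=\id-2\rho$, whereas you use the block off-diagonal form, which gives the same bound.
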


\begin{proof}
For any $\rho\in\mathcal{S}_{\rm pure}$ and any linear operator $X$ on $\cH$,
the unitarity of $R_\rho=\id-2\rho$ implies
\begin{equation}
    \norm{\comm{\rho}{X}}_{\rm op}
    =
    \frac{1}{2}\norm{\comm{R_\rho}{X}}_{\rm op}
    \le
    \norm{X}_{\rm op}.
\end{equation}
Moreover, $0\le\cD(\rho)\le\id$ gives
\begin{equation}
    \norm{\cD(\rho)-\frac{1}{2}\id}_{\rm op}
    \le
    \frac{1}{2}.
\end{equation}
Since scalar multiples of the identity do not contribute
to commutators, Eq.~\eqref{eq:generator-GP} yields
\begin{align}
    \norm{G(\rho)}_{\rm op}
    &\le
    \norm{\comm{\rho}{K}}_{\rm op}
    +
    \lvert g \rvert
    \norm{
        \comm{\rho}{\cD(\rho)-\frac{1}{2}\id}
    }_{\rm op}
    \\
    &\le
    \norm{K}_{\rm op}+\frac{\lvert g \rvert}{2}
    =M_G.
\end{align}
Thus, Eq.~\eqref{eq:G-sup-on-pure} holds with the stated
choice of $M_G$.

Next, let $\rho,\sigma\in\mathcal{S}_{\rm pure}$.
The operator $\rho-\sigma$ is Hermitian, traceless, and
of rank at most two. Its nonzero eigenvalues therefore
have equal magnitude and opposite signs, so
\begin{equation}
    \norm{\rho-\sigma}_{\rm op}
    =
    \frac{1}{2}\norm{\rho-\sigma}_1.
\end{equation}
The definition of $\cD$ also gives
\begin{equation}
    \norm{\cD(\rho-\sigma)}_{\rm op}
    =
    \max_j
    \lvert
        \bra{j}(\rho-\sigma)\ket{j}
    \rvert
    \le
    \norm{\rho-\sigma}_{\rm op}.
\end{equation}
Using the linearity of $\cD$, we write
\begin{equation}
    \begin{split}
        G(\rho)-G(\sigma)
        ={}&
        -i\comm{
            \rho-\sigma
        }{
            K+g\left(\cD(\rho)-\frac{1}{2}\id\right)
        }
        \\
        &-ig\comm{\sigma}{\cD(\rho-\sigma)}.
    \end{split}
\end{equation}
Applying the preceding bounds to the two commutators yields
\begin{align}
    &\norm{G(\rho)-G(\sigma)}_{\rm op}
    \notag\\
    \le
    &2\norm{\rho-\sigma}_{\rm op}
    \left(
        \norm{K}_{\rm op}+\frac{\lvert g \rvert}{2}
    \right) \notag
    \\
    &+
    \lvert g \rvert\norm{\cD(\rho-\sigma)}_{\rm op}
    \\
    \le
    &2\left(\norm{K}_{\rm op}+\lvert g \rvert\right)
    \norm{\rho-\sigma}_{\rm op}
    \\
    =
    &L_G\norm{\rho-\sigma}_1.
\end{align}
Hence, Eq.~\eqref{eq:G-Lipschitz-bound} holds with the stated
choice of $L_G$, completing the proof.
\end{proof}

As in Sec.~\ref{sec:proposed-framework}, we assume access
to an initial state preparation oracle $U_0$ satisfying
$U_0\ket{0}=\ket{\psi(0)}$ and its inverse $U_0^\dagger$.
We additionally assume ancilla-free access to $e^{iK\theta}$
for arbitrary $\theta\in\mathbb{R}$.
At step $k$, the recursively constructed circuit $U_k$
prepares $\ket{\psi_k}$, with $\rho_k=\proj{\psi_k}$.
For $0<\tau\le1$, the NLSE-specific circuit of Sec.~\ref{subsec:one-step-error-NLSE} approximates
$W(\rho_k,\tau)=e^{\tau\comm{\rho_k}{G(\rho_k)}}$
using $O(\tau^2)$-accurate approximations to
$e^{\pm i(K+g\cD(\rho_k))\tau/2}$.
Since $e^{\pm iK\tau/2}$ is available, applying the
first-order Lie--Trotter formula reduces this task to
approximating $e^{\pm ig\cD(\rho_k)\tau/2}$ with
$O(\tau^2)$ operator-norm error.

To implement these exponentials, we use the following
representation of the dephasing map as a uniform average
over conjugations by tensor products of $\id$ and $Z$:
\begin{equation}
    \cD(\rho)
    =
    \frac{1}{2^n}
    \sum_{s\in\mathbb{F}_2^n}Z^s\rho Z^s.
\end{equation}
Here, $Z$ denotes the Pauli $Z$ operator,
$\mathbb{F}_2=\{0,1\}$ is the field with two elements,
and $Z^s\coloneqq Z^{s_1}\otimes\cdots\otimes Z^{s_n}$.
Applying the first-order Lie--Trotter formula directly to this
sum gives an $O(\tau^2)$-accurate approximation with $N=2^n$ factors.
Each factor can be implemented as
$e^{\pm ig Z^s\rho_k Z^s\tau/(2N)}
= Z^s U_k e^{\pm ig\proj{0}\tau/(2N)}U_k^\dagger Z^s$,
so the approximation requires $O(N)$ calls to
$U_k$ and $U_k^\dagger$.
Theorem~\ref{thm:query-complexity-Lipschitz-bounds} with $b=O(N)$
then yields an initial state preparation oracle query upper
bound of $\exp(O(Mn))$, which is exponential in $n$ even for fixed $M$ and does not establish a polylogarithmic
dependence on $N$.
To avoid this exponential dependence on $n$, we seek an
alternative implementation 
for which we may take
$b=O(\poly(n))$ in Theorem~\ref{thm:query-complexity-Lipschitz-bounds}
for fixed $T$ and $\epsilon$,
assuming that $\lvert g\rvert$ and $\norm{K}_{\rm op}$
are bounded independently of $n$.
To this end, we approximate the dephasing map using a small-bias set.

\begin{definition}[$\delta$-biased set;
Definition 2.2 in Ref.~\cite{ta-shma_explicit_2017}]
\label{def:delta-biased-set}
Let $\mathcal{S}=(s_1,\ldots,s_L)$ be an indexed multiset
of $L$ vectors in $\mathbb{F}_2^n$, each consisting of $n$ bits.
For $z\in\mathbb{F}_2^n\setminus\{0\}$, define
\begin{equation}
    \beta_{\mathcal{S}}(z)
    \coloneqq
    \frac{1}{L}
    \sum_{\nu=1}^{L}(-1)^{s_\nu\cdot z},
\end{equation}
where the inner product is taken over $\mathbb{F}_2$.
We call $\mathcal{S}$ $\delta$-biased if
$\lvert\beta_{\mathcal{S}}(z)\rvert\le\delta$
for every $z\in\mathbb{F}_2^n\setminus\{0\}$.
\end{definition}

The following theorem provides an explicit construction
with a controlled number of terms.

\begin{theorem}[Theorem 1.2 in Ref.~\cite{ta-shma_explicit_2017}]
\label{thm:near-optimal-small-bias}
For every integer $n\ge1$ and every $0<\delta<1/2$,
there exists an explicit, deterministically constructible
$\delta$-biased indexed multiset $\mathcal{S}$ in
$\mathbb{F}_2^n$ of size
\begin{equation}
    L
    =
    O\left(\frac{n}{\delta^{2+o(1)}}\right).
\end{equation}
Here, $o(1)$ denotes a quantity tending to zero as
$\delta\to0$.
\end{theorem}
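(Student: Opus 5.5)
This statement is quoted from Ref.~\cite{ta-shma_explicit_2017} and we use it only as a black box. If we had to reproduce it, we would follow the standard route: view a $\delta$-biased multiset as the columns of the generator matrix of a linear code. Concretely, let the rows of an $n\times L$ matrix $A$ be indexed by coordinates and its columns by the $s_\nu$. Then $\beta_{\mathcal S}(z)=1-2\,\mathrm{wt}(z^{\top}A)/L$, so $\mathcal S$ is $\delta$-biased exactly when every nonzero codeword of the code spanned by the rows of $A$ has relative weight in $[(1-\delta)/2,(1+\delta)/2]$. The task therefore reduces to explicitly constructing an $\epsilon$-balanced binary code of dimension $n$ and length $L=O(n/\delta^{2+o(1)})$.

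\textbf{Step 1 (base code).} We would first fix an explicit code $C_0$ of dimension $n$, length $n_0=O(n)$, and constant bias $\epsilon_0$. For example, we could concatenate an algebraic-geometry or Reed--Solomon outer code with a small inner code, or use a Justesen-type construction.

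\textbf{Step 2 (bias amplification by walks).} Next we identify the coordinates of $C_0$ with the vertices of an explicit expander of degree $D$ and second eigenvalue $\lambda\approx 2/\sqrt D$. For each walk $w=(v_0,\dots,v_t)$ we form the coordinate $\bigoplus_i c_{v_i}$. This gives a code of length $n_0D^t$. For a codeword $c$, the bias of the new coordinate equals $\langle \mathbf 1,(\Pi_cM)^t\Pi_c\mathbf 1\rangle$, where $\Pi_c$ is the diagonal sign matrix of $c$ and $M$ is the walk operator. The standard operator-norm bound gives bias at most $(\epsilon_0+2\lambda)^{\lfloor t/2\rfloor}$. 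Setting this equal to $\delta$ produces length roughly $n\,\delta^{-4-o(1)}$, which is the Rozenman--Wigderson/AGHP-type bound and off by a square.

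\textbf{Step 3 (the main obstacle).} To reach exponent $2+o(1)$, every step of the walk must reduce the bias by a factor close to $\lambda\approx D^{-1/2}$, not only every second step. Following Ta-Shma, we would replace the plain walk by an $s$-wide replacement product. The outer graph is a $D_1$-regular expander, and the inner graph is a small expander on $D_1^s$ vertices of degree $D_2\ll D_1$ with second eigenvalue $\lambda_2$. At each step, the edge label used on the outer graph is read from a shifting $s$-block of the inner vertex. The key lemma to prove is that, for $s$ consecutive steps, the product operator restricted to $\mathbf 1^\perp$-type components decays like $\lambda_2^{\,s-1}$ up to lower-order terms. The proof would decompose vectors into their parallel and perpendicular parts with respect to the inner cloud and track the ``bad'' parallel parts with a combinatorial counting argument. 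Summed over $t$ steps, this yields bias at most $(\lambda_2^{1-O(1/s)})^{t}$. Since the cost per step is only $D_2=\lambda_2^{-2}$, choosing $s\to\infty$ slowly as $\delta\to0$ gives length $L=O(n\,\delta^{-2-o(1)})$. This step is the one we expect to be hardest.

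\textbf{Step 4 (explicitness).} Finally, we would check that all graphs are explicit, for example Ramanujan or zig-zag families, and that the parameters $D_1,D_2,s,t$ can be tuned for every $0<\delta<1/2$, so that the multiset is deterministically constructible.
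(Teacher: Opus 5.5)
Your proposal matches the paper: the paper gives no proof of this statement and imports Theorem~1.2 of Ta-Shma as a black box, exactly as you do. Your supplementary outline is a faithful sketch of the source's argument: biased sets as balanced codes, the $(\epsilon_0+2\lambda)^{\lfloor t/2\rfloor}$ expander-walk bound giving exponent $4$, and the $s$-wide replacement product to reach $2+o(1)$.

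One detail to fix if you ever expand it: in Step~3 the base-code bias $\epsilon_0$ cannot stay a fixed constant. The per-step decay factor in the $s$-wide analysis has the form $\lambda_2+O(\epsilon_0)+O(\lambda_2^2)$, so $\epsilon_0$ must be chosen polynomially small in $\lambda_2$ and tuned together with $D_1,D_2,s,t$; this costs only a $\delta^{-o(1)}$ factor in length.
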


For a $\delta$-biased indexed multiset $\mathcal{S}$ of size $L$
provided by Theorem~\ref{thm:near-optimal-small-bias},
we define the approximate dephasing map
\begin{equation}
    \cD_{\mathcal{S}}(\rho)
    \coloneqq
    \frac{1}{L}
    \sum_{\nu=1}^{L}Z^{s_\nu}\rho Z^{s_\nu}.
\end{equation}
We use this approximate dephasing map to construct the
required unitary exponentials.
Lemma~\ref{lem:small-bias-dephasing-error} bounds the operator-norm error
of this approximation for pure states.

\begin{lemma}[Small-bias approximation of dephasing]
\label{lem:small-bias-dephasing-error}
For $\rho\in\mathcal{S}_{\rm pure}$,
\begin{equation}
    \norm{\cD_{\mathcal{S}}(\rho)-\cD(\rho)}_{\rm op}
    \le\delta.
\end{equation}
\end{lemma}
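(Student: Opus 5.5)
The plan is to expand both maps in the computational (Pauli-$Z$ Fourier) basis and reduce the operator-norm bound to the bias condition in Definition~\ref{def:delta-biased-set}. For $s\in\mathbb{F}_2^n$ and basis states $\ket{j},\ket{k}$, we have $Z^s\ket{j}=(-1)^{s\cdot j}\ket{j}$. Hence
\begin{equation}
    \bra{j}Z^s\rho Z^s\ket{k}
    =(-1)^{s\cdot(j\oplus k)}\bra{j}\rho\ket{k}.
\end{equation}
Averaging over $s\in\mathcal S$ multiplies each matrix element by $\beta_{\mathcal S}(j\oplus k)$ when $j\ne k$, and by $1$ when $j=k$. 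The full average over $\mathbb{F}_2^n$ instead gives the factor $0$ off the diagonal. Therefore $\cD_{\mathcal S}(\rho)-\cD(\rho)$ has vanishing diagonal and off-diagonal entries $\beta_{\mathcal S}(j\oplus k)\bra{j}\rho\ket{k}$. In other words, it is the Schur (Hadamard) product $B\circ\rho$, where $B_{jk}=\beta_{\mathcal S}(j\oplus k)$ for $j\ne k$ and $B_{jj}=0$.

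Next I use purity. Write $\rho=\proj{\psi}$ with $\psi_j=\braket{j|\psi}$, so the difference equals $D_\psi B D_\psi^\dagger$ with $D_\psi=\operatorname{diag}(\psi_j)$. For unit vectors $u,v$,
\begin{equation}
    \lvert\bra{u}D_\psi BD_\psi^\dagger\ket{v}\rvert
    =\Bigl\lvert\sum_{j\ne k}\overline{u_j}\psi_j\,\beta_{\mathcal S}(j\oplus k)\,\overline{\psi_k}v_k\Bigr\rvert
    \le\delta\sum_{j,k}\lvert u_j\psi_j\rvert\,\lvert\psi_k v_k\rvert .
\end{equation}
The last sum equals $\bigl(\sum_j\lvert u_j\psi_j\rvert\bigr)\bigl(\sum_k\lvert\psi_kv_k\rvert\bigr)$. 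By Cauchy--Schwarz, each factor is at most $\norm{u}_2\max_j\lvert\psi_j\rvert\le1$, and likewise for $v$; more simply, $\sum_j\lvert u_j\psi_j\rvert\le\norm{u}_2\norm{\psi}_2=1$. Taking the supremum over $u,v$ gives $\norm{\cD_{\mathcal S}(\rho)-\cD(\rho)}_{\rm op}\le\delta$. This is the key step: naively bounding the operator norm of $B$ would cost a factor of $N$, and it is the rank-one structure that makes the entrywise bias bound transfer to the operator norm.

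The only subtlety is the inner-product convention in $\beta_{\mathcal S}$ versus the sign $(-1)^{s\cdot(j\oplus k)}$. This is routine once the bit strings labelling $\ket{j}$ are identified with $\mathbb{F}_2^n$. Note also that $z=j\oplus k\ne0$ exactly when $j\ne k$, so the bias hypothesis applies to every off-diagonal entry.
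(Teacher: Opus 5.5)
Your proof is correct. Its first step, the entrywise identity $\bra{j}\cD_{\mathcal S}(\rho)-\cD(\rho)\ket{k}=\beta_{\mathcal S}(j\oplus k)\bra{j}\rho\ket{k}$ for $j\ne k$ with vanishing diagonal, is exactly the paper's. The difference lies in how you pass to the operator norm.

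The paper dominates the operator norm by the Frobenius norm: $\norm{\cD_{\mathcal S}(\rho)-\cD(\rho)}_{\rm op}^2\le\delta^2\sum_{x\ne y}\lvert\bra{x}\rho\ket{y}\rvert^2\le\delta^2\Tr(\rho^2)=\delta^2$. You instead factor the difference as $D_\psi BD_\psi^\dagger$ and bound the bilinear form. Here $D_\psi^\dagger$ maps the $\ell_2$ unit ball into the $\ell_1$ unit ball, and the entrywise bound $\lvert B_{jk}\rvert\le\delta$ controls $\lvert\langle a,Bb\rangle\rvert$ by $\delta\norm{a}_1\norm{b}_1$.

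The Frobenius route is shorter and uses only $\Tr(\rho^2)\le1$. So it applies verbatim to every density operator, whereas yours needs a convex decomposition to get there. For an arbitrary operator $X$ it gives $\norm{\cD_{\mathcal S}(X)-\cD(X)}_{\rm op}\le\delta\norm{X}_{\rm F}$. Your argument, extended through a singular value decomposition, gives only $\delta\norm{X}_1$. The two agree precisely for rank-one $X$, such as pure states.

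Your closing remark, that the rank-one structure is what lets the entrywise bias bound transfer to the operator norm, is therefore slightly overstated: Hilbert--Schmidt normalization already suffices.

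One small slip: the intermediate bound $\sum_j\lvert u_j\psi_j\rvert\le\norm{u}_2\max_j\lvert\psi_j\rvert$ is false in general. For $u=\psi$ uniform, the left side is $1$ and the right side is $N^{-1/2}$. Keep only the Cauchy--Schwarz bound $\norm{u}_2\norm{\psi}_2=1$, which you also state and which is what the argument actually uses.
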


\begin{proof}
For distinct $x,y\in\mathbb{F}_2^n$,
\begin{equation}
    \braket{x|\cD_{\mathcal{S}}(\rho)-\cD(\rho)|y}
    =
    \beta_{\mathcal{S}}(x\oplus y)\braket{x|\rho|y},
\end{equation}
while the diagonal entries vanish.
Since $\lvert\beta_{\mathcal{S}}(x\oplus y)\rvert\le\delta$,
\begin{align}
    \norm{\cD_{\mathcal{S}}(\rho)-\cD(\rho)}_{\rm op}^2
    \le&
    \norm{\cD_{\mathcal{S}}(\rho)-\cD(\rho)}_{\rm F}^2
    \\
    \le&
    \delta^2\sum_{x\ne y}
    \left\lvert\braket{x|\rho|y}\right\rvert^2
    \\
    \le&
    \delta^2\Tr(\rho^2)
    =
    \delta^2,
\end{align}
where $\norm{\cdot}_{\rm F}$ denotes the Frobenius norm.
\end{proof}

We now construct an ancilla-free approximation to
$e^{\pm ig\cD(\rho_k)\tau/2}$ with $O(\tau^2)$ operator-norm error.
For fixed $\tau$, this implementation uses $O(\poly(n))$
calls in total to $U_k$ and $U_k^\dagger$.
First, we choose
\begin{equation}
    \delta=\frac{\tau}{4(1+\lvert g\rvert)}.
\end{equation}
This choice satisfies $0<\delta<1/2$ for $0<\tau\le1$.
Since $\cD_{\mathcal{S}}(\rho_k)$ and $\cD(\rho_k)$ are Hermitian,
Lemma~\ref{lem:small-bias-dephasing-error} implies,
for either choice of sign,
\begin{equation}
    \norm{
        e^{\pm ig\cD_{\mathcal{S}}(\rho_k)\tau/2}
        -
        e^{\pm ig\cD(\rho_k)\tau/2}
    }_{\rm op}
    \le
    \frac{\lvert g\rvert}{2}\tau\delta
    \le
    \frac{\tau^2}{8}.
    \label{eq:exp-small-bias-approx}
\end{equation}
For this choice of $\delta$,
Theorem~\ref{thm:near-optimal-small-bias} provides
a $\delta$-biased indexed multiset $\mathcal{S}$ of size
\begin{equation}
    L
    =
    O\left(
        n\left(\frac{1+\lvert g\rvert}{\tau}\right)^{2+o(1)}
    \right).
    \label{eq:near-optimal-set-for-GP}
\end{equation}
Note that the multiset $\mathcal{S}$ is chosen for each step size $\tau$.
We next approximate $e^{ig\cD_{\mathcal{S}}(\rho_k)\theta}$
for $\theta \in \mathbb{R}$ using the first-order Lie--Trotter formula:
\begin{equation}
    A_{\mathcal{S},k}(\theta)
    \coloneqq
    \prod_{\nu=1}^{L}
    e^{i(g/L)\rho_{k,s_\nu}\theta},
\end{equation}
where $\rho_{k,s_\nu}\coloneqq Z^{s_\nu}\rho_k Z^{s_\nu}$.
Each factor is implemented without ancilla qubits as
\begin{equation}
    e^{i(g/L)\rho_{k,s_\nu}\theta}
    =
    Z^{s_\nu}U_k
    e^{i(g/L)\proj{0}\theta}
    U_k^\dagger Z^{s_\nu}.
\end{equation}
Thus, $A_{\mathcal{S},k}(\theta)$ uses $2L$ calls in total
to $U_k$ and $U_k^\dagger$.
The first-order Lie--Trotter formula bound gives
\begin{align}
    \norm{
        A_{\mathcal{S},k}(\theta)
        -
        e^{ig\cD_{\mathcal{S}}(\rho_k)\theta}
    }_{\rm op}
    &\le
    \frac{g^2\theta^2}{2L^2}
    \sum_{1\le\mu<\nu\le L}
    \norm{
        \comm{\rho_{k,s_\mu}}{\rho_{k,s_\nu}}
    }_{\rm op}
    \\
    &\le
    \frac{g^2 \theta^2}{2},
\end{align}
where we used $\norm{\rho_{k,s_\nu}}_{\rm op}=1$.
Setting $\theta=\pm\tau/2$ and combining this bound with
Eq.~\eqref{eq:exp-small-bias-approx} yields
\begin{equation}
    \norm{
        A_{\mathcal{S},k}(\pm\tau/2)
        -
        e^{\pm ig\cD(\rho_k)\tau/2}
    }_{\rm op}
    \le
    \frac{g^2+1}{8}\tau^2.
\end{equation}
We therefore implement the $A_H$ factors in
Eq.~\eqref{eq:NLSE-one-step} as
\begin{equation}
    A_H(\rho_k, \pm \tau/2)
    =
    e^{\pm iK \tau/2}A_{\mathcal{S},k}(\pm \tau/2).
\end{equation}
Applying the first-order Lie--Trotter formula and using
$\norm{\cD(\rho_k)}_{\rm op}\le1$, we obtain
\begin{equation}
    \begin{split}
        &\norm{
            A_H(\rho_k,\pm\tau/2)
            -
            e^{\pm i(K+g\cD(\rho_k))\tau/2}
        }_{\rm op}
        \\
        \le&
        \left(
            \frac{g^2+1}{8}
            +
            \frac{\lvert g\rvert\norm{K}_{\rm op}}{4}
        \right)\tau^2.
    \end{split}
\end{equation}
These unitaries meet the accuracy requirement of
Sec.~\ref{subsec:one-step-error-NLSE}, with $2L$ calls in total to
$U_k$ and $U_k^\dagger$ for each sign.

Finally, we count queries to $U_0$ and $U_0^\dagger$
by recursively expanding the calls to $U_k$ and $U_k^\dagger$.
The circuit $V_{\rm NLSE}(\rho_k,\tau)$ in
Eq.~\eqref{eq:NLSE-one-step} contains two $A_H$ factors
and two reflections $R_{\rho_k}$.
The $A_H$ factors require $4L$ calls in total, and the
reflections require four additional calls.
Including the rightmost $U_k$ in
$U_{k+1}=V_{\rm NLSE}(\rho_k,\tau)U_k$ gives
\begin{equation}
    Q_{k+1}\le(4L+5)Q_k.
\end{equation}
Using $Q_0=1$, $M=T/\tau$, and
Eq.~\eqref{eq:near-optimal-set-for-GP}, we obtain
\begin{align}
    Q_M
    &\le
    (4L+5)^M
    \\
    &\le
    \exp\left[
        O\left(
            \frac{T}{\tau}
            \log\frac{2n(1+\lvert g\rvert)}{\tau}
        \right)
    \right].
\end{align}
Thus, the simulation is deterministic, fully coherent,
and ancilla-free.
Here $b=4L+4$ in Theorem~\ref{thm:query-complexity-Lipschitz-bounds}.
For fixed $T$ and $\epsilon$, with $\lvert g\rvert$ and
$\norm{K}_{\rm op}$ bounded independently of $n$,
the one-step error bounds and
Lemma~\ref{lem:generic-exponential-solution-bound}
allow $M=T/\tau$ to be chosen independently of $n$
to achieve the target accuracy.
Equation~\eqref{eq:near-optimal-set-for-GP} then gives $L=O(n)$,
and hence $Q_M=O(n^M)=O(\poly(n))$.
Under these assumptions, the small-bias construction thus
reduces the initial state preparation oracle query upper bound
from exponential to polynomial in $n$, yielding an upper bound that is polylogarithmic in $N$.

\section{Proof of Lemma~\ref{lem:single-qubit-GP-relative-difference}} 
\label{appendix:proof-single-qubit-GP-relative-difference}
We prove Lemma~\ref{lem:single-qubit-GP-relative-difference}.
\vspace{1em}
\newline
\textit{Proof of Lemma~\ref{lem:single-qubit-GP-relative-difference}}
\hspace{1em}
We first prove 
the exponential expansion bound.
With $s=gt$ and $r=(x,y,z)$, the Bloch equations become
\begin{equation}
    \frac{dr}{ds}=f(r),
\end{equation}
where
\begin{equation}
    f(x,y,z)=\left(-yz,\,z\left(x-\frac12\right),\,\frac y2\right).
\end{equation}
Since $r\cdot f(r)=0$, solutions preserve $\norm{r}_2$.
The Jacobian is
\begin{equation}
    Df(r)=
    \begin{pmatrix}
        0&-z&-y\\
        z&0&x-1/2\\
        0&1/2&0
    \end{pmatrix}.
\end{equation}
For $\norm{r}_2\le1$ and $v=(v_x,v_y,v_z)\in\mathbb R^3$,
the Cauchy--Schwarz inequality gives
\begin{align}
    v^{\mathsf T}Df(r)v
    &=v_z(-yv_x+xv_y) \\
    &\le\sqrt{x^2+y^2}\,|v_z|\sqrt{v_x^2+v_y^2} \\
    &\le\frac12\norm{v}_2^2.
\end{align}
Let $r(s)$ and $q(s)$ be two solution Bloch vectors and set
$d(s)=r(s)-q(s)$.
The line segment joining them lies in the closed unit ball,
so the mean value formula yields
\begin{equation}
    f(r(s))-f(q(s))
    =\int_0^1Df(q(s)+u d(s))d(s)\,du.
\end{equation}
Consequently,
\begin{equation}
    \frac{d}{ds}\norm{d(s)}_2^2
    =2\int_0^1d(s)^{\mathsf T}
       Df(q(s)+u d(s))d(s)\,du
    \le\norm{d(s)}_2^2.
\end{equation}
Gronwall's inequality gives
$\norm{d(s)}_2\le e^{s/2}\norm{d(0)}_2$.
For qubit states $\rho$ and $\sigma$ with Bloch vectors $r$
and $q$, the eigenvalues of $\rho-\sigma$ are
$\pm\norm{r-q}_2/2$.
Hence
\begin{equation}
    \norm{\rho-\sigma}_1=\norm{r-q}_2.
\end{equation}
Setting $s=g\tau$ proves
Eq.~\eqref{eq:single-qubit-GP-solution-bound} for every $\tau\ge0$.

We next prove the one-step error bound relative to
$\rho_{\rm sta}$.
Set $h=g\tau$ and define
\begin{equation}
    \overline H(\rho)
    =\frac X4+\frac{\rho+Z\rho Z}{2}.
\end{equation}
Let $\Psi_h$ be the solution map of
$d\rho/dh=-i\comm{\overline H(\rho)}{\rho}$, so that
$\Psi_h(\rho)=\varphi_{h/g}(\rho)$.
Using Eq.~\eqref{eq:single-qubit-GP-H-step}, write
\begin{equation}
    A_\pm(\rho,h)
    =e^{\pm ihX/8}e^{\pm ih\rho/4}e^{\pm ihZ\rho Z/4}
\end{equation}
and
\begin{equation}
    V_h(\rho)=A_-(\rho,h)R_\rho A_+(\rho,h)R_\rho,
\end{equation}
where $R_\rho=\id-2\rho$.
Then
$\widetilde\Phi^{({\rm GP})}_\tau(\rho)
=V_h(\rho)\rho V_h(\rho)^\dagger$.

At $h=0$, $A_\pm(\rho,0)=\id$ and
\begin{equation}
    \left.\partial_h A_\pm(\rho,h)\right|_{h=0}
    =\pm\frac i2\overline H(\rho).
\end{equation}
For pure $\rho$, the identities $R_\rho^2=\id$ and
$R_\rho\rho=\rho R_\rho=-\rho$ therefore give
$V_0(\rho)=\id$ and
\begin{equation}
    \left.\partial_h V_h(\rho)\right|_{h=0}
    =-\frac i2\left(
        \overline H(\rho)-R_\rho\overline H(\rho)R_\rho
      \right).
\end{equation}
Since
$\comm{R_\rho\overline H(\rho)R_\rho}{\rho}
=-\comm{\overline H(\rho)}{\rho}$,
we obtain
\begin{align}
    &\left.\partial_h
        \left[V_h(\rho)\rho V_h(\rho)^\dagger\right]
      \right|_{h=0}\notag\\
    &\qquad=-i\comm{\overline H(\rho)}{\rho}
      =\left.\partial_h\Psi_h(\rho)\right|_{h=0}.
\end{align}
Thus, the implemented and exact maps agree to first order
in $h$.

Both maps also fix $\rho_{\rm sta}=\proj{+}$ exactly.
Indeed, $Z\rho_{\rm sta}Z=\id-\rho_{\rm sta}$ implies that
$\overline H(\rho_{\rm sta})=X/4+\id/2$ commutes with
$\rho_{\rm sta}$, and hence
$\Psi_h(\rho_{\rm sta})=\rho_{\rm sta}$.
Moreover,
\begin{equation}
    A_\pm(\rho_{\rm sta},h)
    =e^{\pm ih/4}e^{\pm ihX/8}.
\end{equation}
These operators commute with $R_{\rho_{\rm sta}}=-X$,
so $V_h(\rho_{\rm sta})=\id$.

To obtain a bound proportional to the distance from
$\rho_{\rm sta}$, represent any pure qubit state as
\begin{equation}
    \rho(\theta,\phi)
    =\frac{\id+\cos\theta\,X
       +\sin\theta\cos\phi\,Y
       +\sin\theta\sin\phi\,Z}{2},
\end{equation}
where $0\le\theta\le\pi$ and $0\le\phi\le2\pi$.
In particular, $\rho(0,\phi)=\rho_{\rm sta}$.
For $\rho=\rho(\theta,\phi)$, define
\begin{equation}
    E(\theta,\phi,h)
    =V_h(\rho)\rho V_h(\rho)^\dagger-\Psi_h(\rho).
\end{equation}
The first-order agreement gives $E(\theta,\phi,0)=0$ and
$\left.\partial_hE(\theta,\phi,h)\right|_{h=0}=0$.
Taylor's formula with integral remainder yields
\begin{equation}
    E(\theta,\phi,h)=h^2\mathcal R(\theta,\phi,h),
    \label{eq:single-qubit-GP-remainder}
\end{equation}
where
\begin{equation}
    \mathcal R(\theta,\phi,h)
    =\int_0^1(1-u)
      \left.\partial_s^2E(\theta,\phi,s)\right|_{s=uh}\,du.
\end{equation}
This formula also defines $\mathcal R$ at $h=0$.

Fix a finite constant $\Delta_0>0$.
The Bloch-vector field is polynomial and preserves
$\norm{r}_2$, so its solution map depends smoothly on the
initial state and time throughout any finite time interval.
The implemented map is also smooth, being a finite product
of matrix exponentials and reflections.
It follows that $\partial_\theta\mathcal R$ is continuous
on the closed, bounded parameter range, and therefore
\begin{equation}
    L\coloneqq
    \max_{\substack{
        0\le\theta\le\pi,\ 0\le\phi\le2\pi\\
        0\le h\le\Delta_0
    }}
    \norm{\partial_\theta\mathcal R(\theta,\phi,h)}_1
    <\infty.
\end{equation}
Since both maps fix $\rho_{\rm sta}$,
$\mathcal R(0,\phi,h)=0$, including at $h=0$ by continuity.
The fundamental theorem of calculus then gives
\begin{equation}
    \norm{\mathcal R(\theta,\phi,h)}_1
    \le\int_0^\theta
       \norm{\partial_u\mathcal R(u,\phi,h)}_1\,du
    \le L\theta.
\end{equation}
The Bloch-vector distance satisfies
\begin{equation}
    \norm{\rho(\theta,\phi)-\rho_{\rm sta}}_1
    =2\sin\frac\theta2
    \ge\frac{2\theta}{\pi},
\end{equation}
where the last inequality uses
$\sin u\ge2u/\pi$ for $0\le u\le\pi/2$.
Combining these estimates with
Eq.~\eqref{eq:single-qubit-GP-remainder} yields
\begin{equation}
    \norm{V_h(\rho)\rho V_h(\rho)^\dagger-\Psi_h(\rho)}_1
    \le\frac{\pi L}{2}h^2\norm{\rho-\rho_{\rm sta}}_1.
\end{equation}
Setting $h=g\tau$ proves
Eq.~\eqref{eq:single-qubit-GP-relative-difference}, for example
with $C=1+\pi L/2$.
For fixed $\Delta_0$, the maps defining $L$ depend only on
the constant $\Delta_0$ and the state parameters.
Thus, $C$ is independent of $g,T$, and $\xi$.
\qed

\section{Choice of the Number of Time Steps in Sec.~\ref{subsec:single-qubit-GP-error-upper}}
\label{appendix:single-qubit-GP-step-count}
Fix $0<\epsilon\le1$, and let $C,\Delta_0>0$ be the constants
in Lemma~\ref{lem:single-qubit-GP-relative-difference}.
Let $S$ be defined by
Eq.~\eqref{eq:single-qubit-GP-simulation-duration}.
With $\tau=S/M$, Eq.~\eqref{eq:single-qubit-GP-global-error}
gives
\begin{equation}
    \max_{0\le k\le M}e_k^\pm
    \le\frac{32C}{\epsilon}
       \frac{(gS)^2}{M}
       \exp\left(C\frac{(gS)^2}{M}\right),
\end{equation}
provided that $g \tau = gS/M\le\Delta_0$.

To satisfy this condition and make the numerical error
at most $\epsilon/2$, choose
\begin{equation}
    M=
    \left\lceil
    \max\left\{
        1,\,
        \frac{gS}{\Delta_0},\,
        \frac{64C\exp(1)(gS)^2}{\epsilon^2}
    \right\}
    \right\rceil.
\end{equation}
This choice ensures $M\ge1$ and $gS/M\le\Delta_0$.
Moreover,
\begin{equation}
    C\frac{(gS)^2}{M}
    \le\frac{\epsilon^2}{64\exp(1)}
    \le1.
\end{equation}
The exponential factor in the error bound is therefore
at most $\exp(1)$, yielding
\begin{equation}
    \max_{0\le k\le M}e_k^\pm
    \le\frac{32C\exp(1)}{\epsilon}\frac{(gS)^2}{M}
    \le\frac{\epsilon}{2}.
\end{equation}
Equation~\eqref{eq:single-qubit-GP-duration-error} then gives
the accuracy guarantee for all $t \in [S,T]$.

Finally, rounding up increases $M$ by at most one, so
\begin{align}
    M
    &\le2+\frac{gS}{\Delta_0}
      +\frac{64C\exp(1)(gS)^2}{\epsilon^2}\\
    &=O\left(1+gS+\frac{(gS)^2}{\epsilon^2}\right).
\end{align}
Since $S\le T$, this also gives the uniform estimate
$M=O(1+gT+(gT)^2/\epsilon^2)$.
The implied constants depend only on $C$ and $\Delta_0$,
which are independent of $g,T,\xi$, and $\epsilon$.
The same choice of $M$ works for both candidate initial states
and every unitary oracle satisfying either preparation
condition for the fixed known parameters.

\section{Final Angle Adjustment and Query Bound in Sec.~\ref{subsec:single-qubit-GP-achieving-optimal}}
\label{appendix:single-qubit-GP-final-adjustment}
We construct the final angle adjustment used in
Sec.~\ref{subsec:single-qubit-GP-achieving-optimal}
and derive the initial state preparation oracle query bound
in Eq.~\eqref{eq:single-qubit-GP-tighter-upper-bound}.
Recall that $\rho_{\rm sta}=\proj{+}$ and
$\widehat\rho_\pm(u)=\proj{\chi_\pm(u)}$, where
$\ket{\chi_\pm(u)}=\cos u\ket{+}\pm\sin u\ket{-}$.

Let $u$ and $v$ be known angles satisfying
$0<u\le\pi/4$, $0<v\le\pi/4$, and $v\le3u$.
Choose
\begin{equation}
    \theta
    =2\arcsin\sqrt{
       \frac{1+\sin v/\sin u}{4\cos^2u}}.
    \label{eq:single-qubit-GP-angle-change-phase}
\end{equation}
The hypotheses imply
$v\le3u\le3\pi/4\le\pi-v$, and hence
$\sin v\le\sin3u=\sin u(4\cos^2u-1)$.
Therefore,
\begin{equation}
    0<
    \frac{1+\sin v/\sin u}{4\cos^2u}
    \le1,
\end{equation}
so $\theta$ is real and well defined.

To compute the action of the two projector exponentials,
we use
\begin{equation}
    e^{i\theta\widehat\rho_\pm(u)}
    =\id+(e^{i\theta}-1)\widehat\rho_\pm(u)
\end{equation}
and
\begin{equation}
    e^{i\theta\rho_{\rm sta}}\ket{\chi_\pm(u)}
    =\cos u\,e^{i\theta}\ket{+}\pm\sin u\ket{-}.
\end{equation}
Writing $z=\cos^2u\,e^{i\theta}+\sin^2u$, we have
\begin{equation}
    \bra{\chi_\pm(u)}
    e^{i\theta\rho_{\rm sta}}
    \ket{\chi_\pm(u)}
    =z.
\end{equation}
Consequently,
\begin{equation}
    e^{i\theta\widehat\rho_\pm(u)}
    e^{i\theta\rho_{\rm sta}}\ket{\chi_\pm(u)}
    =A\ket{+}\pm B\ket{-},
    \label{eq:single-qubit-GP-angle-change-state}
\end{equation}
where
\begin{align}
    A&=\cos u\left[e^{i\theta}+(e^{i\theta}-1)z\right],\\
    B&=\sin u\left[1+(e^{i\theta}-1)z\right]\\
     &=\sin u\,e^{i\theta}\left[1+2\cos^2u(\cos\theta-1)\right]\\
     &=\sin u\,e^{i\theta}
       \left[1-4\cos^2u\sin^2(\theta/2)\right]\\
     &=-e^{i\theta}\sin v.
\end{align}
The last equality follows from
Eq.~\eqref{eq:single-qubit-GP-angle-change-phase}.

Since the projector exponentials are unitary,
$|A|^2+|B|^2=1$.
Thus, $|A|=\cos v>0$.
Define
\begin{equation}
    D_{u,v}
    =\frac{A^*}{\cos v}\rho_{\rm sta}
      -e^{-i\theta}(\id-\rho_{\rm sta}).
\end{equation}
Both coefficients in this decomposition have modulus one,
so $D_{u,v}$ is unitary.
Moreover,
\begin{align}
    D_{u,v}\left(A\ket{+}\pm B\ket{-}\right)
    &=\cos v\ket{+}\pm\sin v\ket{-}\\
    &=\ket{\chi_\pm(v)}.
\end{align}
Hence $D_{u,v}$ removes both coefficient phases.

If $\widehat U\ket{0}=\ket{\chi_\pm(u)}$ up to a global
phase, then
\begin{equation}
    \widehat Ue^{i\theta\proj{0}}\widehat U^\dagger
    =e^{i\theta\widehat\rho_\pm(u)}.
\end{equation}
Combining this identity with
Eq.~\eqref{eq:single-qubit-GP-angle-change-state} and the
phase correction above proves that
$\mathcal P_{u,v}[\widehat U]$ in
Eq.~\eqref{eq:single-qubit-GP-angle-change-circuit}
prepares $\ket{\chi_\pm(v)}$ up to a global phase.
Its implementation uses three calls in total to
$\widehat U$ and $\widehat U^\dagger$.
The phase $\theta$ and the unitary $D_{u,v}$ depend only
on $u$ and $v$, so the same non-query gates work for both
candidates.
The gates $e^{i\theta\rho_{\rm sta}}$ and $D_{u,v}$ are
diagonal in the basis $\{\ket{+},\ket{-}\}$ and therefore,
up to global phases, are rotations generated by $X$.

We next derive the query bound for the final circuit.
The definition of $m$ in
Eq.~\eqref{eq:single-qubit-GP-tripling-count} ensures
$0<\alpha_m\le\pi/4$ and
$0<\alpha(T)\le\min\{3\alpha_m,\pi/4\}$.
Thus, the adjustment applies with $u=\alpha_m$ and
$v=\alpha(T)$, including when $\alpha(T)<\alpha_0$.

The recursively constructed circuit $\widehat U_m$ uses
$3^m$ queries to $U_0$ and $U_0^\dagger$.
The final adjustment uses three calls to $\widehat U_m$
and $\widehat U_m^\dagger$, while the final rotation
$e^{i\beta(T)X/2}$ requires no such queries.
The total query count therefore satisfies
\begin{equation}
    Q_{\mathcal A}(g,T,\xi,\epsilon)
    \le3^{m+1}
    \le3\max\left\{1,\frac{\alpha(T)}{\alpha_0}\right\}.
    \label{eq:single-qubit-GP-angle-ratio-queries}
\end{equation}

For $0<u\le\pi/4$, we have
$\sin u\le u\le[\pi/(2\sqrt2)]\sin u$.
Using the definition of $\alpha(t)$ gives
\begin{equation}
    \frac{\alpha(T)}{\alpha_0}
    \le\frac{\pi}{2\sqrt2}
            \frac{\sin\alpha(T)}{\sin\alpha_0}
    =\frac{\pi}{2\sqrt2}
            \frac{\cosh a_0}{\cosh a(T)}.
\end{equation}
Since $a_0=a(T)+gT/2$ and
$\cosh(x+y)\le e^y\cosh x$ for all real $x$ and $y\ge0$,
we obtain
\begin{equation}
    \frac{\alpha(T)}{\alpha_0}
    \le\frac{\pi}{2\sqrt2}e^{gT/2}.
\end{equation}
The right-hand side is at least one, so it also bounds
$\max\{1,\alpha(T)/\alpha_0\}$.
Substituting this estimate into
Eq.~\eqref{eq:single-qubit-GP-angle-ratio-queries}
and using $e^{gT/2}\le2\cosh(gT/2)$ gives a bound
uniform over $0<\xi<1$ and all allowed preparation oracles.
Taking the supremum over $0<\xi<1$ proves
Eq.~\eqref{eq:single-qubit-GP-tighter-upper-bound}.

\end{document}